\newcommand{\Comment}[1]{\relax}
\newcommand{\Hide}[1]{}
\newif
\newif
\newtheorem{theorem}{Theorem}
\newtheorem{lemma}[theorem]{Lemma}
\newtheorem{proposition}[theorem]{Proposition}
\newtheorem{fact}[theorem]{Fact}
\newtheorem{restxxx}[theorem]{Restriction}
\newtheorem{agreexxx}[theorem]{Agreement}
\newenvironment{agreement}{\begin{agreexxx}\rm}{\hfill\QED\end{agreexxx}}
\newtheorem{termxxx}[theorem]{Terminology}
\newenvironment{terminology}{\begin{termxxx}\rm}{\hfill\QED\end{termxxx}}
\newtheorem{notxxx}[theorem]{Notation}
\newtheorem{assumxxx}[theorem]{Assumption}
\newenvironment{assumption}{\begin{assumxxx}\rm}{\hfill\QED\end{assumxxx}}
\newtheorem{convenxxx}[theorem]{Convention}
\newtheorem{exaxxx}[theorem]{Example}
\newenvironment{example}{\begin{exaxxx}\rm}{\hfill\QED\end{exaxxx}}
\newtheorem{exexxx}[theorem]{Exercise}
\newtheorem{remxxx}[theorem]{Remark}
\newenvironment{remark}{\begin{remxxx}\rm}{\hfill\QED\end{remxxx}}
\newtheorem{openxxx}[theorem]{Open Problem}
\newenvironment{problem}{\begin{openxxx}\rm}{\end{openxxx}}%
\newtheorem{conjxxx}[theorem]{Conjecture}
\newtheorem{defxxx}[theorem]{Definition}
\newenvironment{definition}[1]{\begin{defxxx}[\emph{#1}]\rm}%
{\hfill\QED\end{defxxx}}
\newtheorem{procxxx}[theorem]{Procedure}
{\hfill\QED\end{procxxx}}
\newtheorem{Prxxx}[theorem]{Proof}
\newenvironment{Proof}[1]{\begin{Prxxx}[\emph{#1}]\rm}%
{\end{Prxxx}} % {\hfill\QED\end{Prxxx}}
\newenvironment{custommargins}[2]%
  {\addtolength{\leftskip}{#1}\addtolength{\rightskip}{#2}}{\par}
\newcommand{\Angles}[1]{\langle #1 \rangle}
\newcommand{\Set}[1]{\{ #1 \}}
\newcommand{\SET}[1]{\bigl\{ #1 \bigr\}}
\newcommand{\constOf}[1]{\mathsf{Constraints}(#1)}
\newcommand{\tmin}[2]{T^{\text{\rm min}}_{#1}(#2)}
\newcommand{\tmax}[2]{T^{\text{\rm max}}_{#1}(#2)}
\newcommand{\aaa}{\mathbf{A}}
\newcommand{\bbb}{\mathbf{B}}
\newcommand{\mm}{\mathbf{M}}
\newcommand{\nn}{\mathbf{N}}
\newcommand{\M}{{\cal M}}
\newcommand{\N}{{\cal N}}
\newcommand{\PP}{{\cal P}}
\newcommand{\bigO}[1]{{\cal O}\bigl(#1\bigr)} % {\mathscr{O}\bigl(#1\bigr)}
\newcommand{\bigOO}[1]{{\cal O}(#1)} % {\mathscr{O}(#1)}
\newcommand{\ConnP}[2]{(#1\,\bm{\|}\,#2)}
\newcommand{\ConnPP}[2]{#1\,\bm{\|}\,#2}
\newcommand{\MergeSym}{\mathsf{Merge}} % {\textsf{Merge}}
\newcommand{\bind}[2]{\mathsf{Bind}\bigl(#1,#2\bigr)} 
\newcommand{\BindT}{\mathsf{BindPT}} % {\textsf{BindPT}}
\newcommand{\Bind}{\mathsf{Bind}} % {\textsf{Bind}}
\newcommand{\BindTone}{\mathsf{BindPT}_1} % {\textsf{BindPT}_1}
\newcommand{\Break}[1]{\mathsf{BreakUp}(#1)} % {\textsf{BreakUp}(#1)} 
\newcommand{\Let}[3]%
    {\textbf{\textsf{let}}\ {#1}\,{#2}\ \textbf{\textsf{in}}\;{#3}\,}
\newcommand{\Try}[3]%
    {\textbf{\textsf{try}}\ {#1} {#2}\ \textbf{\textsf{in}}\;{#3}\;}
\newcommand{\Mix}[3]%
    {\textbf{\textsf{mix}}\ {#1} {#2}\ \textbf{\textsf{in}}\;{#3}\;}
\newcommand{\LET}[3]%
    {\textbf{\textsf{let}}^{\bm{*}}\ {#1} {#2}\ \textbf{\textsf{in}}\;{#3}\;}
\newcommand{\Letrec}[3]%
    {\textbf{\textsf{letrec}}\ {#1} {#2}\ \textbf{\textsf{in}}\;{#3}\;}
\newcommand{\subTsym}{<:}
\newcommand{\subT}[2]{#1\subTsym #2} % {#1\,\subTsym\,#2}
\newcommand{\SubTsym}{\ll:}
\newcommand{\SubT}[2]{#1\SubTsym #2} % {#1\,\SubTsym\,#2}
\newcommand{\NotSubTsym}{\not{\ll}:}
\newcommand{\NotSubT}[2]{#1\NotSubTsym #2} % {#1\,\subTsym\,#2}
\newcommand{\WPT}{\textsf{WholePT}}
\newcommand{\CompPT}{\mathsf{CompPT}} % {\textsf{CompPT}}
\newcommand{\CompMF}{\textsf{CompMaxFlow}}
\newcommand{\OnePT}{\mathsf{OneNodePT}} % {\textsf{OneNodePT}}
\newcommand{\Schedule}{\mathsf{BindSchedule}}
\newcommand{\degreeSym}{\mathit{deg}} % {{\text{\em degree}}} 
\newcommand{\degree}[1]{{\degreeSym}(#1)}
\newcommand{\hook}[2]{\mathsf{hook}_{#1}(#2)} % {\textsf{deg}(#1)}
\newcommand{\bindingS}[1]{\mathsf{binding\text{-}strength}(#1)}
\newcommand{\commonA}[1]{\mathsf{joint\text{-}arcs}(#1)} 
\newcommand{\indegree}[1]{\mathsf{in\text{-}degree}(#1)} 
\newcommand{\outdegree}[1]{\mathsf{out\text{-}degree}(#1)}
\newcommand{\ie}{\textit{i.e.}}
\newcommand{\eg}{\textit{e.g.}}
\newcommand{\QED}{{\Large $\square$}} 
\newcommand{\posnats}{{\mathbb{N}}_{+}}
\newcommand{\nreals}{\mathbb{R}_{+}}
\newcommand{\reals}{\mathbb{R}}
\newcommand{\intervals}[1]{{\cal I}(#1)}
\newcommand{\poly}[1]{\mathsf{Poly}(#1)} 
\newcommand{\Poly}[1]{\mathsf{Poly}\bigl(#1\bigr)} % {\mathsf{Poly}^*(#1)} 
\newcommand{\IndexSym}{\mathit{index}} % {\mathsf{index}(#1)} 
\newcommand{\Index}[1]{{\IndexSym}(#1)} % {\mathsf{index}(#1)} 
\newcommand{\size}[1]{|\,#1\,|}
\newcommand{\BOT}[1]{\mathsf{Bot}[#1]} % {{\text{$\Bfmath{\bot}[#1]$}}}  
\newcommand{\TOP}[1]{\mathsf{Top}[#1]} % {{\text{$\Bfmath{\top}[#1]$}}}  
\newcommand{\power}[1]{\mathscr{P}(#1)}
\newcommand{\rest}[2]{[#1]_{#2}} % {\bfmath{[}#1\bfmath{]}_{#2}}
\newcommand{\head}[1]{\text{\em head}(#1)} % {\mathsf{head}(#1)}
\newcommand{\tail}[1]{\text{\em tail}(#1)} % {\mathsf{tail}(#1)}
\newcommand{\ParAdd}[2]{#1 \bm{\oplus} #2} %  {#1\;{\|}\;#2}
\newcommand{\TotAdd}[2]{#1 \bm{\oplus_{\text{\rm t}}} #2} 
\newcommand{\dime}[1]{\mathsf{dim}(#1)} % 
\newcommand{\exDim}[1]{\mathsf{exDim}(#1)} % 
\newcommand{\lc}{\underline{\bm{c}}} % {{\textsc{lc}}}
\newcommand{\uc}{\overline{\bm{c}}} % {{\textsc{uc}}}
\newcommand{\OutF}[1]{\mathsf{OuterFace}(#1)}
\newcommand{\abs}[1]{\lvert #1\rvert}
\newcommand{\merg}[2]{#1 {\otimes} #2} % {#1 {\bowtie} #2}
\newcommand{\AAA}{\mathscr{A}}
\newcommand{\CC}{\mathscr{C}}
\newcommand{\EE}{\mathscr{E}}
\newcommand{\Valid}[1]{\mathsf{Valid}(#1)}
\newcommand{\spacing}[2]{
  \renewcommand{\baselinestretch}{#2}
  \small\normalsize #1
  \setlength{\parskip}{0.1\baselineskip}
  \settowidth{\parindent}{xxxx}
  \setlength{\parindent}{#2\parindent}
  \setlength{\leftmargini}{\parindent}
  \setlength{\leftmarginii}{\parindent}
  \setlength{\leftmarginiii}{\parindent}
  \setlength{\footnotesep}{#2\footnotesep}
}
\begin{document}

\spacing{\normalsize}{0.98}
\setcounter{page}{1}     % \thispagestyle{empty} has to be inserted right
                         % after \maketitle to suppress the page numbering
                         % on the title page
\setcounter{tocdepth}{1} % to suppress subsections and subsubsections
                         % in the table of contents.
\ifTR
  \pagenumbering{roman} % for title page and table-of-contents page
\else
\fi

\title{A Compositional Approach to Network Algorithms} 
    % for STOC submission
% \title{A Type-Based Compositional Approach\\ to Network-Flow Problems} 
    % for NSF proposal
% \title{A Different Approach to the Design of Network Algorithms:\\
%       Maximum-Flows and Minimum-Flows in Planar Networks} %
%   % for BU CS TR
\author{Assaf Kfoury%
           \thanks{Partially supported by NSF awards CCF-0820138
       and CNS-1135722. This is an updated version of a report that
       was largely completed and circulated since 2014~\cite{kfourySTOC2014}.} \\
        Boston University \\
        \ifTR Boston, Massachusetts \\ \else \fi
        \href{mailto:kfoury@bu.edu}{kfoury{@}bu.edu}
% \and
%        Saber Mirzaei%
%           \footnotemark[1]\\
%        Boston University % \\
        % smirzaei{@}bu.edu
}

\ifTR
   \date{\today}
\else
   \date{} % 
\fi
\maketitle
  \ifTR
     \thispagestyle{empty} % it has to be inserted right after \maketitle
                           % in order to suppress the page numbering
  \else
  \fi
\vspace{-.4in}
  \begin{abstract}
   %% abstract.tex

\noindent
We present elements of a typing theory for flow networks, where
``types'', ``typings'', and ``type inference'' are formulated in terms
of familiar notions from polyhedral analysis and convex
optimization. Based on this typing theory, we develop an alternative
approach to the design and analysis of network algorithms,
which we illustrate by applying it to the max-flow problem in
multiple-source, multiple-sink, capacited directed planar graphs.

  \end{abstract}
\ifTR
    % \newpage
    \tableofcontents
    \newpage
    \pagenumbering{arabic}
\else
    \vspace{-.2in}
\fi

\section{Introduction}
\label{sect:intro}
  % introduction.tex

% a history on max flow in planar graphs is given in
% 2013.borradaile-harutyunyan.max-s-t-flow-in-directed-planar-graphs.pdf
% now under Ubuntu One/UNI-Research/Max-Flow-Problem/

\vspace{-.1in}
\paragraph{Background and motivation.}

The work reported herein stems from a group effort to develop an
integrated enviroment for system modeling and system 
analysis that are simultaneously: \emph{modular} (``distributed in
space''), \emph{incremental} (``distributed in time''),
and \emph{order-oblivious} (``components can be analyzed and
assembled in \emph{any} order''). These are the three defining properties 
of what we call a \emph{compositional} approach to system development.% 
\Hide{
    \footnote{ 
    This is one of the projects currently in progress under the
    umbrella of the 
    % \href{https://sites.google.com/site/ibenchbu/}
    {\it iBench Initiative} 
    at Boston University,
    co-directed by Azer Bestavros and Assaf Kfoury.  The
    website % \texttt{https://sites.google.com/site/ibenchbu/} gives
    {\footnotesize \url{https://sites.google.com/site/ibenchbu/}} 
    gives further details on this and other research activities.  }
}    
Several papers
explain how this environment is defined and used, as well as its
current state of development and implementation
\cite{BestavrosKfouryLapetsOcean:crts09,%
BestavrosKfouryLapetsOcean:hscc10,%
BestKfoury:dsl11,Kfoury:sblp11,kfouryDSL:2011,%
SouleBestKfouryLapets:eoolt11}.
An extra fortuitous benefit of our work has been a
fresh perspective on the design and analysis of network algorithms.

For this approach to succeed at all, we need to appropriately
encapsulate a system's components as they are modeled and become
available for analysis:%
       \footnote{In this Introduction, a ``component'' is not taken in
       the graph-theoretic sense of ``maximal connected subgraph''.
       It here means a ``subnetwork'' (or, if there is
       an underlying graph, a ``subgraph'') with input and output
       ports to connect it with other ``subnetworks''. } 
We hide their internal workings, but also
infer enough information to safely connect them at their boundaries
and to later guarantee the safe operation of the system as a whole.
The inferred information has to be formally encoded, and somehow
composable at the interfaces, to enforce safety invariants throughout
the process of assembling components and later during system
operation. This is precisely the traditional role assigned
to \emph{types} and \emph{typings} in a different context -- namely, for
a strongly-typed programming language, their purpose is to enforce
safety invariants across program modules and abstractions.  Naturally,
our types and typings will be formalized differently here, depending
on how we specify systems and on the choice of invariant properties.%
     \footnote{Example of a \emph{safety invariant} for programs: ``A
     boolean value is never divided by $5$.'' Example of
     a \emph{safety invariant} for networks: ``Conservation of flow is
     never violated at a network node.''}

To illustrate our methodology, we consider the classical 
\emph{max-flow problem} in capacited directed graphs.%
    \footnote{\label{foot:survey}%
    A comprehensive survey of algorithms 
    for the max-flow problem is nearly impossible, as it is one of the most
    studied optimization problems over several decades. A broad
    classification is still useful, depending on concepts, proof
    techniques and/or graph restrictions. There is the family of
    algorithms based on the concept of \emph{augmenting path},
    starting with the Ford-Fulkerson algorithm in the
    1950's~\cite{FordFulkerson1956}.  A refinement of the
    augmenting-path method is the \emph{blocking flow}
    method~\cite{Dinic1980}.  A later family of max-flow algorithms
    uses the \emph{preflow push} (or \emph{push relabel})
    method~\cite{Goldberg1985,GoldbergTarjan1986}. A survey of these
    families of max-flow algorithms to the end of the 1990's is
    in~\cite{Goldberg1998,Asano1999}.  Later papers combine
    variants of augmenting-path algorithms and related blocking-flow
    algorithms, variants of preflow-push algorithms, and algorithms
    combining different parts of all of these methodologies 
    \cite{MazzoniPallottinoScutella1991,Goldberg-Rao1998,Goldberg2009,Orlin2013}.
    Another late entry in this plethora of approaches uses
    the notion of \emph{pseudoflow} \cite{Hochbaum2008,Chandran-Hochbaum2009}.
    The most recent research includes max-flow algorithms restricted to
    planar graphs~\cite{borradaileKlein2009,10.1109/FOCS.2012.66,
    borradaileKlein2011,EisenstatK13}, approximate max-flow algorithms 
    restricted to undirected graphs~\cite{DBLP:journals/corr/abs-1304-2077,
    DBLP:journals/corr/abs-1304-2338}, and approximate and exact
    max-flow algorithms restricted to uncapacited undirected graphs
    using concepts of \emph{electrical flow}~\cite{christiano:2011,LeeRS13}.
    } 
Since it comes at no extra cost for us, we simultaneously consider
the \emph{min-flow problem} as well as the presence of \emph{multiple
sources} and \emph{multiple sinks}. The \emph{min-flow problem} is
meaningful only if arcs are assigned lower-bound capacities (or
thresholds) which feasible flows are not allowed to go
under. Every arc in our networks is therefore assigned two
capacities, one lower bound and one upper bound.

For favorable comparison with other approaches, as far as run-time
complexities are concerned, we limit our attention to \emph{planar
networks}, a sufficiently large class with many practical
applications. It is also a class that has been studied extensively,
often with further restrictions on the topology (\eg, undirected
graphs \emph{vs.} directed graphs) and/or the capacities (\eg,
integral \emph{vs.} rational). None of the latter restrictions are
necessary for our approach to work. However, as of now, if we lift the
planarity restriction, our run-time complexities exceed those of other
approaches.

We stress that our methodology has applicability beyond 
the max-flow problem: It can be applied to tackle other
network-related algorithmic problems, with different or additional
measures of what qualify as desirable solutions, even if the
associated run-time complexities are not linear or nearly linear.

\vspace{-.1in}
\paragraph{Overview of our methodology.}

The central concept of our approach is what we call a \emph{network
typing}. To make this work, a network (or network component) $\N$ is
allowed to have ``dangling'' arcs; in effect, $\N$ is allowed to have
multiple sources or \emph{input arcs} (\ie, arcs whose tails are
not incident to any node) and multiple sinks or \emph{output arcs}
(\ie, arcs whose heads are not incident to any node). Given a
network $\N$, now with multiple input arcs and multiple output arcs, a
typing for $\N$ is an algebraic characterization of all the
feasible flows in $\N$ -- including, in particular, all \emph{maximum}
feasible flows and all \emph{minimum} feasible flows.%

More precisely, a \emph{sound} typing $T$ for network $\N$ specifies
constraints on the latter's inputs and outputs, such that every
assignment $f$ of values to its input/output arcs satisfying these
constraints can be extended to a feasible flow $f'$ in $\N$. Moreover,
if the input/output constraints specified by $T$ are satisfied
by \emph{every} input/output assignment $f$ extendable to a feasible
flow $f'$, then we say that $T$ is \emph{complete} for $\N$. 
In analogy with a similar concept in strongly-typed programming
languages, we call \emph{principal} a typing which is both sound and
complete -- and satisfying a few additional syntactic requirements
for easier inference of types and typings.

\Hide{
Every sound typing is less general than a principal typing for the
same network -- technically, the latter is a ``subtyping'' of every
sound typing for the same network.
}

\Hide{
\newcommand{\mI}{m_{\text{\rm in}}}
\newcommand{\mO}{m_{\text{\rm out}}}
\newcommand{\mIO}{m_{\text{\rm \#}}}
}

In our formulation, a typing $T$ for network $\N$ defines a compact
convex polyhedral set (or \emph{polytope}), which we denote
$\poly{T}$, in the vector space ${\reals}^{p+q}$, where $\reals$ is
the set of reals, and $p$ and $q$ are the numbers of input arcs and output
arcs in $\N$. An input/output assignment $f$ satisfies
$T$ if $f$, viewed as a point in the space ${\reals}^{p+q}$, is inside
$\poly{T}$. Hence, $T$ is a \emph{sound} typing (resp.
\emph{sound+complete} or \emph{principal} typing) 
if $\poly{T}$ is \emph{contained in} (resp. \emph{equal to})
the set of all input/output assignments extendable to
feasible flows in $\N$.

Let $T_1$ and $T_2$ be principal typings for networks $\N_1$ and
$\N_2$.  If we connect $\N_1$ and $\N_2$ by linking some of their
output arcs to some of their input arcs, we obtain a new network which
we denote (only in this introduction) $\N_1\oplus\N_2$.  One of our 
results shows that a principal typing of $\N_1\oplus\N_2$ can be obtained
by direct (and relatively easy) algebraic operations on $T_1$ and $T_2$, 
without any need to re-examine the internal details of the two components
$\N_1$ and $\N_2$. Put differently, an analysis (to produce a principal typing)
for the assembled network $\N_1\oplus\N_2$ can be directly and easily
obtained from the analysis of $\N_1$ and the analysis of $\N_2$.

What we have just described is the counterpart of what
programming-language theorists call a \emph{modular}
(or \emph{syntax-directed}) \emph{analysis} (or \emph{type inference}), 
which infers a type for the whole program from the types of its
subprograms, and the latter from the types of their respective
subprograms, and so on recursively, down to the types of the smallest
program fragments.%
   \footnote{We will make a distinction between a ``type'' and a ``typing'',
    similar to a distinction made by programming-language theorists.
%    If a program execution is \emph{safe}, then it never divides a 
%    boolean value by $5$. If a flow is \emph{safe}, then it is feasible
    }

Because our network typings denote polytopes, we can in fact make our
approach not only modular but also \emph{compositional}, now
mathematically stated as follows: If $T_1$ and $T_2$ are 
sound and complete typings for networks $\N_1$ and $\N_2$, then the calculation
of $T_1$ and the calculation of $T_2$ can be done 
independently of each other; that is, the analysis (to produce $T_1$) for
$\N_1$ and the analysis (to produce $T_2$) for $\N_2$ can be carried
out separately without prior knowledge that the two will be
subsequently assembled together.% 
    \footnote{In the study of programming languages, there are type
     systems that support modular but not compositional analysis.
     What is compositional is modular, but not the other way around. A
     case in point is the so-called Hindley-Milner type system for
     ML-like functional languages, where the order matters in which
     types are inferred: Hindley-Milner type-inference is \emph{not}
     order-oblivious.}

Given a network $\N$ partitioned into finitely many components
$\N_1,\N_2,\N_3,\ldots$ with respective principal typings
$T_1,T_2,T_3,\ldots$, we can then assemble these typings in \emph{any}
order to obtain a principal typing $T$ for the whole of
$\N$. Efficiency in computing the final principal typing $T$ depends
on a judicious partitioning of $\N$, which is to decrease as much as
possible the number of arcs running between separate components, and
again recursively when assembling larger components from smaller
components. At the end of this procedure, every input/output assignment
$f$ extendable to a maximum feasible flow $f'$ in $\N$, and every
input/output assignment $g$ extendable to a minimum feasible flow $g'$,
can be directly read off the final typing $T$ -- but observe:
not $f'$ and $g'$ themselves.

\vspace{-.1in}
\paragraph{Whole-network \textit{versus} compositional.}

We qualified our approach as being \emph{compositional} because a
network is not required to be fully assembled, nor its constituent
components to be all available, in order to start an analysis of those
already in place and connected. What's more, an already-connected
component $A$ can be removed and swapped with another one $B$, as long
as $A$ and $B$ have the same typing, \ie, as far as the rest of the
network is concerned, the invariants encoded by typings are oblivious
to the swapping of $A$ and $B$.  In the conventional categories
of algorithm design and analysis, our compositional
approach can be viewed as a form of \emph{divide-and-conquer} that allows
the re-design of parts without forcing a re-analysis of the same
parts.

These aspects of compositionality are important when modeling very
large networks which may contain broken or missing components, or
failure-prone and obsolete components that need to be replaced.  But
if these aspects do not matter, then there is an immediate drawback to
our compositional approach, as currently devised and used: It returns
the value $\abs{f}$ of a maximum flow $f$, but not $f$ itself.

By contrast, other approaches (any of those cited in
footnote~\ref{foot:survey}) % on page~\pageref{foot:survey}) 
construct a specific maximum flow $f$, whose value $\abs{f}$ can be
immediately read off from the total leaving the source(s) or,
equivalently, the total entering the sink(s). We may qualify the other
approaches as being \emph{whole-network}, because they presume all the
pieces (nodes, arcs, and their capacities) of a network are in place
before an analysis is started.

There is more than one way to bypass the forementioned drawback of our
compositional approach, none entirely satisfactory (as of now). A
natural but costly option is to augment the information that typings
encode: A typing $T$ is made to also encode information about paths
that carry a maximum flow in the component for which $T$ is a
principal typing, but the incurred cost is prohibitive, generally
exponential in the external dimension $p+q$ of the component 
(the number of its input/output ports).% 
   \footnote{It is not a trivial matter to augment a typing $T$ for a
   network component $\N$ so that it also encodes information about
   max-flow paths in $\N$. It is out of the question to retain
   information about all max-flow paths. What needs to be
   done is to encode, for every ``extreme'' input/output assignment
   $f$ extendable to a max-flow, just one path or path-combination
   carrying a max-flow extending $f$. (An input/output assignment $f$
   is \emph{extreme} if, as a point in the space
   ${\reals}^{p+q}$, it is a vertex of $\poly{T}$.)  The cost of this
   extra encoding grows exponentially with $p+q$. From the perspective
   of compositionality, this exponential growth adds to another
   disadvantage: The \emph{more} information we make the typing $T$ to
   encode about $\N$'s internals beyond safety invariants --
   unless the choice of internal paths in $\N$ to carry max-flows is
   taken as another safety condition -- the \emph{fewer} the
   components of which $T$ is a typing that we can substitute for
   $\N$, thus narrowing the range of experimentation and possible
   substitutions between components during modeling and analysis.  }

A more promising option is a two-phase process, yet to be
investigated.  In the first phase, we use our compositional approach
to return the value of a max-flow.  In the second phase, we use this
max-flow value to compute an actual maximum flow in the network. It
remains to be seen whether this is doable efficiently, or within
the resource bounds of our algorithms below. We delay this question
to future research. 
% (But more on this later, in Remark below.)

\Hide{
A solution for this problem is an
algorithm which, given an arbitrary network $\N$ with one source node
$s$ and one sink node $t$, computes a maximum feasible flow $f$ from
$s$ to $t$ in $\N$. That $f$ is a \emph{feasible flow} means $f$ is an
assignment of non-negative values to the arcs of $\N$
satisfying \emph{capacity constraints} at every arc and \emph{flow
conservation} at every node other than $s$ and $t$. That $f$
is \emph{maximum} means the net outflow at node $s$ (equivalently, the
net inflow at node $t$) is maximized.  A standard assessment of a
max-flow algorithm measures its run-time complexity as a function of
the size of $\N$. 
}

\vspace{-.1in}
\paragraph{Highlights and wider connections.} 

Our main contribution in this report is a different framework for the
design and analysis of network algorithms, which we here illustrate by
presenting a new algorithm for the classical max-flow problem.  When
restricted to the class of planar networks with bounded
``outerplanarity'' and bounded ``external dimension'', 
our algorithm runs in linear time.

The \emph{external dimension} (or \emph{interface dimension})
of a network is the number of its input/output ports,
\ie, the number $p$ of its sources $+$ the number $q$ of its sinks. 
A network's planar embedding has \emph{outerplanarity} $k\geqslant 1$ if
it has $k$ layers of nodes, \ie, after iteratively removing the nodes (and
incident arcs) on the outer face at most $k$ times, we obtain the empty
network. A planar network is of outerplanarity $k$ if it has a planar
embedding (not necessarily unique) of outerplanarity $k$. A
more precise statement of our final result is this: 
\begin{quote}
    \emph{Given fixed parameters $k\geqslant 1$ and $\ell\geqslant 2$,
    for every planar $n$-node network $\N$ of outerplanarity
    $\leqslant k$ and external dimension $\leqslant \ell$, our
    algorithm simultaneously returns a max-flow value and a min-flow
    value in time $\bigOO{n}$, where the hidden multiplicative
    constant depends on $k$ and $\ell$ only.}%
         \footnote{The usual trick of directing new arcs from an
          artificial source node to all source nodes and again from
          all sink nodes to an artificial sink node, in order to
          reduce the case of $p > 1$ sources and $q > 1$ sinks to the
          single-source single-sink case, generally destroys the
          planarity of $\N$ and cannot be used to simplify our
          algorithm.}
\end{quote}
Our final algorithm combines several intermediate algorithms, each of
independent interest for computing network typings. We mention several
salient features that distinguish our approach:
\begin{enumerate}
\item 
      Nowhere do we invoke a linear-programming algorithm (\eg, the
      simplex network algorithm). Our many optimizations relative to
      linear constraints are entirely carried out by various
      transformations on networks and their underlying graphs, and by
      using no more than the operations of addition, subtraction, and
      comparison of numbers.  At the end, our complexity bounds are
      all functions of only the number of nodes and arcs, and are
      independent of costs and capacities, \ie, these are
      strongly-polynomial bounds.
\item 
      In all cases, our algorithms do not impose any restrictions on
      flow capacities and costs. These capacities and costs can be
      arbitrarily large or small, independent of each other, and not
      restricted to integral values.
\item
      Part of our results are a contribution to the vast body of work on
      \emph{fixed-parameter} low-degree-polynomial time
      algorithms, or linear-time algorithms, for problems that are
      intractable (\eg, NP-hard) or impractical on very large input
      data (\eg, non-linear polynomial time) when these parameters are
      unrestricted.%
       \footnote{A useful though somewhat dated survey of efficient
      fixed-parameter algorithms is~\cite{bodlaender:TCS1998}.
      A recent survey in a focused area
      (transportation engineering) is~\cite{gaiu2013}
      where \emph{parameter-tuning} refers to alternatives in
      selecting fixed-parameter algorithms.}
\item
      In the process of building a full system from smaller
      components, interface dimensions figure prominently in our
      analysis. The quality of our results, in minimizing algorithm
      complexities and simplifying their proofs, depends on keeping
      interface dimensions as small as possible.  This part of our
      work rejoins research on efficient algorithms for graph
      separators and decomposition.
% ~\cite{aroraRaoVazirani:Comm2008}. 
      (One of our results below depends
      on the linear-time computation of an optimal partioning of a
      $3$-regular planar embedding.)
\Hide{
\item As formulated in this report and unlike other approaches, our
      final algorithm returns only the \emph{value} of a maximum flow,
      without specifying a set of actual \emph{paths} from source to
      sink that will carry such a flow. Other approaches handle the
      two problems simultaneously: Inherent in their operation is
      that, in order to compute a maximum-flow \emph{value}, they need
      to determine a set of maximum-flow \emph{paths}; ours does not
      need to. Though avoided here because of the extra cost
      and complications for a first presentation, our final algorithm
      can be adjusted to return a set of maximum-flow paths in
      addition to a maximum-flow value.
\item We view the uncoupling of the two problems just described as an advantage.
      It underlies our need to be able to replace components -- broken
      or defective -- by other components as long as their principal
      typings are equal, without regard to how they may direct flow
      internally from input ports to output ports.
\item As far as run-time complexity is concerned, our final algorithm
      performs very badly on some networks, \eg, networks whose graphs
      are dense. However, on other special classes of networks, ours
      outperforms the best currently available algorithms (\eg, on
      networks whose graphs are outer-planar or whose graphs are
      topologically equivalent to some ring graphs).
}
\end{enumerate}

\vspace{-.2in}
\paragraph{Organization of the report.}

This introduction and the following sections until the reference
pages, less than ten pages, are an \emph{extended abstract}, 
% (submitted STOC 2014) (not yet decided), 
% not counting the reference pages,
which includes several propositions and theorems without their proofs.  
Proofs and further technical material in support of claims in the
earlier part of the report are in the four appendices after the
reference pages.
\Hide{
This extended abstract runs to less than $10$ pages, excluding the
references. % follows the organization of the full report.
We here delay some of the proofs, and justification of some
concepts, to appendices (not part of the $10$-page extended abstract).
}
Sections~\ref{sect:flow-networks}, 
\ref{sect:valid-vs-principal}, and~\ref{sect:disassemble-and-reassemble},
present elements of our compositional approach. 
Section~\ref{sect:max+min-flows-in-planar-networks} is our
application to the max-flow problem in planar networks.
Section~\ref{sect:extensions-and-future} presents immediate
extensions of this report and proposes
directions for future research.

\Hide{
Sections~\ref{sect:flow-networks}, \ref{sect:typings},
and~\ref{sect:valid-vs-principal}, are background material, where we
fix our notation regarding standard notions of flow networks as well
as introduce new notions regarding typings.

Section~\ref{sect:inferring} presents a simple, but expensive,
algorithm for computing the principal typing of an arbitrary flow
network $\N$, which we call $\WPT$.  It provides a point of comparison
for algorithms later in the report.  $\WPT$ is our only algorithm that
operates in ``whole-network'' mode, in the sense explained above, and
that produces its result using standard linear-programming procedures.

In Sections~\ref{sect:assemble}, \ref{sect:parallel-addition},
and~\ref{sect:binding-IO-pairs}, we present our methodology for
breaking up a flow network $\N$ into one-node components at an initial
stage, and then gradually re-assembling $\N$ from these
components. This part of the report includes algorithms for producing
principal typings of one-node networks, and then producing the
principal typings of intermediate network components, each obtained by
re-connecting an arc that was disconnected at the initial stage.

Section~\ref{sect:basic-compositional} presents algorithm $\CompPT$
which combines the algorithms of the preceding three sections and
computes the principal typing of a flow network $\N$ in
``compositional'' mode. In addition to $\N$, algorithm $\CompPT$
takes a second argument, which we call a \emph{binding schedule}; a
binding schedule $\sigma$ dictates the order in which initially
disconnected arcs are re-connected and, as a result, determines the
run-time complexity of $\CompPT$ which, if $\sigma$ is badly
selected, can be excessive.

Algorithm $\CompMF$ in Section~\ref{sect:efficient-compositional}
calls $\CompPT$ as a subroutine to compute a maximum-flow value. The
run-time complexity of $\CompMF$ therefore depends on the binding
schedule  $\sigma$ that is used as the second argument in the call to
$\CompPT$.
}

\Hide
{
\vspace{-.15in}
\paragraph{Acknowledgments.}
The work reported herein is a fraction of a collective effort
with several people, under the umbrella of the \emph{iBench Initiative}
at Boston University, co-directed by Azer Bestavros and myself.
% The website
% \texttt{https://sites.google.com/site/ibenchbu/} gives a list of 
% current and past participants, and research activities.  
Several \emph{iBench} participants, starting with Azer Bestavros 
and Saber Mirzaei, were a captive audience for presentations
of the included material, in several sessions in the past three
years. Special thanks are due to them all.
}
 
\Hide{
 \footnote{\textbf{Note on terminology}: Our choice of the names ``type'' and
  ``typing'' is not coincidental. They refer to notions in our
  examination which are equivalent to notions by the same names in the
  study of strongly-typed programming languages. The type system of a
  strongly-typed language -- object-oriented such as Java, or
  functional such as Standard ML or Haskell -- consists of formal
  logical annotations enforcing safety conditions as invariants across
  interfaces of program components. In our examination here too,
  ``type'' and ``typing'' will refer to formal annotations (now based
  on ideas from linear algebra) to enforce safety conditions (now
  limited to \emph{feasibility} of flows) across interfaces of network
  components. We take a flow to be \emph{safe} iff it is feasible. }
}

\vspace{-.1in}
\section{Flow Networks and Their Typings}
\label{sect:flow-networks}
  %% flow-networks.tex

We take flow networks in their simplest form, as
capacited finite directed graphs. We repeat standard
notions~\cite{1993orlin}, but now adapted to our context.%
   \footnote{For our purposes, we need a definition of flow networks
   that is more arc-centric and less node-centric than the standard
   one. Such alternative definitions have already been
   proposed (see, for example, Chapter 2 in~\cite{Klein2011}), 
   but are still not the most convenient for us.  }
A \emph{flow network} $\N$ is a pair $\N = (\nn,\aaa)$, where $\nn$ is
a finite set of nodes and $\aaa$ a finite set of directed arcs, with
each arc connecting two distinct nodes (no self-loops and no multiple
arcs in the same direction connecting the same two nodes).  
We write $\reals$ and $\nreals$ for the
sets of reals and non-negative reals. Such a flow
network $\N$ is supplied with \emph{capacity functions} on the arcs,
%
% \begin{itemize}
% \item 
   $\lc : \aaa\to\nreals$ (lower-bound capacity) and
% \item
   $\uc : \aaa\to\nreals$ (upper-bound capacity),
% \end{itemize}
%
such that $0\leqslant \lc(a) \leqslant \uc(a)$ and $\uc(a) \neq 0$ for
every $a\in\aaa$. 

We write $\tail{a}$ and $\head{a}$ for the two ends of arc $a\in\aaa$.
The set $\aaa$ of arcs is the disjoint union of three sets, \ie, 
$\aaa = \aaa_\text{\#} \uplus \aaa_\text{in} \uplus \aaa_\text{out}$ where:
\begin{alignat*}{5}
   &\aaa_\text{\#} &&:= 
   \ \ &&\Set{\,a\in\aaa\;|\;\head{a}\in\nn\text{ \& }\tail{a}\in\nn\,} 
   \qquad &&\text{(the internal arcs of $\N$)},
\\
   &\aaa_\text{in} &&:= 
   &&\Set{\,a\in\aaa\;|\;\head{a}\in\nn\text{ \& }\tail{a}\not\in\nn\,} 
   \qquad &&\text{(the input arcs of $\N$)},
\\
   &\aaa_\text{out}\ &&:= 
   &&\Set{\,a\in\aaa\;|\;\head{a}\not\in\nn\text{ \& }\tail{a}\in\nn\,} 
   \qquad &&\text{(the output arcs of $\N$)} .
\end{alignat*}
%
% The tail of any input arc is not attached to any node, and the head of an 
% output arc is not attached to any node. Since there are no
% self-loops, $\head{a}\neq\tail{a}$ for all $a\in\aaa_{\text{\#}}$.
%
% We assume that $\nn\neq\varnothing$, \ie, there is at least one node
% in $\nn$, without which there would be no input arc, no output arc,
% and nothing to say.
%
\Hide{
With no loss of generality, we make the following
\emph{connectedness assumption}:
\begin{itemize}
\item[(\$\$)]\quad  For every $a\in\aaa$ there is a
       directed path from an input arc to an output arc
       that visits $a$.
\end{itemize}
We do not assume $\N$ is connected as a directed graph -- an
assumption often made in studies of network flows, which is sensible
when there is only one input arc (or ``source node'') and only one
output arc (or ``sink node'').
}
A \emph{flow} is a function $f :
\aaa\to\nreals$ which, if \emph{feasible}, satisfies ``flow
conservation'' at every node and ``capacity constraints'' at every
arc, both defined as in the standard formulation~\cite{1993orlin}.

We call a bounded closed interval $[r,r']$ of real numbers (possibly
negative) a \emph{type}. A \emph{typing} is a partial map $T$
(possibly total) that assigns types to subsets of the input 
and output arcs. Formally, $T$ is of the following form,
where $\aaa_{\text{in,out}} = \aaa_{\text{in}}\cup\aaa_{\text{out}}$
and $\power{\ }$ is the power-set operator,
$\power{\aaa_{\text{in,out}}} = \Set{A\,|\,A\subseteq\aaa_{\text{in,out}}}$:%
       \footnote{The notation ``$\aaa_{\text{in,out}}$'' is ambiguous,
    because it does not distinguish between input arcs and 
    output arcs. We
    use it nonetheless for succintness. The context will always make
    clear which members of $\aaa_{\text{in,out}}$ are input arcs and
    which are output arcs. }
\begin{itemize}
\item[]
\( % \[
    T\;:\ \power{\aaa_{\text{in,out}}}\ \to\ \intervals{\reals} 
    \qquad\text{where\ \ }
       \intervals{\reals}\ :=
    \ \Bigl\{\,[r,r']\;\Bigl|
    \;r,r'\in\reals\text{ and } r\leqslant r'\,\Bigr\} ,
\) % \]
\end{itemize}
\ie, $\intervals{\reals}$ is the set of bounded closed intervals. 
As a function, $T$ is not totally arbitrary and satisfies 
conditions that make it a \emph{network typing}; in particular, it 
will always be that $T(\varnothing) = [0,0] = \Set{0} =
T(\aaa_{\text{in,out}})$, the latter condition expressing the fact
that the total amount entering a network must equal the total amount
exiting it.%
   \footnote{We assume there are no \emph{producer nodes}
   and no \emph{consumer nodes} in $\N$. In the presence of
   producers and consumers, our formulation here of
   flow networks and their typings has to be adjusted accordingly 
   (details in~\cite{kfouryCHAR:2012}).
   }
\Hide{
Henceforth, we use the term ``network'' to mean ``flow network'' in 
the sense just defined.
}

An \emph{input/output assignment} (or \emph{IO assignment}) is a
function $f : \aaa_{\text{in,out}}\to\nreals$. For a 
flow $f:\aaa\to\nreals$ or an IO assignment $f : \aaa_{\text{in,out}}\to\nreals$,
we say $f$ \emph{satisfies} the
typing $T$ iff, for every $A\in\power{\aaa_{\text{in,out}}}$ such
that $T(A)$ is defined and $T(A) = [r_1,r_2]$, we have:
\begin{itemize}
\item[]
\( % [
   r_1\ \leqslant
   \ \sum\, f(A\cap\aaa_{\text{in}})\ -\ \sum\, f(A\cap\aaa_{\text{out}})
   \ \leqslant\ r_2
\) % ]
\end{itemize}
where $\sum f(X)$ means $\sum\Set{f(x)|x\in X}$. In words, this says
that the ``sum of the values assigned by $f$ to input arcs'' 
minus the ``sum of the values assigned by $f$ to output arcs''
is within the interval $[r_1,r_2]$.

\vspace{-.1in}
\section{Principal Typings}
\label{sect:valid-vs-principal}
\label{sect:sound-and-complete}
  %% sound-and-principal.tex

We say a typing $T$ is \emph{sound} for network $\N$ if:
\begin{itemize} % {description} % 
\item Every IO assignment $f : \aaa_{\text{in,out}}\to\nreals$
      satisfying $T$ is extendable to
      a feasible flow $f' : \aaa\to\nreals$ in $\N$.
\end{itemize} % {description} 
A sound typing is one that is generally more conservative than
required to prevent system's malfunction: It filters out % disqualifies
all unsafe IO assignments, \ie, not extendable to feasible flows, and
perhaps a few more that are safe.

For our application here (max-flow and min-flow values), not only do
we want to assemble networks for their safe operation, we want to
operate them to the \emph{limit} of their safety guarantees.  We 
therefore use the two limits of each interval/type to specify the
\emph{exact} minimum and the \emph{exact} maximum that an input/output arc (or a
subset of input/output arcs) can carry across interfaces. 
We thus say a typing $T$ is \emph{complete} for network $\N$ if:
\begin{itemize}
\item 
%      Every IO assignment $f : \aaa_{\text{in,out}}\to\nreals$
%      extendable to a feasible flow $f' : \aaa\to\nreals$ in $\N$
%      satisfies $T$.
       Every feasible flow $f : \aaa\to\nreals$ in $\N$
       satisfies $T$.
%      ; in particular, every minimal feasible flow and every
%      maximal feasible flow satisfies $T$.   
\end{itemize}
% And we say a typing $T$
% is \emph{principal} for network $\N$ if $T$ is both \emph{sound}
% and  \emph{complete} for $\N$.
Every min-flow in $\N$ and every
max-flow in $\N$ satisfy a sound and complete typing $T$ for $\N$.   

Let $\size{\aaa_{\text{in}}} = p \geqslant 1$ and
$\size{\aaa_{\text{out}}} = q \geqslant 1$, and assume a fixed
ordering of the arcs in $\aaa_{\text{in,out}}$. An IO
assignment $f : \aaa_{\text{in,out}}\to\nreals$ specifies a point,
namely $\Angles{f(a)\;|\;a\in\aaa_{\text{in,out}}}$, in the
vector space ${\reals}^{p+q}$, and the
collection of all IO assignments satisfying a typing $T$
form a \emph{compact convex polyhedral set} (or \emph{polytope}) in
the first orthant $(\nreals)^{p+q}$, which we denote $\poly{T}$.
Using standard notions of convexity in vector spaces ${\reals}^n$
and polyhedral analysis~\cite{schrijver12,boyd-vanderberghe09}, 
the following are straightforward:
\begin{proposition} % {fact}
[Sound and Complete Typings Are Equivalent]
  \label{fact:equivalence-of-principal-typings}
  If $T_1$ and $T_2$ are sound and complete typings for the same
      network $\N$, then  $\poly{T_1} = \poly{T_2}$.
\end{proposition}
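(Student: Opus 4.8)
The plan is to show that, for \emph{every} sound and complete typing $T$ of $\N$, the polytope $\poly{T}$ coincides with one set that depends on $\N$ alone, namely
\[
   S \ :=\ \bigl\{\, f:\aaa_{\text{in,out}}\to\nreals \ \bigm|\ f \text{ extends to a feasible flow } f':\aaa\to\nreals \text{ in } \N \,\bigr\},
\]
the set of IO assignments extendable to feasible flows. Once this is established, the proposition is immediate, since $\poly{T_1} = S = \poly{T_2}$. (This is essentially the informal reading of soundness/completeness already announced before Section~\ref{sect:flow-networks}; the proof just has to connect it to the formal definitions given via flow extension and flow satisfaction.)

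The one point worth stating carefully at the outset is that the predicate ``$g$ satisfies $T$'' is literally the same system of inequalities whether $g$ is a flow on $\aaa$ or an IO assignment on $\aaa_{\text{in,out}}$, because those inequalities only read the values of $g$ on the arcs of $\aaa_{\text{in,out}}$. Hence a feasible flow $f':\aaa\to\nreals$ satisfies $T$ \emph{if and only if} the IO assignment $f'|_{\aaa_{\text{in,out}}}$ obtained by restricting $f'$ to the input and output arcs satisfies $T$.

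With this in hand the two inclusions are one line each. If $f\in\poly{T}$, then $f$ satisfies $T$, so by \emph{soundness} $f$ extends to a feasible flow in $\N$, giving $f\in S$; thus $\poly{T}\subseteq S$. Conversely, if $f\in S$, choose a feasible flow $f'$ extending $f$; by \emph{completeness} $f'$ satisfies $T$, hence by the restriction remark $f = f'|_{\aaa_{\text{in,out}}}$ satisfies $T$, i.e.\ $f\in\poly{T}$; thus $S\subseteq\poly{T}$. So $\poly{T}=S$ for any sound and complete $T$, and applying this to $T_1$ and $T_2$ finishes the argument. There is no genuine obstacle here — no convexity or polyhedral fact is invoked — and the only step needing a moment's attention is the restriction remark linking ``$f'$ satisfies $T$'' with ``its induced IO assignment satisfies $T$''; once that is noted, soundness supplies one inclusion and completeness the other. (As an aside, the same argument identifies $\poly{T}$ with the coordinate projection onto $\aaa_{\text{in,out}}$ of the feasible-flow polytope of $\N$, which is incidentally why $\poly{T}$ is a polytope at all.)
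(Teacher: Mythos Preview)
Your argument is correct and is precisely the unpacking of the definitions that the paper has in mind: the paper does not actually write out a proof of this proposition, declaring it ``straightforward'' after having already remarked in the Introduction that a sound-and-complete typing $T$ is one for which $\poly{T}$ equals the set of IO assignments extendable to feasible flows. Your proof makes explicit the one small bridge needed between the formal definitions (completeness stated in terms of \emph{flows} satisfying $T$) and that informal reading (IO \emph{assignments} satisfying $T$), namely that satisfaction of $T$ depends only on the values on $\aaa_{\text{in,out}}$; with that noted, soundness and completeness give the two inclusions exactly as you wrote.
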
 % {fact}
\begin{proposition} % {fact}
[Sound Typings Are Subtypings of Sound and Complete Typings]
  \label{fact:sound-typings-are-subtypes}
  If $T_1$ is a sound and complete typing for network $\N$ and $T_2$ is
     a sound typing for the same $\N$, then 
     $\poly{T_1} \supseteq \poly{T_2}$.  
\end{proposition}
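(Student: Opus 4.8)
The plan is to prove the inclusion $\poly{T_2}\subseteq\poly{T_1}$ directly, by chasing a single point through the definitions of Sections~\ref{sect:flow-networks} and~\ref{sect:valid-vs-principal}; no convexity or polyhedral machinery is actually needed here (that only served earlier to establish that $\poly{T_1}$ and $\poly{T_2}$ are polytopes in the first place). Write $F$ for the \emph{feasible IO region} of $\N$, i.e. the set of IO assignments $f:\aaa_{\text{in,out}}\to\nreals$ that are extendable to a feasible flow $f':\aaa\to\nreals$ in $\N$.

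The first step is an elementary observation implicit in the definition of ``satisfies'': whether a flow $f':\aaa\to\nreals$ or an IO assignment $f:\aaa_{\text{in,out}}\to\nreals$ satisfies a typing $T$ depends only on the values taken on the arcs in $\aaa_{\text{in,out}}$, since the defining constraints read $r_1\leqslant\sum f(A\cap\aaa_{\text{in}})-\sum f(A\cap\aaa_{\text{out}})\leqslant r_2$ for $A\subseteq\aaa_{\text{in,out}}$. In particular, a feasible flow satisfies $T$ if and only if its restriction to $\aaa_{\text{in,out}}$ does. Using this I would identify $\poly{T_1}$ with $F$: the inclusion $\poly{T_1}\subseteq F$ is precisely soundness of $T_1$ (a point of $\poly{T_1}$ satisfies $T_1$, hence extends to a feasible flow, hence lies in $F$); conversely, given $f\in F$ with a feasible extension $f'$, completeness of $T_1$ makes $f'$ satisfy $T_1$, hence so does $f$ by the observation, and $f$ lies in the first orthant because it is an IO assignment, so $f\in\poly{T_1}$. (This identification is the single-typing kernel of Proposition~\ref{fact:equivalence-of-principal-typings}.)

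The last step is to invoke soundness of $T_2$: any $f\in\poly{T_2}$ satisfies $T_2$, hence extends to a feasible flow in $\N$, hence $f\in F=\poly{T_1}$; therefore $\poly{T_2}\subseteq\poly{T_1}$, which is the claim. I do not expect a genuine obstacle here; the only care needed is the bookkeeping between ``a flow satisfies $T$'' and ``an IO assignment satisfies $T$'' (handled by the first step), and the reminder that membership in $\poly{T}$ carries the side condition of lying in $(\nreals)^{p+q}$, which is automatic since we only ever test bona fide IO assignments.
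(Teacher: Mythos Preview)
Your argument is correct and is precisely the elementary unfolding of the definitions that the paper has in mind when it declares the proposition ``straightforward'' without giving an explicit proof. If anything, you are more careful than the paper: you rightly note that no polyhedral or convexity machinery is needed here, only the chain $\poly{T_2}\subseteq F\subseteq\poly{T_1}$ via soundness of $T_2$ and completeness of $T_1$ (with soundness of $T_1$ tightening the second inclusion to an equality).
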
 % {fact}
\noindent
The ``subtyping'' relation is contravariant w.r.t.
``$\subseteq$''. We say two networks $\N_1$ and $\N_2$ 
are \emph{similar} if they have the same number $p$ of input arcs
and same number $q$ of output arcs.
Proposition~\ref{fact:sound-typings-are-subtypes}
implies this: Given similar networks $\N_1$ and $\N_2$, with
respective sound and complete typings $T_1$ and $T_2$, if 
$\poly{T_1} \supseteq \poly{T_2}$, \ie, 
\emph{if $T_1$ is a subtyping of $T_2$,
then $\N_1$ can be safely substituted for $\N_2$, in any assembly of
networks containing $\N_2$.}%
  \footnote{\label{foot:angelic-vs-demonic}%
   An assignment of values to the input arcs (resp. output
   arcs) of a network does not uniquely determine the values at its
   output arcs (resp. input arcs) in the presence of multiple
   input/output arcs. In that sense, flow moves
   \emph{non-deterministically} between input arcs and output arcs.
   Non-determinism is usually classified in two ways: \emph{angelic}
   and \emph{demonic}, according to whether it proceeds to favor
   a desirable outcome or to obstruct it. Our notion of subtyping here,
   and with it the notion of \emph{safe substitution}, presumes that
   the non-determinism of flow networks is angelic. For the case when
   non-determinism is demonic, ``subtyping between network typings'' has
   to be defined in a more restrictive way. We elaborate
   on this question in % ~\cite{kfouryCHAR:2012}. }
   Appendix~\ref{sect:appendix:future}.}

One complication when dealing with typings as polytopes
are the alternatives in representing them (convex hulls \emph{vs.} 
intersections of halfspaces).
We choose to represent them by intersecting
halfspaces, with some (not all) redundancies in their defining linear
inequalities eliminated. 
We thus say the typing $T$
is \emph{tight} if, for every  $A\subseteq\aaa_{\text{in,out}}$ for
which $T(A)$ is defined and every $r\in T(A)$, there is an
IO assignment $f\in\poly{T}$ such that:
\begin{itemize}
\item[]
\(
 r = \sum f(A\cap\aaa_{\text{in}}) - \sum f(A\cap\aaa_{\text{out}}) .
\)
\end{itemize}
Informally, $T$ is tight if no defined $T(A)$ contains redundant
information. 

Another kind of redundancy occurs when an interval/type $T(A)$ is
defined for some $A = B\cup B'\subseteq\aaa_{\text{in,out}}$ 
with $B\neq\varnothing\neq B'$ even though
there is no communication between $B$ and $B'$. We eliminate
this kind of redundancy via what we call ``locally total'' typings.
We need a preliminary notion.
A network $\M = (\mm,\bbb)$ is a \emph{subnetwork} of
network $\N = (\nn,\aaa)$ if $\mm\subseteq\nn$ and $\bbb\subseteq\aaa$
such that:
\begin{alignat*}{3}
   &\bbb_\text{\#} &&= 
   &&\Set{\,a\in\aaa\;|\;\head{a}\in\mm\text{ \& }\tail{a}\in\mm\,} ,
\\
   &\bbb_\text{in} &&= 
   &&\Set{\,a\in\aaa\;|\;\head{a}\in\mm\text{ \& }\tail{a}\not\in\mm\,} ,
\\
   &\bbb_\text{out} &&= 
   &&\Set{\,a\in\aaa\;|\;\head{a}\not\in\mm\text{ \& }\tail{a}\in\mm\,} .
\end{alignat*}
We also say $\M$ is the subnetwork of $\N$ \emph{induced
by $\mm$}. The subnetwork $\M$ is a \emph{component} of $\N$ 
if $\M$ is connected and $\bbb_{\text{\#}}\subseteq\aaa_{\text{\#}}$,
$\bbb_{\text{in}}\subseteq\aaa_{\text{in}}$, and
$\bbb_{\text{out}}\subseteq\aaa_{\text{out}}$, \ie, $\M$ is a maximal
connected subnetwork of $\N$. If network $\N$ contains two distinct
component $\M$ and $\M'$, there is no communication between $\M$ and
$\M'$, and the typings of the latter two can be computed independently
of each other. We say a typing $T$ for $\N$ is \emph{locally total} if, 
for all components $\M = (\mm,\bbb)$ and $\M' = (\mm',\bbb')$ of $\N$, 
and all $B\subseteq\bbb_{\text{in,out}}$ and $B'\subseteq\bbb'_{\text{in,out}}$:
\begin{itemize}[itemsep=2pt,parsep=2pt,topsep=5pt,partopsep=0pt] 
\item The interval/type $T(B)$ is defined.
\item 
   If $\M'\neq \M$ and $B\neq\varnothing\neq B'$, 
      the interval/type $T(B\cup B')$ is not defined.
\end{itemize}
Whereas ``tight'' and ``locally total'' can be viewed (and are in
fact) properties of a typing $T$, independent of any network $\N$ for
which $T$ is a typing, ``sound'' and ``complete'' are properties of
$T$ relative to a particular $\N$.  If $\N$ has only one component
(itself), a locally-total typing for $\N$ is a total function on
$\power{\aaa_{\text{in,out}}}$. We can prove:
\begin{theorem}
      [Uniqueness of Locally Total, Tight, Sound and Complete Typings]
\label{fact:existence-and-uniqueness}
      For all networks $\N$, there is a unique
      typing $T$ which is locally total, tight, sound and complete --
      henceforth called the \emph{principal typing} of $\N$.
\end{theorem}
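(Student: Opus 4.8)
The plan is to prove existence by writing the typing down explicitly and uniqueness by squeezing any candidate between the constraints forced by completeness and by tightness, using Proposition~\ref{fact:equivalence-of-principal-typings}. Since distinct components of $\N$ share no arcs and route flow independently, I would first reduce to the connected case (one component): for connected $\N$ a locally total typing is exactly a \emph{total} function on $\power{\aaa_{\text{in,out}}}$, and the general case is recovered by pasting together the per-component typings and leaving $T(A)$ undefined whenever $A$ straddles two components.

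\emph{Existence.} First I would note that the feasible flows of $\N$ form a compact polytope
\[
  Q \;=\; \bigl\{\, f':\aaa\to\reals \;\bigm|\; \lc(a)\leqslant f'(a)\leqslant\uc(a)\ \text{for all }a\in\aaa,\ \text{and flow is conserved at every node of }\N \,\bigr\},
\]
bounded by the $\uc(a)$ and closed, so its projection $P\subseteq(\nreals)^{p+q}$ onto the coordinates indexed by $\aaa_{\text{in,out}}$ is again a compact convex polytope, namely the set of IO assignments extendable to feasible flows. For $A\subseteq\aaa_{\text{in,out}}$ write $\ell_A(f) = \sum f(A\cap\aaa_{\text{in}}) - \sum f(A\cap\aaa_{\text{out}})$, a linear functional, and define $T(A) := \bigl[\min_{f\in P}\ell_A(f),\ \max_{f\in P}\ell_A(f)\bigr]$. (This presumes $P\neq\varnothing$; if $\N$ admits no feasible flow the statement degenerates and must be handled separately, since then no tight sound typing exists.) I would then check: $T$ is a network typing ($T(\varnothing)=[0,0]$ since $\ell_\varnothing\equiv 0$; $T(\aaa_{\text{in,out}})=[0,0]$ since conservation equates total inflow and total outflow on every $f\in P$); $T$ is \emph{complete}, because by construction every $f\in P$, hence every feasible flow, satisfies each interval constraint of $T$; and $T$ is \emph{tight}, because $\ell_A$ is continuous on the compact convex set $P$, so $\ell_A(P)$ is exactly the interval $T(A)$ and every value in $T(A)$ is attained in $P=\poly{T}$. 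Connectedness makes $T$ total, hence locally total.

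The one step that really needs an argument is \emph{soundness}, i.e.\ $\poly{T}\subseteq P$ — equivalently, that the signed subset-sum functionals $\{\ell_A\}_{A\subseteq\aaa_{\text{in,out}}}$ are rich enough to cut out the projection $P$. Here I would invoke the classical feasibility theorem for flows with lower and upper capacities (the Gale--Hoffman / Hoffman circulation criterion; see~\cite{schrijver12}): an IO assignment $f$ extends to a feasible flow iff (i) $\lc(a)\leqslant f(a)\leqslant\uc(a)$ for each $a\in\aaa_{\text{in,out}}$, (ii) $\sum f(\aaa_{\text{in}})=\sum f(\aaa_{\text{out}})$, and (iii) for every node subset $S\subseteq\nn$ the net external supply into $S$ is at most the internal cut capacity leaving $S$ minus the internal lower-bound capacity entering $S$. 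Each of (i)--(iii) has the form $\ell_A(f)\in[\,\cdot\,]$ — (i) with $A$ a singleton, (ii) with $A=\aaa_{\text{in,out}}$, (iii) with $A=A_S$ the input arcs whose head lies in $S$ together with the output arcs whose tail lies in $S$ — and since $T(A)$ is by definition the \emph{tightest} interval valid on $P$ for that $A$, any $f$ satisfying $T$ satisfies (i)--(iii) and therefore lies in $P$. I expect the bookkeeping that rewrites the cut inequalities (iii) as $\ell_{A_S}$-constraints, together with the matching of (i)--(ii), to be the main obstacle; everything else is routine polyhedral geometry. (One could instead Fourier--Motzkin-eliminate the internal-arc variables from the description of $Q$ and verify the surviving inequalities all have $\ell_A$-form, but Gale--Hoffman is cleaner.)

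\emph{Uniqueness.} Suppose $T$ and $T'$ are both locally total, tight, sound and complete for $\N$. Soundness plus completeness force $\poly{T}$ and $\poly{T'}$ to equal the extendable-IO-assignment polytope $P$ (cf.\ Proposition~\ref{fact:equivalence-of-principal-typings}), and local totality fixes their common domain from the component structure of $\N$ alone. For any $A$ in that domain, completeness gives $\ell_A(P)\subseteq T(A)$ while tightness gives $T(A)\subseteq\ell_A(\poly{T})=\ell_A(P)$, so $T(A)=\ell_A(P)=T'(A)$; hence $T=T'$. Combined with the construction above, this isolates the unique locally total, tight, sound and complete typing — the principal typing of $\N$.
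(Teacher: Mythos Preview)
Your argument is correct, but it is not the route the paper takes. The paper's proof simply invokes Theorem~\ref{fact:principal-typing-build-up}: it builds the principal typing compositionally, first computing $\OnePT(\M_i)$ for each one-node piece of $\Break{\N}$ and then repeatedly applying $\BindT$ along a binding schedule, so that existence, tightness, local totality, soundness and completeness are all inherited step by step from Lemmas~\ref{prop:one-node-principal-typings}, \ref{lem:efficient-binding}, \ref{lem:parallel-addition} and~\ref{lem:total-addition}. No polyhedral feasibility theorem is invoked; only $\{\max,\min,+,-\}$ on capacities. Your approach is instead the direct ``whole-network'' one the paper explicitly mentions (and attributes to~\cite{kfouryCHAR:2012}) in the paragraph following Theorem~\ref{fact:existence-and-uniqueness}: define $T(A)$ as the exact range of $\ell_A$ on the projected feasible polytope and then appeal to Gale--Hoffman to see that the $\ell_A$-constraints already cut out that projection. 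The paper even flags your hard step, remarking that with this approach ``it takes non-trivial work in polyhedral analysis to prove that $T$ is also sound.'' What your route buys is a short, self-contained argument and an explicit closed form for $T$; what the paper's route buys is consistency with its compositional philosophy, avoidance of any external LP/circulation theorem, and an actual algorithm (Algorithm~\ref{alg:basic-compositional}) that produces $T$. Your uniqueness argument --- squeezing $T(A)$ between $\ell_A(P)$ via completeness and tightness once $\poly{T}=P$ --- is clean and independent of which existence proof one uses; your caveat about the infeasible case ($P=\varnothing$) is also well taken, since the paper leaves that degenerate situation implicit.
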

\noindent
The principal typing of $\N$ is a characterization of \emph{all} IO
assignments extendable to feasible flows in $\N$.%
    \footnote{A related result is 
    established in~\cite{Hagerup1998} by a different method that
    invokes a linear-programming procedure. The motivation for that 
    latter work, different from ours, is whether the external
    (\ie, input/output) flow pattern of a multiterminal network $\N$
    can be completely described independently of the size of $\N$. }
In particular, if $\N$ is connected, its min-flow and max-flow values
are the two limits of the type $T(\aaa_{\text{in}})$, or
equivalently, the negated two limits of $T(\aaa_{\text{out}})$.  
Theorem~\ref{fact:existence-and-uniqueness} implies that two
similar networks $\N_1$ and $\N_2$ are equivalent iff their 
principal typings are equal, \emph{regardless of their respective sizes and
internal details}.

We can compute a principal typing typing $T$ via linear-programming
(but we do not): For every component $\M = (\mm,\bbb)$ of $\N$ and
every $B\subseteq\bbb_{\text{in,out}}$, we specify an objective
$\theta_B$ to be minimized and maximized, corresponding to the two
limits of the type $T(B)$, relative to the collection $\CC$ of
flow-preservation equations (one for each node) and
capacity-constraint inequalities (two for each arc). Following this
approach in the proof of Theorem~\ref{fact:existence-and-uniqueness},
it is relatively easy to show that the resulting $T$ is locally total,
tight, and complete -- but it takes non-trivial work in polyhedral
analysis to prove that $T$ is also sound. 
Besides being relatively expensive (the result of invoking a
linear-programming procedure), the drawback of this approach
is that it is \emph{whole-network}, as opposed to \emph{compositional},
requiring prior knowledge of all constraints in $\CC$ before
$T$ can be computed.

\vspace{-.1in}
\section{Disassembling and Reassembling Networks}
\label{sect:disassemble-and-reassemble}
  %%% disassemble-and-reassemble.tex

In earlier reports~\cite{BestavrosKfouryLapetsOcean:crts09,
  BestavrosKfouryLapetsOcean:hscc10,
  BestKfoury:dsl11,Kfoury:sblp11,kfouryDSL:2011},
  we used our compositional
  approach to model/design/analyze systems incrementally, from components
  that are supplied separately at different times.  Here, we assume
  we are given all of a network $\N$ at once, which we then disassemble
  into its smallest units (\ie, one-node components), compute their 
  principal typings, and then combine the latter to produce a principal
  typing for $\N$. Because we are given all of $\N$ at once,
  we can control the order in which we reassemble it. A schematic example is 
  Figure~\ref{fig:disassemble-and-reassemble}.

\begin{figure}[h] % [!ht] % [H] % [!ht] % [ht] 
        % H  "Pins" the inner float to the outer float,
        %    although ``minipage'' is not, strictly speaking, a float. 
        % h  Place the float here, i.e., 
        %    at the same point it occurs in the source text.
        % t  Position at the top of the page.
        % b  Position at the bottom of the page.
        % p  Put on a special page for floats only.
        % !  Override internal parameters Latex uses 
        %    for determining `good' float positions.
%
\begin{center}
\includegraphics[scale=.56,trim=0cm 15.72cm 0cm 1cm,clip]%
    {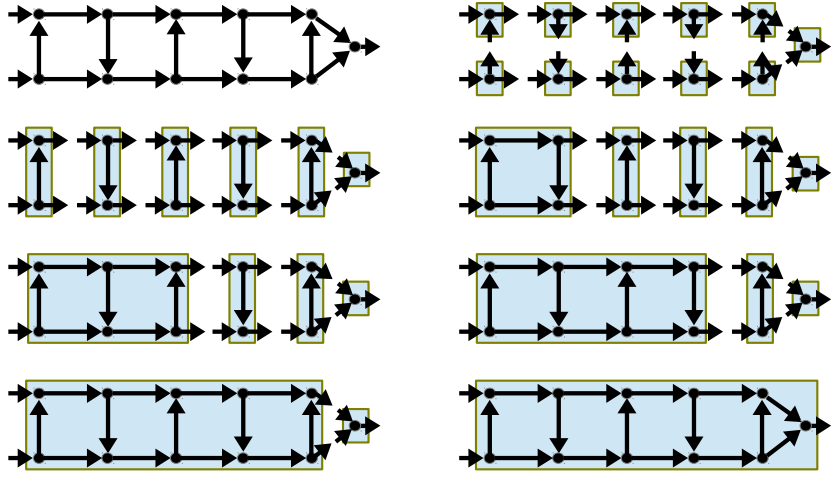}
\end{center}
\caption{Disassembling a network with external dimension $3$
         (\emph{left} top row: initially given $\N$,
         \emph{right} top row: broken-up into one-node components as $\N_0$),
         then reassembling it in stages so as to minimize
         the external dimension (here $3$ or $4$) of %  
         each component (in the shaded areas) in each of the intermediate
         networks.}
\label{fig:disassemble-and-reassemble}
\end{figure}

The process of disassembling $\N = (\nn,\aaa)$ involves ``cutting in
halves'' some of its internal arcs: If internal arc
$a\in\aaa_{\text{\#}}$ is cut in two halves, then we remove $a$ and
introduce a new input arc $a^{+}$ with $\head{a^{+}} = \head{a}$ and a
new output arc $a^{-}$ with $\tail{a^{-}} = \tail{a}$.  Formally,
given a two-part partition of $X\uplus Y = \aaa_{\text{\#}}$, we
define:
\Hide{
   \[  \aaa_{\text{\#}}^{(+)} :=
    \, \Set{\, a^{+}\;|\; a \in X\,}\ \text{(the new input arcs)},
    \ \aaa_{\text{\#}}^{(-)} :=
    \, \Set{\, a^{-}\;|\; a \in X\,}\ \text{(the new output arcs)},
    \ \aaa_{\text{\#}}^{(\pm)} :=\, Y  .
   \]
}
\begin{itemize}
\item[]
\(  \aaa_{\text{\#}}^{(+)} :=
    \, \Set{\, a^{+}\;|\; a \in X\,}\ \text{(the new input arcs)},
    \ \aaa_{\text{\#}}^{(-)} :=
    \, \Set{\, a^{-}\;|\; a \in X\,}\ \text{(the new output arcs)},
    \ \aaa_{\text{\#}}^{(\pm)} :=\, Y  .
\)
\end{itemize}
$Y$ is the set of internal arcs that are not cut or that have had 
their two halves reconnected.

Given the initial network $\N = (\nn,\aaa)$ where every internal arc
is connected, we define another network $\Break{\N}$ where every
internal arc $a\in\aaa_{\text{\#}}$ is cut into two halves $a^{+}$ and
$a^{-}$. The input arcs and output arcs of $\Break{\N}$ are therefore:
$\aaa_{\text{in}}\cup\aaa_{\text{\#}}^{(+)}$ and
$\aaa_{\text{out}}\cup\aaa_{\text{\#}}^{(-)}$, respectively, with
$\aaa_{\text{\#}}^{(\pm)} = \varnothing$. $\Break{\N}$ is
 a network of $n = \size{\nn} \geqslant 1$ one-node components. 
If we call $\N_0$ the fully disassembled network:
\begin{itemize}
\item[] 
\( \N_0 = \Break{\N} = \Set{\M_1, \M_2, \ldots, \M_n} \)
\end{itemize}
we reassemble the original $\N$ by defining the sequence
of networks: $\N_0, \N_1,\ldots,\N_m$ where $\N_m = \N$ and $\N_{i+1}
= \bind{a}{\N_i}$ for every $0\leqslant i < m
= \size{\aaa_{\text{\#}}}$, where $\Bind$ is the operation that
splices $a^{+}$ and $a^{-}$.  What we call the
\emph{binding schedule} $\sigma$ is the order in which
the internal arcs are spliced, \ie, if:
\begin{itemize}
\item[] 
\(
   \N_{1} = \bind{b_1}{\N_0},\quad
   \N_{2} = \bind{b_2}{\N_1},\quad
   \ldots\quad, \quad
   \N_{m} = \bind{b_m}{\N_{m-1}}
\)
\end{itemize}
then $\sigma = b_1 b_2\cdots b_m$, where $\Set{b_1,\ldots,b_m}
= \aaa_{\text{\#}}$. If $\M$ is a connected network, we write
$\exDim{\M}$ for the \emph{external dimension} of $\M$.  For each of
the intermediate networks $\N_i$ with $0\leqslant i \leqslant m$,
we define:
\begin{itemize}
\item[] 
\(
   \Index{\N_{i}} := \max\,\Set{\exDim{\M}\;|\;\M\text{ is a 
                 component of $\N_i$}\,} 
\)
\end{itemize}
and also $\Index{\sigma} := 
\max\,\Set{\Index{\N_{0}},\Index{\N_{1}},\ldots,\Index{\N_{n}}}$.
Without invoking a linear-programming procedure and only using the
arithmetical operators $\Set{\max,\min, +, -}$ on arc capacities,
we can prove:
\begin{theorem}
  [Principal Typing of $\N$ from the Principal Typing of 
   $\Break{\N}$ in Stages]
\label{fact:principal-typing-build-up}\ \ 
\begin{enumerate}[itemsep=2pt,parsep=2pt,topsep=5pt,partopsep=0pt] 
\item
   We can compute the principal typing of a one-node network $\M$ 
   in time $\bigOO{2^d}$ where \emph{$d = \exDim{\M}$}. 
\item
   We can compute $\N_{i+1}$'s principal typing from $\N_{i}$'s
   principal typing in time $2^{\bigOO{d}}$ where \emph{$d = \Index{\N_i}$}. 
\item
   Let $\N$ be reassembled from $\Break{\N}$ using the binding schedule
   $\sigma$. We can compute the principal typing of $\N$ in time 
   $m\cdot 2^{\bigOO{\Index{\sigma}}}$ where $m = \size{\aaa_{\text{\#}}}$.
\end{enumerate}
\end{theorem}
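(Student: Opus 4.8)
\begin{sketch}
The plan is to establish the three parts in order, using repeatedly the following principle: a typing that assigns to each subset $A$ the \emph{exact} interval $\bigl[\min\chi_A,\max\chi_A\bigr]$ of the aggregate $\chi_A(f):=\sum f(A\cap\aaa_{\text{in}})-\sum f(A\cap\aaa_{\text{out}})$, the min/max taken over the feasible set in question, is automatically locally total, tight, and complete, so it is the principal typing as soon as soundness is also verified (Theorem~\ref{fact:existence-and-uniqueness}).

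\textbf{Part (1): one-node components.} A one-node network $\M$ (node $v$) has no internal arcs, so a feasible flow is exactly an IO assignment $f$ with $\lc(a)\leqslant f(a)\leqslant\uc(a)$ for every arc and $\sum f(\aaa_{\text{in}})=\sum f(\aaa_{\text{out}})$. I set $T(A):=\bigl[\,\min_f\chi_A(f),\ \max_f\chi_A(f)\,\bigr]$ over feasible $f$, for every $A\subseteq\aaa$ (a total map, $\M$ being one component). To compute each range with $\{\max,\min,+,-\}$ only, parameterize feasible flows by their common value $F:=\sum f(\aaa_{\text{in}})$, which ranges over $\bigl[\,\max(\sum\lc(\aaa_{\text{in}}),\sum\lc(\aaa_{\text{out}})),\ \min(\sum\uc(\aaa_{\text{in}}),\sum\uc(\aaa_{\text{out}}))\,\bigr]$; for fixed $F$ the value $\max_f\chi_A(f)$ is given by a one-line greedy argument as a closed-form piecewise-linear function of $F$, so its optimum over $F$ is attained at an endpoint of that interval. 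With an $\bigOO{\exDim{\M}}$ precomputation of the partial sums $\sum\lc(X),\sum\uc(X)$ (maintained incrementally as $A$ runs over all subsets), each $T(A)$ costs $\bigOO{1}$, for a total of $\bigOO{2^{\exDim{\M}}}$. Soundness needs no polyhedral machinery here: if $f\geqslant 0$ meets every constraint $\chi_A(f)\in T(A)$, then $\chi_\aaa(f)\in T(\aaa)=[0,0]$ gives flow conservation and $\chi_{\Set{a}}(f)\in T(\Set{a})\subseteq[\lc(a),\uc(a)]$ gives the capacities, so $f$ is a feasible flow; thus $\poly{T}$ equals the feasible region and $T$ is principal.

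\textbf{Part (2): one binding step.} Let $T_i$ be the principal typing of $\N_i$ and let $\N_{i+1}=\bind{a}{\N_i}$ splice the new input arc $a^{+}$ with the new output arc $a^{-}$. Components of $\N_i$ touching neither arc are copied unchanged. If $a^{+}$ and $a^{-}$ lie in distinct components $\M,\M'$, they merge into one connected component $\M''$ with $\exDim{\M''}=\exDim{\M}+\exDim{\M'}-2$; if in one component $\M$, then $\exDim{\M''}=\exDim{\M}-2$; either way $\exDim{\M''}\leqslant 2\,\Index{\N_i}$. Using soundness of $T_\M,T_{\M'}$ (from Theorem~\ref{fact:existence-and-uniqueness}) and gluing/splitting a flow along the re-spliced arc, one checks that the IO assignments extendable to feasible flows of $\M''$ are exactly the points of the projection $\pi(S)$, where (distinct-component case) $S:=\Set{(x,y)\in\poly{T_\M}\times\poly{T_{\M'}}\ \mid\ x_{a^{+}}=y_{a^{-}}}$ and $\pi$ deletes the $a^{+},a^{-}$ coordinates; the same-component case is identical with $\poly{T_\M}$ sliced by $x_{a^{+}}=x_{a^{-}}$ (and if the arc's admissible ranges in $\M$ and $\M'$ are disjoint, $\pi(S)=\varnothing$ and the typing is bottom). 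Then $T_{\M''}(A):=\bigl[\,\min\chi_A,\ \max\chi_A\,\bigr]$ over $\pi(S)$ for each $A\subseteq\bbb''_{\text{in,out}}$, and since $\chi_A+\chi_{A^{c}}=0$ on feasible flows it suffices to compute $\max\chi_A$; writing $A=A_1\uplus A_2$ with $A_1\subseteq\bbb_{\text{in,out}}$, $A_2\subseteq\bbb'_{\text{in,out}}$,
\[
 \max_{\pi(S)}\chi_A \;=\; \max_{t\in J}\bigl[\,\phi_\M(t)+\phi_{\M'}(t)\,\bigr],
 \qquad
 \phi_\M(t):=\max\Set{\chi_{A_1}(x)\ \mid\ x\in\poly{T_\M},\ x_{a^{+}}=t},
\]
with $J\subseteq\nreals$ the common admissible value-range of the re-spliced arc (read off from $T_\M(\Set{a^{+}})$ and $T_{\M'}(\Set{a^{-}})$) and $\phi_{\M'}$ defined symmetrically. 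The crux is the identity
\[
 \phi_\M(t)\;=\;\min\bigl(\,M^{+}(A_1),\ M^{+}(A_1\cup\Set{a^{+}})-t\,\bigr),
 \qquad
 \phi_{\M'}(t)\;=\;\min\bigl(\,M^{+}(A_2),\ M^{+}(A_2\cup\Set{a^{-}})+t\,\bigr),
\]
where $M^{+}(B)$ denotes the upper endpoint of the relevant type (the second bound coming from $\chi_{A_1}(x)=\chi_{A_1\cup\Set{a^{+}}}(x)-x_{a^{+}}$; one verifies these two are the tightest among all valid upper bounds on $\chi_{A_1}$ given $x_{a^{+}}=t$). Granting this, $\phi_\M+\phi_{\M'}$ is piecewise-linear in $t$ with $\bigOO{1}$ breakpoints, $\max_{t\in J}$ costs $\bigOO{1}$ arithmetic operations per $A$, and the needed entries of $T_\M,T_{\M'}$ are $\bigOO{1}$ table lookups; over the $2^{\exDim{\M''}}$ subsets this is $2^{\bigOO{\exDim{\M''}}}=2^{\bigOO{\Index{\N_i}}}$. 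The resulting $T_{\M''}$ (with the untouched typings) is locally total, tight, complete, and — since $\poly{T_{\M''}}=\pi(S)$ — sound, hence the principal typing of $\N_{i+1}$.

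\textbf{Part (3) and the main obstacle.} For the reassembly, Part~(1) computes the principal typings of the one-node components $\M_1,\dots,\M_n$ of $\N_0=\Break{\N}$ in time $\sum_j\bigOO{2^{\exDim{\M_j}}}\leqslant n\cdot 2^{\bigOO{\Index{\sigma}}}$ (as $\exDim{\M_j}\leqslant\Index{\N_0}\leqslant\Index{\sigma}$), and $m$ successive applications of Part~(2) carry the principal typing from $\N_i$ to $\N_{i+1}=\bind{b_{i+1}}{\N_i}$, each costing $2^{\bigOO{\Index{\N_i}}}\leqslant 2^{\bigOO{\Index{\sigma}}}$; the total $(n+m)\cdot 2^{\bigOO{\Index{\sigma}}}$ collapses to the stated $m\cdot 2^{\bigOO{\Index{\sigma}}}$ when $\N$ is connected (then $n\leqslant m+1$), or after handling connected components of $\N$ separately, and everything uses only $\{\max,\min,+,-\}$ on arc capacities. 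The genuinely hard point is the realizability direction of the identity for $\phi_\M(t)$: for an arbitrary polytope the two-dimensional shadow in the directions of the single coordinate $x_{a^{+}}$ and the aggregate $\chi_{A_1}$ need not be cut out by just those two inequalities, so the proof must use the network structure — most cleanly via a flow-augmentation/exchange argument inside $\M$, equivalently from the sub/supermodular behaviour of $B\mapsto M^{+}(B)$, showing that $f(a^{+})$ and the $\chi_{A_1}$-aggregate can be driven to their extremes independently. Carrying soundness through the binding step rests on the same lemma, and this is where the real work lies.
\end{sketch}
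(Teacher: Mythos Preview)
Your Part~1 is correct and close to the paper's approach; the paper gives a direct closed form for each $T(A)$ (its Algorithm~\ref{alg:oneNodePT}: $r_2=\min\bigl\{\sum\uc(A_{\text{in}})-\sum\lc(A_{\text{out}}),\ \sum\uc(B_{\text{out}})-\sum\lc(B_{\text{in}})\bigr\}$ and symmetrically for $r_1$), whereas you parametrize by the total throughput $F$, but both land at $\bigOO{2^d}$, and your soundness argument (flow conservation from $T(\aaa)=[0,0]$, capacities from the singleton types) is if anything cleaner than the paper's informal case analysis in Lemma~\ref{prop:one-node-principal-typings}.

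Your Part~2 takes a genuinely different route from the paper, and the difference is exactly where your acknowledged gap sits. The paper's Algorithm~\ref{alg:effbind} ($\BindTone$) does \emph{not} optimize over the spliced-arc value $t$ at all. Its key observation is that, since $T(\aaa_{\text{in,out}})=[0,0]$ already holds, imposing $a^{+}=a^{-}$ is equivalent to imposing the single constraint $\sum\aaa'_{\text{in}}=\sum\aaa'_{\text{out}}$ on the remaining arcs $\aaa'_{\text{in,out}}=\aaa_{\text{in,out}}\setminus\{a^{+},a^{-}\}$. So the algorithm (i)~restricts $T$ to $\power{\aaa'_{\text{in,out}}}$, (ii)~overwrites the type at $\aaa'_{\text{in,out}}$ to $[0,0]$, and (iii)~re-tightens by replacing each $T_2(A)$ with $T_2(A)\cap\bigl(-T_2(B)\bigr)$ for the complementary $B$ (Lemma~\ref{lem:necessity} justifies this as the canonical tightening, and Lemma~\ref{lem:efficient-binding} shows it exhausts the slack). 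When $a^{+},a^{-}$ lie in different components, the paper first forms the total parallel addition $\TotAdd{U_i}{U_j}$ (Lemma~\ref{lem:total-addition}) and then applies the same three steps. Everything is interval addition and intersection --- no parametric analysis, no piecewise-linear maximization over $t$.

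This sidesteps precisely the obstacle you flag. Your identity $\phi_\M(t)=\min\bigl(M^{+}(A_1),\ M^{+}(A_1\cup\{a^{+}\})-t\bigr)$ is false for arbitrary polytopes (a two-dimensional shadow can have many facets), and while it is plausibly true for principal network typings --- the feasible IO region of a capacitated flow network is a generalized polymatroid, and for those fixing one coordinate does yield a restriction with rank function $\min\bigl(b(A),\,b(A\cup\{e\})-t\bigr)$ --- establishing that is real work, as you say. The paper never needs it because it never isolates $t$. If you want to complete your own route you would first have to prove the submodularity of $B\mapsto M^{+}(B)$; the paper's route is shorter and avoids the issue entirely.

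Part~3 is essentially the same in both treatments: iterate Part~2 along the schedule, inserting a parallel-addition step when the two arc-halves sit in distinct components (the paper packages this as $\BindT$ and Algorithm~\ref{alg:basic-compositional}), for a total of $(n+m)\cdot 2^{\bigOO{\Index{\sigma}}}=m\cdot 2^{\bigOO{\Index{\sigma}}}$ when $\N$ is connected.
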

\noindent
Part 3 in Theorem~\ref{fact:principal-typing-build-up} follows
from parts 1 and 2. It implies that,
to minimize the time to compute a principal typing for network
$\N$, we need to % find a binding schedule $\sigma$ that minimizes
minimize $\Index{\sigma}$. There are natural network topologies for which
there is a binding schedule $\sigma$ with constant or slow-growing 
$\Index{\sigma}$ as a function of $m$ and $n$. 
Section~\ref{sect:max+min-flows-in-planar-networks} is an example.

\vspace{-.1in}
\section{An Application: Max-Flows and Min-Flows in Planar Networks}
   \label{sect:max+min-flows-in-planar-networks}
   %%% max+min-flows-in-planar-networks

\Hide{
In recent years, there has been a flurry of new algorithms for
computing max flows in graphs. They consider
a wide range of restrictions and generalizations
on graphs (directed or undirected~\cite{Italiano:2011:IAM:1993636.1993679}, 
general or planar~\cite{DBLP:journals/corr/abs-1210-4811},
single-commodity or multi-commodity~\cite{DBLP:journals/corr/abs-1304-2338},
single source-sink~\cite{borradaileKlein2009,10.1109/FOCS.2012.66} or
multiple sources-sinks~\cite{borradaileKlein2011}) 
and rely on a large assortment
of algorithmic techniques~\cite{Orlin2013,borradaileKlein2011,LeeRS13}.
Some produce \emph{exact} max flows, others 
\emph{approximate} max flows~\cite{DBLP:journals/corr/abs-1304-2077,%
DBLP:journals/corr/abs-1304-2338}. 
They all achieve nearly
linear time in the size of graphs -- 
but not quite linear, unless arc capacities obey 
restrictions~\cite{EisenstatK13,LeeRS13}. }
There is a wide range of algorithms for the max-flow problem.
Some produce \emph{exact} max-flows, others 
\emph{approximate} max-flows~\cite{DBLP:journals/corr/abs-1304-2077,%
DBLP:journals/corr/abs-1304-2338}. 
They all achieve nearly
linear time in the graph size -- 
but not quite linear, unless arc capacities obey 
restrictions~\cite{EisenstatK13,LeeRS13}. 
A recent result is the following: \emph{There exists an
algorithm that solves the max-flow problem with multiple sources
and sinks in an $n$-node directed planar graph in $\bigOO{n \log^3 n}$
time}~\cite{borradaileKlein2011}.

Our type-based compositional approach offers an alternative,
with other benefits unrelated to algorithm run-time.  
% A network $\N$ is \emph{planar} if it can be embedded in the plane, \ie,
% it can be drawn in the plane without any two arcs crossing each
% other. 
Assume $\N$ is given with a planar embedding already, \ie, we do
not have to compute the embedding (which can be computed in linear
time in any case~\cite{patrignani2013}). 
Every planar embedding of an undirected
graph has an \emph{outerplanarity} $\geqslant 1$,
% ~\cite{Kammer2007}, 
and so does therefore the network $\N$ by considering its underlying
graph where all arcs directions are ignored.%
    \footnote{Also ignoring two parallel arcs resulting from 
              omitting arc directions in two-node cycles.}
The planar embedding of $\N$ has \emph{outerplanarity} $k\geqslant 1$
if deleting all the nodes on the unbounded face leaves an
embedding of outerplanarity $(k-1)$.%
    \footnote{The ``outerplanarity index'' and the ``outerplanarity''
       of an undirected graph $G$ are not the same thing.
       The \emph{outerplanarity index} of $G$ is the smallest integer
       $k\geqslant 1$ such that $G$ has a planar embedding of
       outerplanarity $= k$.  To compute the outerplanarity index of
       $G$, and produce a planar embedding of $G$ of
       outerplanarity $=$ its outerplanarity index, is not a
       trivial problem, for which the best known algorithm
       requires quadratic time $\bigOO{n^2}$ in
       general~\cite{Kammer2007} -- which, if we used it to
       pre-process the input to our algorithm in
       Theorem~\ref{fact:principal-typing-of-planar}, would upend its
       final linear run-time.  On the other hand, a planar embedding of $G$
       of outerplanarity = $4$-approximation of its outerplanarity index can
       be found in linear time, also shown in~\cite{Kammer2007}.
       Although an interesting problem, we do not bother with finding
       a planar embedding of $\N$ of outerplanarity matching its
       outerplanarity index. It is worth noting that for a tri-connected planar
       network, ``outerplanarity'' and ``outerplanarity index''
       are the same measure, since the planar embedding of a tri-connected
       graph is unique (\cite{dattaPrakriya2011} or
       Section 4.3 in~\cite{diestel2012}).  
       However, there are very simple examples of bi-connected, 
       but not tri-connected, planar networks with planar
       embeddings of arbitrarily large outerplanarity but whose
       outerplanarity index is as small as $1$. }
An example of a $1$-outerplanar embedding is the network
shown in Figure~\ref{fig:disassemble-and-reassemble}.
Based on Theorem~\ref{fact:principal-typing-build-up}, we can prove:

\begin{theorem}[Principal Typing of Planar Network]
  \label{fact:principal-typing-of-planar} 
     If an $n$-node network $\N$ is given in a planar embedding
     of outerplanarity $k\geqslant 1$, with $p\geqslant 1$ input ports
     and $q\geqslant 1$ output ports, we can compute a principal typing for
     $\N$ in time $\bigOO{n}$ where the hidden multiplicative
     constant depends only on $k$, $p$ and $q$.
%    $n\cdot 2^{\bigOO{\delta}}$ where
%    $\delta\leqslant\max\,\Set{\,6,\; 3 + 2\cdot k\,}$.
%    and $m = \size{\aaa_{\text{\#}}}$.
\end{theorem}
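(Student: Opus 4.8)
The goal is to show that for a planar $n$-node network $\N$ with a given embedding of outerplanarity $k$ and external dimension $\ell = p+q$, a principal typing can be computed in $\bigOO{n}$ time with the constant depending only on $k,p,q$. By Theorem~\ref{fact:principal-typing-build-up}(3), it suffices to exhibit a binding schedule $\sigma$ with $\Index{\sigma}$ bounded by a function of $k,p,q$ alone, since then the total time is $m \cdot 2^{\bigOO{\Index{\sigma}}} = \bigOO{n}$ (as $m = \size{\aaa_{\text{\#}}} = \bigOO{n}$ for planar graphs). So the whole theorem reduces to a combinatorial/geometric fact about planar embeddings: \emph{a $k$-outerplanar embedding admits a disassembly/reassembly order in which every intermediate component has external dimension $O(k)$ (plus the global $p+q$ input/output arcs that may be distributed among components).}

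\textbf{Key steps.} First I would use the layered structure of the $k$-outerplanar embedding: peeling off outer faces iteratively gives $k$ ``rings'' of nodes $R_1, R_2, \ldots, R_k$, each $R_i$ being (roughly) a union of cycles, with arcs running within a ring and between consecutive rings $R_i, R_{i+1}$. Second, I would impose a sweep order on the reassembly that respects this layering: reassemble ring by ring from the innermost outward (or process each ring along a cyclic sweep), so that at any moment the ``active boundary'' — the set of already-reconnected nodes that still have a disconnected arc-half dangling to a not-yet-processed node — has size $\bigOO{k}$. Concretely, within a single ring one sweeps around the cycle, keeping only a constant-size window of ``frontier'' nodes open; between rings, the interface consists of the arcs crossing from ring $i$ to ring $i{+}1$, and a careful sweep keeps the number of simultaneously-open such crossing arcs bounded by $\bigOO{k}$. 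Third, I'd observe that the original input/output arcs $\aaa_{\text{in}} \cup \aaa_{\text{out}}$ contribute at most $p+q$ extra external arcs to whichever components currently contain them, so $\Index{\sigma} \leqslant \bigOO{k} + p + q$, a constant. Fourth, I would argue the schedule itself is computable in $\bigOO{n}$ time: finding the outer face and peeling it is linear per layer (and there are $\bigOO{n}$ node-deletions total), and producing the sweep order within the resulting layered structure is a linear-time graph traversal. Finally, invoking Theorem~\ref{fact:principal-typing-build-up}(3) with this $\sigma$ closes the argument.

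\textbf{Main obstacle.} The delicate point is the bound on $\Index{\sigma}$ across ring \emph{boundaries} and at branch points of the ring structure. A $k$-outerplanar embedding's rings need not be single cycles — they are unions of cycles glued at cut vertices, and the ``bridge'' arcs between $R_i$ and $R_{i+1}$ can be unevenly distributed. The challenge is to design the sweep so that whenever we move from processing one biconnected block (or one cycle of a ring) to the next, the number of open crossing arcs — those that connect an already-assembled component to the not-yet-processed remainder — stays $\bigOO{k}$ rather than growing with the block's size. I expect this to require a separator-style argument: because the embedding has only $k$ layers, any ``vertical'' cut across the rings meets $\bigOO{k}$ arcs, and one shows that the sweep can always be arranged so its current frontier is (contained in) such a vertical cut. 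The remaining steps — linear-time peeling, the $p+q$ bookkeeping, and the final appeal to Theorem~\ref{fact:principal-typing-build-up} — are routine by comparison.
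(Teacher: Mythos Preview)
Your high-level strategy is right and matches the paper's: reduce to Theorem~\ref{fact:principal-typing-build-up}(3) by exhibiting a binding schedule $\sigma$ with $\Index{\sigma}$ bounded in terms of $k,p,q$, exploiting the layered structure of the $k$-outerplanar embedding via a sweep whose frontier is an $O(k)$-size ``vertical cut'' across the layers.

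There is, however, a genuine gap. Recall that $\Index{\sigma} \geqslant \Index{\N_0}$, and $\N_0 = \Break{\N}$ consists of one-node components whose external dimension equals the node's degree. A $k$-outerplanar graph --- even a $1$-outerplanar one, think of a fan --- can have a node of arbitrarily large degree, so $\Index{\N_0}$ is \emph{not} bounded by any function of $k,p,q$. Your sweep controls the number of frontier \emph{nodes}, but a single high-degree frontier node contributes many dangling arc-halves, and it is arc-halves that $\exDim{\cdot}$ counts. No reassembly order can repair this: the blow-up is already present at stage $0$, before any binding happens.

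The paper closes this gap with a preprocessing step, Lemma~\ref{lem:from-arbitrary-to-3-regular}: every node of degree $d \geqslant 4$ is replaced by a $d$-cycle of degree-$3$ nodes (with trivially-capacitated cycle arcs), yielding a $3$-regular network with the same principal typing. The non-trivial part (part~5 of that lemma) is that this at most doubles the outerplanarity; the paper proves it by an inductive onion-peel argument. A second, lighter preprocessing step (Lemma~\ref{lem:good-embeddings}) inserts dummy arcs so that each peeling layer $L_i'$ is a single simple cycle --- this is precisely what dissolves your ``main obstacle'' about rings being unions of cycles glued at cut vertices. Once both reductions are in place, every one-node component has external dimension $3$, and the paper's explicit sweep (Algorithm~\ref{alg:bindSchedule}) keeps the growing piece $\PP$ at $\exDim{\PP} \leqslant 2k+2$ throughout, giving $\Index{\sigma} \leqslant 2k+4+p+q$. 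Your vertical-cut intuition is exactly what drives that main iteration; what was missing was the degree-reduction that makes the intuition sound.
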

\noindent
If the algorithm in Theorem~\ref{fact:principal-typing-of-planar} is
made to work on the planar embedding of a $3$-regular network $\N$
(\eg, the network in Figure~\ref{fig:disassemble-and-reassemble}), it
proceeds by disassembling and reassembling $\N$ in a manner to
minimize the interface dimension of all components in intermediate
stages (as in Figure~\ref{fig:disassemble-and-reassemble}). The proof
of Theorem~\ref{fact:principal-typing-of-planar} is a simple
generalization of this operation to the $3$-regular planar embedding
of an arbitrary planar $\N$, based on: % the following lemma:
\begin{lemma}
[From Arbitrary Networks to 3-Regular Networks]
\label{lem:from-arbitrary-to-3-regular}
Let $\N = (\nn,\aaa)$ be a flow network, not necessarily planar.
In time $\bigOO{n}$, we can transform $\N$ into a similar network
$\N' = (\nn',\aaa')$ such that:
\begin{enumerate}[itemsep=0pt,parsep=2pt,topsep=5pt,partopsep=0pt] 
\item There are no two-node cycles in $\N'$.
\item The degree of every node in $\N'$ is $3$.
\item Every typing 
      $T:\power{\aaa_{\text{\rm in,out}}}\to\intervals{\reals}$
      is principal for $\N$ iff $T$ is principal for $\N'$.
\item $\size{\nn'} \leqslant 2m$\ and
    \ $\size{\aaa'} \leqslant 3m$,
    \ where $m = \size{\aaa}$. % and $n = \size{\nn}$.
\item If $\N$ is given in a $k$-outerplanar embedding,
      $\N'$ is returned in a $k'$-outerplanar embedding with  
      $k'\leqslant 2k$.
\end{enumerate}
\end{lemma}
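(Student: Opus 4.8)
\textit{Plan.} I would obtain $\N'$ by a single \emph{vertex‑truncation} (blow‑up) applied to every node of $\N$, preceded by a linear‑time preprocessing that first removes all nodes of degree $\le 2$. In the preprocessing: a node of degree $1$, or of degree $2$ whose two incident arcs point the same way, forces those arcs to carry $0$ flow by conservation at that node, so delete it together with its incident internal arcs and iterate; a genuine pass‑through node $v$ of degree $2$ (in‑arc $a$, out‑arc $b$) is removed by contracting $a,b$ into one arc $c$ with $\lc(c):=\max(\lc(a),\lc(b))$, $\uc(c):=\min(\uc(a),\uc(b))$, with $c$ inheriting the identity of whichever of $a,b$ is an input/output arc — conservation at $v$ forces $f(a)=f(b)$, so this is a bijection of feasible flows. (A contraction producing parallel internal arcs $x\Rightarrow y$ is cleaned up by collapsing the bundle to one arc with the summed capacities, harmless since the principal typing only constrains I/O arcs; a component that collapses to a bare directed path from an input arc to an output arc is replaced by a fixed constant‑size $3$‑regular gadget with the right single bottleneck.) All of this is $\bigOO{n}$ time, preserves the I/O arc set and the principal typing, never raises outerplanarity, and only decreases $\size{\aaa}$; so from now on I assume every node of $\N$ has degree $\ge 3$, with $m$ still the \emph{original} arc count.

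\textit{Truncation.} At each node $v$, with incident arcs $a^{(1)},\dots,a^{(d)}$ in the rotational order of the embedding ($d\ge 3$), I install $d$ new nodes $a^{(1)}_v,\dots,a^{(d)}_v$ joined by a directed cycle $a^{(1)}_v\!\to\! a^{(2)}_v\!\to\!\cdots\!\to\! a^{(d)}_v\!\to\! a^{(1)}_v$ (the gadget $G_v$, drawn in a small disc around $v$), every such arc given capacities $[0,C]$ with $C:=1+\sum_{a\in\aaa}\uc(a)$. Each internal arc $a=(v,w)$ becomes one radial arc $a_v\to a_w$ keeping $[\lc(a),\uc(a)]$; each input arc $(x,v)$ becomes $(x,a_v)$ and each output arc $(v,y)$ becomes $(a_v,y)$, keeping identity and capacities. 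Every $a^{(i)}_v$ has exactly two gadget‑cycle arcs plus one radial/IO arc, so $\N'$ is $3$‑regular (item~2); the gadget cycles have length $\ge 3$ and radial/IO arcs are single, so a short check gives no two‑node cycle and no parallel arcs (a two‑node cycle of $\N$ becomes a four‑node cycle of $\N'$), i.e.\ item~1. Counting: $\size{\nn'}=\sum_v\degree{v}=\size{\aaa_{\text{\#}}}+\size{\aaa}\le 2m$ and $\size{\aaa'}=\sum_v\degree{v}+\size{\aaa_{\text{\#}}}+\size{\aaa_{\text{in}}}+\size{\aaa_{\text{out}}}=3\size{\aaa_{\text{\#}}}+2(\size{\aaa_{\text{in}}}+\size{\aaa_{\text{out}}})\le 3m$, which is item~4; the construction is plainly linear time.

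\textit{Item~3 (the crux).} By Proposition~\ref{fact:equivalence-of-principal-typings} and Theorem~\ref{fact:existence-and-uniqueness}, and since $\N$ and $\N'$ share the same I/O arc set, it suffices that they admit the same I/O assignments extendable to feasible flows. From a feasible flow $f$ on $\N$ I build $f'$ on $\N'$: copy $f$ on the I/O arcs, set each radial arc to the value of its internal arc, and on the cycle arcs of each $G_v$ use the one‑parameter family of circulations on the directed cycle to match at $a^{(i)}_v$ the signed external supply $\pm f(a^{(i)})$ (sign according to whether $a^{(i)}$ points into or out of $v$); these supplies sum to $0$ by conservation of $f$ at $v$ and have total absolute value $<C$, so a circulation with all cycle‑arc values in $[0,C]$ exists. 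Conversely, reading a feasible flow $f'$ on $\N'$ off the radial and I/O arcs yields a feasible flow on $\N$: capacities are inherited, and summing the conservation equations of $f'$ over the nodes of $G_v$ cancels the cycle‑arc terms and leaves conservation at $v$. Both directions fix the I/O arcs, so $\poly{T}$ is the same for $\N$ and $\N'$, and the unique principal typings coincide.

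\textit{Item~5, and the main obstacle.} The truncation is a standard planarity‑preserving operation ($G_v$ sits in a disc at the old location of $v$, radials leaving in rotational order). One peel of the outer face of $\N'$ strips from each gadget of an outer‑layer node of $\N$ exactly the contiguous arc of gadget‑nodes that borders the outer face of $\N$; a second peel removes the rest of those gadgets (the remaining path of the gadget cycle is now exposed, since that cycle bounds only an empty face), while the radial arcs joining layer $i$ to layer $i{+}1$ are interior and survive both peels, so deeper gadgets stay untouched. Hence two peels of $\N'$ realise one peel of $\N$ (blown up again), and induction gives $k'\le 2k$. I expect item~3 to be the real work — specifically producing the feasible extension of the gadget‑cycle flows, which is exactly what forces the ``large capacity $C$ plus circulation on a directed cycle'' device — with the preprocessing book‑keeping (parallel arcs, fully collapsing components, and checking the I/O arc set is untouched) the other genuinely finicky point.
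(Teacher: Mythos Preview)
Your approach is essentially the same vertex blow-up/truncation idea as the paper's, but you package it a bit differently, and the differences are worth noting. The paper first eliminates two-node cycles by a \emph{separate} gadget (subdividing both arcs of the cycle and joining the midpoints by a dummy arc with capacities $[0,0]$), and only then replaces nodes of degree $\geqslant 4$ by a directed cycle with capacities $[0,K]$; degree-$3$ nodes are left untouched. You instead truncate \emph{every} node (after reducing to degree $\geqslant 3$), which kills two-node cycles for free --- this is arguably cleaner and makes your counts for item~4 exact rather than requiring the paper's extra bookkeeping for the dummy arcs. Your feasibility argument for item~3 (explicit $C = 1 + \sum_a \uc(a)$, circulation on the gadget cycle) is the same as the paper's but with a concrete constant in place of the paper's unspecified ``very large'' $K$.

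The one place where your write-up is thinner than the paper's is item~5. The paper isolates this as a separate lemma (Lemma~\ref{lem:outerplanarity-increased-by-factor-of-atmost-2}) and proves it by an explicit induction on $k$: it detaches the first peel of $G$ from the rest via ``hooks'' and ``open edges'', truncates the two pieces separately (getting outerplanarities $\leqslant 2$ and $\leqslant 2k$ by induction), and then re-splices. Your ``two peels of $\N'$ realise one peel of $\N$'' sketch captures the same intuition, and your observation that the gadget's interior face opens to the outside after the first peel is the right mechanism; but you should be aware that after two peels of $\N'$ what remains is \emph{not} literally the truncation of the once-peeled $\N$ (the layer-$2$ gadgets still have their full original length, just with some radial stubs removed), so the recursion is not quite the self-similar one you suggest. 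It still works --- the remaining layer-$2$ gadget nodes are all exposed within two further peels for the same reason --- but making that precise is exactly the bookkeeping the paper's hook/open-edge formalism is doing. Your preprocessing also has a loose end: a degree-$1$ node whose single incident arc is an input/output arc cannot simply be deleted (you would lose that arc from $\aaa_{\text{in,out}}$), and your ``constant-size $3$-regular gadget'' escape hatch needs to cover that case too. The paper dodges this by adding a single auxiliary node $\nu_{\text{in,out}}$ as the common tail/head of all input/output arcs before doing anything else.
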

\noindent
The proofs for parts 1-4 in Lemma~\ref{lem:from-arbitrary-to-3-regular}
are relatively straightforward, only the proof for part 5 is
complicated.  An immediate consequence of
Theorem~\ref{fact:principal-typing-of-planar} is: \emph{For every $k,
p, q\geqslant 1$, we have an algorithm which, given an arbitrary
$n$-node network $\N$ in a planar embedding of outerplanarity $\leqslant k$ and
external dimension $\leqslant p+q$, simultaneously computes a max-flow
value and a min-flow value in time $\bigOO{n}$.}

\Hide{
This result, and Fact~\ref{fact:principal-typing-of-planar}
which implies it, appear in our
recent technical report~\cite{kfouryMirzaei:2013} and are yet to be
submitted for publication.
}

\vspace{-.1in}
\section{Extensions and Future Work}
   \label{sect:extensions-and-future}   
   %% future.tex

There are natural generalizations that will provide
material for a more substantial comparison with other
approaches. Among such generalizations:
\begin{enumerate}[itemsep=0pt,parsep=2pt,topsep=0pt,partopsep=0pt] 
\item Adjust the formal framework 
      to handle the commonly-considered cases of: 
      \begin{itemize}[itemsep=0pt,parsep=0pt,topsep=0pt,partopsep=0pt]  
      \item \emph{multicommodity flows}
      (formal definitions in~\cite{1993orlin}, Chapt. 17)
      \item \emph{minimum-cost flows}, \emph{minimum-cost max flows}, and
      variations (formal definitions in~\cite{1993orlin}, Chapt. 9-11)
      \end{itemize}
      These cases introduce new kinds of linear constraints, often more general
      than \emph{flow-conservation} equations and 
      \emph{capacity-constraint} inequalities (as in this report),
      where all coefficients are $+1$ or $-1$.
\item Identify natural \emph{network topologies} that are amenable to the
      kind of examination we already applied to \emph{planar networks}, 
      following the presentation in 
      Section~\ref{sect:max+min-flows-in-planar-networks}
      and leading to similar results.
\end{enumerate}
Beyond commonly-studied generalizations and topologies, there are a 
number of questions more directly related to the fine-tuning of our
approach and/or to the modeling and analysis of networking systems.

Among such questions  
are practical situations where flows are regulated
by \emph{non-linear} constraints. For example, a common
case is that of a non-linear \emph{convex cost function}
which may or may not be transformed into a \emph{piecewise linear
cost function} (Chapt. 14 in~\cite{1993orlin}). 
Another example is provided by \emph{mass conservation},
more general than \emph{flow conservation}: If $\delta(a)$
denotes the ``density'' of the flow carried by arc/channel
$a\in\aaa$ and $v(a)$ the ``velocity'' at which it travels along
$a$, then \emph{mass conservation} at node $\nu\in\nn$ is expressed
as the non-linear constraint:
  \(
   \sum\Set{\delta(a)\cdot v(a)\,|\,\text{arc $a$ enters node $\nu$}}
     = \sum\Set{\,\delta(b)\cdot v(b)\;|\;\text{arc $b$ exits node $\nu$}\,} ,
  \)
which is equivalent to flow conservation at node $\nu$
only when velocity $v$ is uniformly the same on all arcs.%
    \footnote{Velocity is measured in \emph{unit distance}/\emph{unit time},
    \eg, \emph{mile}/\emph{hour}. Density is measured in
    \emph{unit mass}/\emph{unit distance}, \eg,
    \emph{ton}/\emph{mile}. Hence, the value of $d(a)\cdot v(a)$ on
    arc $a$ is measured in \emph{unit mass}/\emph{unit time}, \eg,
    \emph{ton}/\emph{hour}, commonly called the \emph{mass flow} on $a$.
    }

We leave all of the preceding questions for future research.
Below we mention three specific directions under current investigation.

% \vspace{-.1in}
\paragraph{Leaner Representations of Principal Typings.}

We have not eliminated all redundancies in our representation of
principal typings. The presence of these redundancies increases the
run-time of our algorithms, as well as complicates the process of
inferring typings and reasoning about them.  For example, by
Lemma~\ref{lem:necessity} in Appendix~\ref{sect:appendix:disassemble},
if $A\uplus B = \aaa_{\text{\rm in,out}}$ is a two-part partition of
$\aaa_{\text{\rm in,out}}$ and 
$T:\power{\aaa_{\text{\rm in,out}}}\to\intervals{\reals}$ is the
principal typing (as defined in this report) for a network $\N$ with
input/output arcs $\aaa_{\text{\rm in,out}}$, then $T(A) = -T(B)$,
which means that at most one of the two intervals/types, $T(A)$ and
$T(B)$, is necessary for defining $\poly{T}$.

For two concrete examples, consider typings $T_1$ and $T_2$ in
Examples~\ref{ex:simplest-1} and~\ref{ex:simplest-2} in 
Appendix~\ref{sect:appendix:basic}. These are principal typings.
In addition to the interval/type assigned to $\varnothing$ and
$\Set{a_1,a_2,a_3,a_4}$, which is always $[0,0]$, the remaining
$14$ intervals/types are ``symmetric'', in the sense that
$T(A) = -T(B)$ whenever $A\uplus B = \Set{a_1,a_2,a_3,a_4}$.
Hence, $7$ type assignments will suffice to uniquely define
$\poly{T_1}$ and $\poly{T_2}$. 

This is a general fact: For a network $\N$ of external dimension
$\exDim{\N} = 4$, at most $7$ non-zero interval/type assignments are required
to uniquely define a tight, sound, and complete, typing for $\N$. 
Hence, we need a weaker requirement than ``locally total''
to guarantee uniqueness of a new notion of ``principal typing''.

\vspace{-.1in}
\paragraph{Augmenting Typings for Resource Management.}

In recent years, programming-language theorists have augmented
types and typings to enforce more than safety invariants. In particular,
there are strongly-typed functional languages (mostly experimental now)
where types encode information related to \emph{resource management}
and \emph{security guarantees}. The same can be done with our network typings
encoded as polytopes.

For example, researchers in traffic engineering consider
\emph{objective functions} (to be optimized relative to such
constraints as ``flow conservation'', ``capacity constraints'', and
others), which also keep track of uses of resources (\eg,
see~\cite{BalonLeduc:icon2006}).  Possible measurements of resources
are -- let $\lc(a) = 0$ for all arcs $a$ for simplicity:
\begin{enumerate}[itemsep=0pt,parsep=2pt,topsep=2pt,partopsep=0pt] 
\item \emph{Hop Routing} (HR). The hop-routing value of $f$
     is the number of channels $a\in\aaa$ such that $f(a)\neq 0$.
\item \emph{Channel Utilization} (CU). The utilization
    of a channel $a$ is defined as $u(a) = f(a)/\uc(a)$.
\item \emph{Mean Delay} (MD). The mean delay
    of a channel $a$ can be measured by $d(a) = 1/(\uc(a)-f(a))$.
\end{enumerate}
  HR, CU, and MD, can be taken as objectives to be minimized, along
  with flow to be maximized, resulting in a more complex optimization
  problem. But HR, CU, and MD, can also be taken 
  as measures to be composed across interfaces, requiring a different
  (and simpler) adjustment of our network typings.

\vspace{-.1in}
\paragraph{Sensitivity (aka Robustness) Analysis.}

A key concept in many research areas is \emph{function robustness} or,
by another name, \emph{function sensitivity}.%
     \footnote{For example, in \emph{differential privacy}, 
               one of the most common mechanisms for turning a
               (possibly privacy-leaking) query into a differentially 
                private one involves establishing a bound on its 
                sensitivity.}
The concept appears in programming-language studies,
which more directly informed our own
work~\cite{Chaudhuri:2012:CRP:2240236.2240262,
        Chaudhuri:2011:PPR:2025113.2025131}: 
A program is said to be \emph{$K$-robust} if an $\varepsilon$-variation of
the input can cause the output to vary by at most $\pm K\varepsilon$.
More recently, a type-based approach to robustness analysis of
functional programs was introduced and shown to offer additional
benefits~\cite{D'Antoni:2013:SAU:2505351.2505353}.

The counterpart of function robustness in network-flow problems
is \emph{sensitivity analysis}, which has a longer
history~(Chapt. 9-11 in~\cite{1993orlin},
% Chapt. 6 in~\cite{2010bazaraa}, 
Chapt. 3 in \cite{1977bradley}). The purpose
is to determine changes in the optimal solution of a % minimum cost 
flow problem resulting from small changes in the data (supply/demand vector
or the capacity or cost of any arc). There are
two basic different ways of performing sensitivity analysis in relation
to flow problems (Chapt. 9 in~\cite{1993orlin}): 
(1) using combinatorial methods and (2) using simplex-based methods 
from linear programming -- and both are essentially \emph{whole-system}
approaches. 

A natural outgrowth from these earlier studies will be to adapt them to our
particular type-based \emph{compositional} approach.  In particular,
in our case, robustness analysis should account for the effects of,
not only small changes in network parameters, but also complete
break-down of an arc/channel or a subnetwork -- and, preferably, in
such a way as to not obstruct efficient inference of network typings.

\Hide{

 The fundamental question is how to measure
robustness, parametrize typings with this measure, and compose it
across network interfaces.

}

\ifTR
  \clearpage
\else
\fi

{\footnotesize % \small
   \bibliographystyle{plain}
   % \bibliographystyle{plainnat} % {plainnat}
   % `plainnat' is natbib's own version of `plain'
   % {alpha} % {siam} % {abbrv} % {plain} %
   % plain, abbrv, siam, alpha, are among dozens of other available styles
\bibliography{./generic,./extra}
}

\ifTR
  \clearpage
\else
\fi

\appendix
\section{Appendix: Further Comments and Examples  
         for \hyperref[sect:flow-networks]{Section~\ref*{sect:flow-networks}}
         and \hyperref[sect:valid-vs-principal]%
             {Section~\ref*{sect:valid-vs-principal}}}
  \label{sect:appendix:basic}
  %% appendixForBasicDefinitions.tex

Our notion of a network ``typing'' as an assignment of intervals/types
to members of a powerset resembles in some ways, but is still
different from, the notion of a ``typing'' (different from a ``type'')
in the study of strongly-typed programming languages. This is quite
apart from the differences in syntactic conventions -- the first from
vector spaces and polyhedral analysis, the second from first-order
logic. In strongly-typed programming languages, a ``typing'' refers to
the result of what is called a ``derivable typing judgment'' and
consists of: a program expression $M$; a type (not a typing) assigned
to $M$ and, at least implicitly, to every well-formed subexpression of
$M$; \emph{and} a type environment that includes a type for every
variable occurring free in $M$.

Report~\cite{Jim:POPL-1996} and the longer~\cite{Jim:MIT-LCS-1995-532}
are at the origin of this notion of ``typing'' in programming
languages.  These two reports also discuss the distinction between
``modular'' and ``compositional'', in the same sense we explained in
Section~\ref{sect:intro}, but now in the context of type inference for
strongly-typed functional programs. The notion of a ``typing'' for a
program, as opposed to a ``type'' for it, came about as a result of
the need to develop a ``compositional'', as opposed to a just
``modular'', static analysis of programs.

\begin{Proof}{for Theorem~\ref{fact:existence-and-uniqueness}}
There are different approaches to proving
Theorem~\ref{fact:existence-and-uniqueness}. One approach relies
heavily on polyhedral analysis and invokes a linear-programming
procedure (such as the simplex) repeatedly. Moreover, it requires a
preliminary collection of all flow-preservation equations (one for
each node) and all capacity-constraint inequalities (two for each
arc), for the given network $\N$, before we can start an analysis to
compute the principal typing of $\N$. This is the approach taken
in~\cite{kfouryCHAR:2012}.

But we can do better here. The alternative is to simply invoke
Theorem~\ref{fact:principal-typing-build-up} later in this report.
This alternative approach does not use any pre-defined
linear-programming procedure, and is also in harmony with our emphasis
on ``compositionality'' -- allowing for partial analyses to be 
incrementally composed. \hfill\QED
\end{Proof}

\begin{terminology}
In all previous articles that informed the work reported in this 
report, we made a distinction between \emph{valid typings} and
\emph{principal typings} (for the same network $\N$). What we called
``valid'' before is what we call ``sound'' here, and what we called
``principal'' before is what we call ``sound and complete'' here.

What we call ``principal'' here is not only ``sound and complete'', but
also satifies uniqueness conditions -- in this report, expressed by the
notions of ``tight'' and ``locally total''.
\end{terminology}

\medskip

The three examples below illustrate several of the concepts in
Sections~\ref{sect:flow-networks} and~\ref{sect:valid-vs-principal}.
These are three similar networks, \ie, they each have $p=2$ input arcs
and $q=2$ output arcs. Their principal typings are generated using
the material in Section~\ref{sect:disassemble-and-reassemble}.

\begin{example}
\label{ex:simplest-1}
Network $\N_1$ is shown on the left in
Figure~\ref{fig:six-and-eight-node},  
% There are 6 equations enforcing
% flow conservation, one for each node in $\N_3$, and $2\cdot 11 = 22$
% inequalities enforcing lower-bound and upper-bound constraints, two
% for each arc in $\N_1$.
%
% In Figure~\ref{fig:six-and-eight-node}, 
where all omitted lower-bound
capacities are $0$  and all omitted upper-bound capacities are $K$.
$K$ is an unspecified ``very large number''.
A min-flow in $\N_1$ pushes $0$ units through,
and a max-flow in $\N_1$ pushes $30$ units. The value of every
feasible flow in $\N_3$ will therefore be in the interval $[0,30]$.
A principal typing $T_1$ for $\N_1$ is such that 
$T_1(\varnothing) = [0,0] =T_1(\Set{a_1,a_2,a_3,a_4})$ % \aaa_{\text{in,out}})$, 
and makes the following type assignments:
\begin{alignat*}{5}
  & \framebox{$a_1:[0,15]$}\quad 
  && \fbox{$a_2:[0,25]$}\quad && \fbox{$-a_3:[-15,0]$}\quad 
  && \fbox{$-a_4:[-25,0]$}
\\
  & \fbox{$a_1+a_2:[0,30]$}\quad && \underline{a_1-a_3:[-10,10]}\quad 
  && \underline{a_1-a_4:[-25,15]}
\\
  & \underline{a_2-a_3:[-15,25]}\quad && \underline{a_2-a_4:[-10,10]}
  && \fbox{$-a_3-a_4:[-30,0]$}\quad 
\\
  & a_1+a_2-a_3: [0,25]\qquad&& a_1+a_2-a_4:[0,15] \qquad
   && a_1-a_3-a_4: [-25,0]\qquad && a_2-a_3-a_4: [-15,0]\quad
% \\
%   & a_1+a_2-a_3-a_4: [0,0]
\end{alignat*}
To explain our notational convention, consider the type assignment
``$a_1+a_2-a_3: [0,25]$'', one of the $14$ non-trivial assignments made
by $T_1$. We write ``$a_1+a_2-a_3: [0,25]$'' to mean that 
$T_1(\Set{a_1,a_2,a_3}) = [0,25]$.
The minus preceding $a_3$ in the expression ``$a_1+a_2-a_3$'' 
indicates that $a_3$ is an output arc, whereas $a_1$ and $a_2$ are input arcs.
The boxed type assignments and the underlined type assignments
are for purposes of comparison with the typing $T_2$ in 
Example~\ref{ex:simplest-2} and typing $T_3$ in 
Example~\ref{ex:simplest-3}.
\end{example}

\begin{example}
\label{ex:simplest-2}
Network $\N_2$ is shown in the middle
in Figure~\ref{fig:six-and-eight-node}.
% There are 8 equations in enforcing flow conservation,
% one for each node in $\N_2$, and $2\cdot 16 = 32$ inequalities
% enforcing lower-bound and upper-bound
% constraints, two for each arc in $\N_2$. 
%
A min-flow in $\N_2$ pushes $0$ units through,
and a max-flow in $\N_2$ pushes $30$ units. The value of all
feasible flows in $\N_2$ will therefore be in the interval $[0,30]$,
the same as for $\N_1$ in Example~\ref{ex:simplest-1}. 
A principal typing $T_2$ for $\N_2$ is such that
$T_2(\varnothing) = [0,0] = T_2(\Set{a_1,a_2,a_3,a_4})$,
and makes the following type assignments:
\begin{alignat*}{5}
  & \fbox{$a_1:[0,15]$}\quad && \fbox{$a_2:[0,25]$}\quad 
  && \fbox{$-a_3:[-15,0]$}\quad && \fbox{$-a_4:[-25,0]$}
\\
  & \fbox{$a_1+a_2:[0,30]$}\quad && \underline{a_1-a_3:[-10,12]}\quad
  && \underline{a_1-a_4:[-23,15]}
\\
  & \underline{a_2-a_3:[-15,23]}\quad && \underline{a_2-a_4:[-12,10]}
  && \fbox{$-a_3-a_4:[-30,0]$}\quad 
\\
  & a_1+a_2-a_3: [0,25]\qquad&& a_1+a_2-a_4:[0,15] \qquad
   && a_1-a_3-a_4: [-25,0]\qquad && a_2-a_3-a_4: [-15,0]\quad
% \\
%   & a_1+a_2-a_3-a_4: [0,0]
\end{alignat*}
In this example and the preceding one,
the type assignments in rectangular boxes are for subsets
of the input arcs $\Set{a_1,a_2}$ and for subsets of the output arcs
$\Set{a_3,a_4}$, but not for subsets mixing input arcs and output arcs.
Note that the boxed assignments are the same for $T_1$
and $T_2$.

The underlined type assignments are among those that mix input and
output arcs. We underline those in
Example~\ref{ex:simplest-1} that are different from the
corresponding ones in this
Example~\ref{ex:simplest-2}. This difference implies there
are IO assignments $f : \Set{a_1,a_2,a_3,a_4}\to\nreals$
extendable to feasible flows in $\N_1$ (resp. in $\N_2$) but not in 
$\N_2$ (resp. in $\N_1$). This is perhaps counter-intuitive, since
$T_1$ and $T_2$ make exactly the same type assignments to input
arcs and, separately, output arcs (the boxed assignments). For
example, the IO assignment $f$ defined by:
\[
    f(a_1) = 15\qquad f(a_2) = 0\qquad f(a_3) = 3\qquad f(a_4) = 12
\]
is extendable to a feasible flow in $\N_2$ but
not in $\N_1$. The reason is that $f(a_1) - f(a_3) = 12$ violates
(\ie, is outside) the type $T_1(\Set{a_1,a_3}) = [-10,10]$.
Similarly, the IO assignment $f$ defined by:
\[
    f(a_0) = 0\qquad f(a_2) = 25\qquad f(a_3) = 0\qquad f(a_4) = 25
\]
is extendable to a feasible flow in $\N_1$ but not in $\N_2$, the
reason being that $f(a_2) - f(a_3) = 25$ violates
the type $T_2(\Set{a_2,a_3}) = [-15,23]$.

From the preceding, neither $T_1$ nor $T_2$ is a
subtyping of the other, in the sense explained in
Section~\ref{sect:valid-vs-principal}. \emph{Neither $\N_1$ nor $\N_2$
can be safely substituted for the other in a larger assembly of networks}.
\end{example}

\begin{example}
\label{ex:simplest-3}
Network $\N_3$ is shown on the right in
Figure~\ref{fig:six-and-eight-node}.  
A principal typing $T_3$ for $\N_3$ is such that 
$T_3(\varnothing) = [0,0] =T_3(\Set{a_1,a_2,a_3,a_4})$ 
and makes the following type assignments:
\begin{alignat*}{5}
  & \framebox{$a_1:[0,15]$}\quad 
  && \fbox{$a_2:[0,25]$}\quad && \fbox{$-a_3:[-15,0]$}\quad 
  && \fbox{$-a_4:[-25,0]$}
\\
  & \fbox{$a_1+a_2:[0,30]$}\quad && \underline{a_1-a_3:[-10,10]}\quad 
  && \underline{a_1-a_4:[-23,15]}
\\
  & \underline{a_2-a_3:[-15,23]}\quad && \underline{a_2-a_4:[-10,10]}
  && \fbox{$-a_3-a_4:[-30,0]$}\quad 
\\
  & a_1+a_2-a_3: [0,25]\qquad&& a_1+a_2-a_4:[0,15] \qquad
   && a_1-a_3-a_4: [-25,0]\qquad && a_2-a_3-a_4: [-15,0]\quad
% \\
%   & a_1+a_2-a_3-a_4: [0,0]
\end{alignat*}
In this example and the two preceding,
the type assignments in rectangular boxes are for subsets
of the input arcs $\Set{a_1,a_2}$ and for subsets of the output arcs
$\Set{a_3,a_4}$, but not for subsets mixing input arcs and output arcs.
Again, the boxed assignments are the same for $T_1$, $T_2$, and
$T_3$. The differences between $T_1$, $T_2$, and
$T_3$ are in the underlined type assignments.

We write $\subT{T_1}{T_3}$ and $\subT{T_2}{T_3}$ to indicate that 
$T_1$ and $T_2$ are \emph{subtypings} of $T_3$, contravariantly
corresponding to the fact that $\Poly{T_1} \supseteq \Poly{T_3}$
and $\Poly{T_2} \supseteq \Poly{T_3}$. In fact, $T_3$ is the least
typing (in the partial order ``$\subTsym$'') such that both $T_1$ and $T_2$
are subtypings of $T_3$, because $T_3(A) = T_1(A)\cap T_2(A)$ for 
every $A\subseteq \Set{a_1,a_2,a_3,a_4}$. To be specific,
the subsets $A$ on which $T_1$ and $T_2$ disagree are:
\[
     \Set{a_1,a_3},\ \ \Set{a_1,a_4},\ \ \Set{a_2,a_3},\ \ \Set{a_2,a_4},
\]
and intersecting the intervals/types assigned by $T_1$ and $T_2$
to these sets, we obtain the following equalities:
\begin{alignat*}{8}
  & T_1(\Set{a_1,a_3})\cap T_2(\Set{a_1,a_3})\ &&=\ \ &&T_3(\Set{a_1,a_3}),
  \qquad
  &&T_1(\Set{a_1,a_4})\cap T_2(\Set{a_1,a_4})\ &&=\ \ &&T_3(\Set{a_1,a_4}),
\\[1ex]
  & T_1(\Set{a_2,a_3})\cap T_2(\Set{a_2,a_3})\ &&=\ \ &&T_3(\Set{a_2,a_3}),
  \qquad
  &&T_1(\Set{a_2,a_4})\cap T_2(\Set{a_2,a_4})\ &&=\ \ &&T_3(\Set{a_2,a_4}),
\end{alignat*}
\emph{Both $\N_1$ and $\N_2$
can be safely substituted for $\N_3$ in a larger assembly of networks} --
provided the non-determinism of flow movement through $\N_1$ and $\N_2$ 
is \emph{angelic}.%
    \footnote{If the non-determinism of flow movement is \emph{demonic},
    as opposed to \emph{angelic}, 
    we need a more restrictive notion of subtyping.
    Further comments in footnote~\ref{foot:angelic-vs-demonic} and
    Appendix~\ref{sect:appendix:future}.}
\end{example}

\begin{figure}[h]
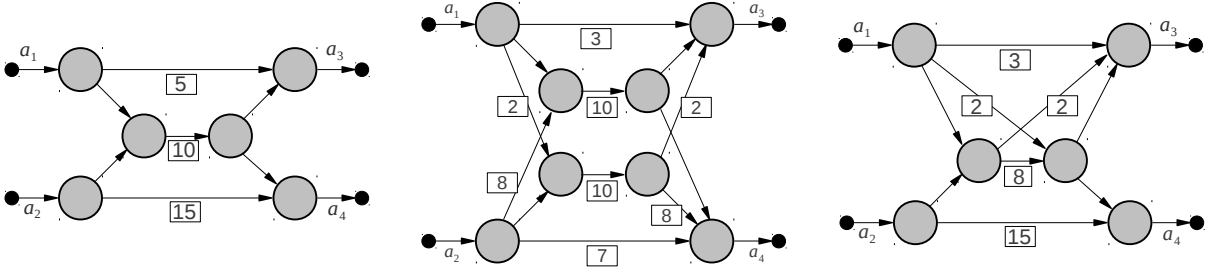
 % [!ht] % [ht]  
        % h  Place the float here, i.e., 
        %    at the same point it occurs in the source text.
        % t  Position at the top of the page.
        % b  Position at the bottom of the page.
        % p  Put on a special page for floats only.
        % !  Override internal parameters Latex uses 
        %    for determining `good' float positions.
\begin{center}
%
% \hspace*{.2in}
\begin{custommargins}{-2.5cm}{-2.5cm}
\begin{minipage}[b]{0.32\linewidth}
         \includegraphics[scale=.25,trim=0cm 12.50cm 1cm 1.5cm,clip]%
         {Graphics/six-node-networkA}
\end{minipage}
\begin{minipage}[b]{0.32\linewidth}
         \includegraphics[scale=.25,trim=0cm 10.50cm 0cm 1.5cm,clip]%
         {Graphics/eight-node-networkA}
\end{minipage}
\begin{minipage}[b]{0.32\linewidth}
         \includegraphics[scale=.25,trim=0cm 11.20cm 0cm 1.5cm,clip]%
         {Graphics/six-node-networkB}
\end{minipage}
\end{custommargins}
\caption{$\N_1$ (on the left) in Example~\ref{ex:simplest-1},
         $\N_2$ (in the middle) in Example~\ref{ex:simplest-2},
         and $\N_3$ (on the right) in Example~\ref{ex:simplest-3},
         are similar networks, \ie, they have the same number of
         input arcs and the same number of output arcs.
         All missing capacities are the trivial lower bound $0$ and
         the trivial upper bound $K$ (a ``very large number''). }
\label{fig:six-and-eight-node}
\end{center}
\end{figure}

\vspace{-.2in}
\section{Appendix: Proofs and Supporting Lemmas  
         for
         \hyperref[sect:disassemble-and-reassemble]%
         {Section~\ref*{sect:disassemble-and-reassemble}}}
  \label{sect:appendix:disassemble}
  %% appendixForDisassemble+Reassemble.tex

% Two preliminary lemmas before the other proofs.
Let $\aaa_{\text{in}} = \Set{a_1,\ldots,a_p}$ and $\aaa_{\text{out}}
= \Set{a_{p+1},\ldots,a_{p+q}}$ be fixed, where $p,q\geqslant 1$.  
Let $T\;:\;\power{\aaa_{\text{\rm in,out}}}\,\to\,\intervals{\reals}$.
% where $\aaa_{\text{in,out}}= \aaa_{\text{in}}\uplus\aaa_{\text{out}}$.  
If $[r,s]$ is an interval of real numbers for some $r\leqslant s$, 
we write $-[r,s]$ to denote the interval $[-s,-r]$.
% , specifically:
% \[
%    -[r,s]\ =\ \Set{\, t\in\reals\;|\; -s\leqslant t\leqslant -r\,}.
% \]
If $T(A) = [r,s]$ for some $A\subseteq\aaa_{\text{in,out}}$, 
we define $\tmin{}{A} = r$ and $\tmax{}{A} = s$.

The next two results, Lemma~\ref{lem:necessity} and
Lemma~\ref{lem:necessity-11}, are about \emph{tight} typings $T$,
independently of whether $T$ is sound and/or complete for a network
$\N$.

\begin{lemma}
\label{lem:necessity}
Let $T\;:\;\power{\aaa_{\text{\rm in,out}}}\,\to\,\intervals{\reals}$
be a tight typing such that:
\( 
 T(\varnothing) = T(\aaa_{\text{\rm in,out}}) = [0,0]
\).

\medskip
\noindent
\textbf{Conclusion}: For every two-part partition of 
$\aaa_{\text{\rm in,out}}$, say $A\uplus B = \aaa_{\text{\rm in,out}}$, 
if both $T(A)$ and $T(B)$ are defined, then $T(A) = -T(B)$.
\end{lemma}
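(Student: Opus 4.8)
The plan is to exploit the single additive identity linking the ``net flux through $A$'' and the ``net flux through $B$'' for a partition $A\uplus B=\aaa_{\text{in,out}}$, and then to feed it enough witnesses via the tightness hypothesis.

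First I would fix notation: for an IO assignment $f$ and a set $C\subseteq\aaa_{\text{in,out}}$, write $g_C(f):=\sum f(C\cap\aaa_{\text{in}})-\sum f(C\cap\aaa_{\text{out}})$, the quantity occurring in the definition of ``$f$ satisfies $T$''. The elementary observation is that $g_A(f)+g_B(f)=g_{\aaa_{\text{in,out}}}(f)$ whenever $A\uplus B=\aaa_{\text{in,out}}$, since the input-arc sums and the output-arc sums each split over the partition. Because $T(\aaa_{\text{in,out}})=[0,0]$ is defined, every $f\in\poly{T}$ satisfies $0\leqslant g_{\aaa_{\text{in,out}}}(f)\leqslant 0$, hence $g_{\aaa_{\text{in,out}}}(f)=0$; therefore $g_B(f)=-g_A(f)$ for all $f\in\poly{T}$.

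Next I would establish the inclusion $-T(A)\subseteq T(B)$. Write $T(A)=[r_1,r_2]$ and pick any $r\in[r_1,r_2]$. By tightness of $T$ applied to $A$, there is $f\in\poly{T}$ with $g_A(f)=r$, so $g_B(f)=-r$; since $f$ satisfies $T$ and $T(B)$ is defined, $-r\in T(B)$. Letting $r$ range over $T(A)$ gives $-T(A)=[-r_2,-r_1]\subseteq T(B)$. The hypotheses are symmetric in $A$ and $B$ (both images are defined, and the identity $g_A+g_B=g_{\aaa_{\text{in,out}}}$ is symmetric), so the same argument yields $-T(B)\subseteq T(A)$, i.e. $T(B)\subseteq -T(A)$. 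The two inclusions give $T(B)=-T(A)$, equivalently $T(A)=-T(B)$, which is the claim.

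There is no genuine obstacle here; the content is just that tightness forces $T(A)$ to be exactly the image of $\poly{T}$ under the linear functional $g_A$, while $g_B=-g_A$ on $\poly{T}$. The only points to watch are that one must invoke $T(\aaa_{\text{in,out}})=[0,0]$ (not merely $T(\varnothing)=[0,0]$) to pin down $g_{\aaa_{\text{in,out}}}(f)$, and that it is tightness — rather than soundness or completeness — that supplies the witnesses, so no reference to any network $\N$ enters the argument, as the remark preceding the lemma anticipates.
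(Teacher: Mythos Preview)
Your proof is correct and follows essentially the same approach as the paper: both arguments use $T(\aaa_{\text{in,out}})=[0,0]$ to obtain $g_B(f)=-g_A(f)$ on $\poly{T}$, then invoke tightness to identify the endpoints (or, in your version, every point) of $T(A)$ and $T(B)$ with values actually attained by some $f\in\poly{T}$. The paper phrases the last step in terms of maximizing/minimizing the objectives $\theta_A$ and $\theta_B$ and matching their extrema to $\tmax{}{A},\tmin{}{A},\tmax{}{B},\tmin{}{B}$, whereas you argue the two set inclusions $-T(A)\subseteq T(B)$ and $T(B)\subseteq -T(A)$ directly; these are stylistic variants of the same idea.
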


\begin{proof}
One particular case in the conclusion is when $A=\varnothing$
and $B=\aaa_{\text{\rm in,out}}$, so that trivially
$A\uplus B = \aaa_{\text{\rm in,out}}$, which also implies 
$T(A) = -T(B)$. 

Consider the general case when 
$\varnothing\neq A,B\subsetneq\aaa_{\text{\rm in,out}}$.
From Section~\ref{sect:sound-and-complete}, $\poly{T}$ is the polytope
defined by $T$ and $\constOf{T}$ is the set of linear
inequalities induced by $T$.
%, the latter defined by equation~(\ref{sixAB}). 
For every $(p+q)$-dimensional 
point $f\in\poly{T}$, we have:
\[
  \sum\, f(\aaa_{\text{in}})\ -\ \sum\, f(\aaa_{\text{out}})\ =\ 0
\]
because $T(\aaa_{\text{\rm in,out}}) = \Set{0}$ and therefore:
\[
  0\ \leqslant\ \sum\, \Set{\,a\;|\;a\in\aaa_{\text{in}}\,}
  \ -\ \sum\, \Set{\,a\;|\;a\in\aaa_{\text{out}}\,}\ \leqslant 0
\]
are among the inequalities in $\constOf{T}$. Consider
arbitrary $\varnothing\neq A,B\subsetneq\aaa_{\text{\rm in,out}}$ such that
$A\uplus B = \aaa_{\text{\rm in,out}}$. We can therefore write the equation:
\[
  \quad\sum\, f(A\cap\aaa_{\text{in}})
  \ +\ \sum\, f(B\cap\aaa_{\text{in}})
  \ -\ \sum\, f(A\cap\aaa_{\text{out}})\ 
  \ -\ \sum\, f(B\cap\aaa_{\text{out}})=\ 0
\]
Or, equivalently:
\begin{itemize}
\item[$(\ddag)$]
  \quad $\displaystyle
  \sum\, f(A\cap\aaa_{\text{in}})\ -\ \sum\, f(A\cap\aaa_{\text{out}}) =
  -\sum\, f(B\cap\aaa_{\text{in}}) \ +\ \sum\, f(B\cap\aaa_{\text{out}})
  $
\end{itemize}
for every $f\in\poly{T}$. Hence, relative to $\constOf{T}$,
$f$ maximizes (resp. minimizes)
the left-hand side of equation $(\ddag)$ iff $f$ maximizes
(resp. minimizes) the right-hand side of $(\ddag)$. 
Negating the right-hand side of $(\ddag)$, we also have:
\begin{quote}
   $f$ maximizes (resp. minimizes)
   $\sum f(A\cap\aaa_{\text{in}}) - \sum f(A\cap\aaa_{\text{out}})$
   \ \ if and only if \\[.8ex]
   $f$ minimizes (resp. maximizes)
   $\sum\, f(B\cap\aaa_{\text{in}}) - \sum f(B\cap\aaa_{\text{out}})$
   \ \ and the two quantities are equal.
\end{quote}
Because $T$ is tight, 
every point $f\in\poly{T}$ which maximizes (resp.
minimizes) the objective function:
\[
   \sum\,\Set{\,a\;|\;a\in A\cap\aaa_{\text{in}}\,}
   \ -\ \sum\,\Set{\,a\;|\;a\in A\cap\aaa_{\text{out}}\,}
\]
must be such that:
\begin{alignat*}{1}
   & \tmax{}{A}\ =\ 
   \sum f(A\cap\aaa_{\text{in}})\ -\ \sum f(A\cap\aaa_{\text{out}})
\\[1ex]
  & \Bigl(\text{resp.\ \ }
   \tmin{}{A}\ =
   \ \sum f(A\cap\aaa_{\text{in}})\ -\ \sum f(A\cap\aaa_{\text{out}})
   \Bigr)
\end{alignat*}
We can repeat the same reasoning for $B$. Hence,
if $f\in\poly{T}$ maximizes both sides of $(\ddag)$, then:
\begin{alignat*}{3}
   & \tmax{}{A}\ &&= 
   \ && +\sum f(A\cap\aaa_{\text{in}})\ -\ \sum f(A\cap\aaa_{\text{out}})
\\[1ex]
   & &&=\ && -\sum f(B\cap\aaa_{\text{in}})\ +\ \sum f(B\cap\aaa_{\text{out}})
\\[1ex]
   & &&=\ && -\tmin{}{B}
\end{alignat*}
and, respectively, if $f\in\poly{T}$ minimizes
both sides of $(\ddag)$, then:
\begin{alignat*}{3}
   & \tmin{}{A}\ &&=
   \ && +\sum f(A\cap\aaa_{\text{in}})\ -\ \sum f(A\cap\aaa_{\text{out}})
\\[1ex]
   & &&=\ && -\sum f(B\cap\aaa_{\text{in}})\ +\ \sum f(B\cap\aaa_{\text{out}})
\\[1ex]
   & &&=\ && -\tmax{}{B}
\end{alignat*}
The preceding implies $T(A) = -T(B)$ and concludes the proof.
\end{proof}

\begin{lemma}
\label{prop:necessity-11}
\label{lem:necessity-11}
Let $T\;:\;\power{\aaa_{\text{\rm in,out}}}\,\to\,\intervals{\reals}$
be a tight typing such that
$T(\varnothing) = T(\aaa_{\text{\rm in,out}}) = [0,0]$.

\smallskip
\noindent
\textbf{Conclusion}: For every two-part partition
$A\uplus B = \aaa_{\text{\rm in,out}}$, if $T(A)$ is defined 
and $T(B)$ is undefined, then:
\[
  \min\;\theta(B) = -\tmax{}{A}\quad
  \text{and}\quad
  \max\;\theta(B) = -\tmin{}{A},
\]
where the objective function $\theta(B) := 
\sum (B\cap\aaa_{\text{\rm in}}) - \sum (B\cap\aaa_{\text{\rm out}})$
is minimized and maximized, respectively, w.r.t.
$\constOf{T}$.
\end{lemma}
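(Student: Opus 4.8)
The plan is to reuse the algebraic identity at the heart of the proof of Lemma~\ref{lem:necessity}, replacing the appeal to the (now missing) recorded interval $T(B)$ by a direct computation of the range of the linear form $\theta(B)$ over the polytope $\poly{T}$ cut out by $\constOf{T}$. First I would dispose of the degenerate cases: since $T(\varnothing)$ and $T(\aaa_{\text{\rm in,out}})$ are both defined (both $= [0,0]$) whereas $T(B)$ is not, we must have $\varnothing\neq A,B\subsetneq\aaa_{\text{\rm in,out}}$; moreover $\poly{T}\neq\varnothing$ because $T(A)=[\tmin{}{A},\tmax{}{A}]$ is a nonempty interval and $T$ is tight, so some $f\in\poly{T}$ realizes the value $\tmin{}{A}$ for $\theta(A)$. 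Then, exactly as in Lemma~\ref{lem:necessity}, the constraint $T(\aaa_{\text{\rm in,out}}) = [0,0]$ puts the inequalities $0\leqslant\sum f(\aaa_{\text{in}})-\sum f(\aaa_{\text{out}})\leqslant 0$ into $\constOf{T}$, whence every $f\in\poly{T}$ satisfies $\sum f(\aaa_{\text{in}}) = \sum f(\aaa_{\text{out}})$; splitting both sides along the partition $A\uplus B=\aaa_{\text{\rm in,out}}$ yields, for all $f\in\poly{T}$,
\[
  \sum f(A\cap\aaa_{\text{in}}) - \sum f(A\cap\aaa_{\text{out}})
  \;=\;
  -\Bigl(\sum f(B\cap\aaa_{\text{in}}) - \sum f(B\cap\aaa_{\text{out}})\Bigr),
\]
i.e.\ $\theta(A) = -\theta(B)$ as functions on $\poly{T}$, where $\theta(A):=\sum(A\cap\aaa_{\text{in}})-\sum(A\cap\aaa_{\text{out}})$.

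From this pointwise identity, the range of $\theta(B)$ over $\poly{T}$ is the negation of the range of $\theta(A)$ over $\poly{T}$, so $\min\theta(B) = -\max\theta(A)$ and $\max\theta(B) = -\min\theta(A)$ (all optimizations being w.r.t.\ $\constOf{T}$). It therefore suffices to show that the range of $\theta(A)$ over $\poly{T}$ is precisely the interval $[\tmin{}{A},\tmax{}{A}]$, and this is the single point at which tightness is invoked. On the one hand, $T(A)$ contributes the inequalities $\tmin{}{A}\leqslant\theta(A)\leqslant\tmax{}{A}$ to $\constOf{T}$, so that range is contained in $[\tmin{}{A},\tmax{}{A}]$; on the other hand, tightness says that for every $r\in T(A)$ there is an $f\in\poly{T}$ with $\sum f(A\cap\aaa_{\text{in}})-\sum f(A\cap\aaa_{\text{out}}) = r$, so in particular $\tmin{}{A}$ and $\tmax{}{A}$ are attained. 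Hence $\min\theta(A)=\tmin{}{A}$ and $\max\theta(A)=\tmax{}{A}$, and substituting into the previous display gives $\min\theta(B) = -\tmax{}{A}$ and $\max\theta(B) = -\tmin{}{A}$, as required.

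I expect the only real obstacle to be careful bookkeeping rather than any new idea: one must be precise about exactly which inequalities sit in $\constOf{T}$ (so that optimizing $\theta(B)$ ``w.r.t.\ $\constOf{T}$'' is literally optimization over $\poly{T}$), and one should note the slight difference from Lemma~\ref{lem:necessity} — here we need not assume $\poly{T}$ is bounded, since the only quantity whose range must be pinned down, $\theta(A)$, is already confined to the bounded interval $[\tmin{}{A},\tmax{}{A}]$ by the constraints coming from $T(A)$. Apart from that, the argument is the same linear manipulation used for Lemma~\ref{lem:necessity}, with the computed range of $\theta(B)$ taking over the role played there by the (here undefined) recorded interval $T(B)$.
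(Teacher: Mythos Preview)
Your proposal is correct and follows essentially the same route as the paper's proof: both use the constraint $T(\aaa_{\text{in,out}})=[0,0]$ to force $\theta(A)+\theta(B)\equiv 0$ on $\poly{T}$, deduce $\min\theta(B)=-\max\theta(A)$ and $\max\theta(B)=-\min\theta(A)$, and then identify the range of $\theta(A)$ with $[\tmin{}{A},\tmax{}{A}]$ via tightness. Your write-up is in fact more explicit than the paper's on this last step (the paper simply asserts $r=\min\theta(A)$, $s=\max\theta(A)$ at the outset), and your handling of the degenerate cases and nonemptiness of $\poly{T}$ is a welcome addition.
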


Hence, if we extend the typing $T$ to a typing $T'$
that includes the type assignment $T'(B) := -T(A)$, then $T'$ is a
tight typing such that $\poly{T} = \poly{T'}$.

\begin{proof}
 If $T(A)=[r,s]$, then $r = \min\theta(A)$ and $s = \max\theta(A)$ 
 where $\theta(A) := \sum (A\cap\aaa_{\text{\rm in}}) - \sum
 (A\cap\aaa_{\text{\rm out}})$ is minimized/maximized
 w.r.t. $\constOf{T}$. Consider the objective function $\Theta := \theta(A) +
 \theta(B)$. Because $T(\aaa_{\text{\rm in,out}}) = [0,0]$, we have
 $\min\Theta = 0 = \max\Theta$ where $\Theta$ is minimized/maximized
 w.r.t. $\constOf{T}$. Think of $\Theta$ as defining a line through the
 origin of the $\bigl(\theta(A),\theta(B)\bigr)$-plane with slope 
 $-45^{o}$ with, say, $\theta(A)$ the horizontal coordinate and 
 $\theta(B)$ the vertical coordinate.  Hence, 
 $\min\theta(B) = -\max\theta(A)$ and 
 $\max\theta(B) = -\min\theta(A)$, which implies the desired conclusion.
\end{proof}

\newcommand{\IOpair}[1]{\Set{#1}}

\begin{Proof}{for part 1 in 
Theorem~\ref{fact:principal-typing-build-up}} 
Let $\N = (\nn,\aaa)$ be a one-node flow network, \ie, 
$\nn = \Set{\nu}$ with all input arcs in $\aaa_{\text{in}}$
and all output arcs in $\aaa_{\text{out}}$ incident to the single
node $\nu$. Algorithm~\ref{alg:oneNodePT} computes a % tight, total, and
principal typing for $\N$ and 
Lemma~\ref{prop:one-node-principal-typings} proves its
correctness.

{\spacing{\normalsize}{1.2}
\begin{algorithm} % [H]
\caption {\quad Calculate % Tight, Total, and 
          Principal Typing for One-Node Network $\N$} 
          \label{alg:oneNodePT}
\begin{algorithmic}[1]
    \Statex \textbf{algorithm name}: $\OnePT$ 
    \Statex \textbf{input}: 
      one-node network $\N$ with input/output
      arcs $\aaa_{\text{in,out}} = \aaa_{\text{in}}\uplus\aaa_{\text{out}}$
    \Statex $\phantom{\textbf{input}:}$
      and lower-bound and upper-bound capacities
      $\lc,\uc : \aaa_{\text{in,out}}\to\nreals$ 
    \Statex \textbf{output}: % tight, total, and 
      principal typing 
      $T:\power{\aaa_{\text{\rm in,out}}}\to\intervals{\reals}$
      for $\N$
    \Statex \rule[2pt]{15cm}{.9pt} % \hrulefill
    \State $T(\varnothing)\ :=\ [0,0]$ 
    \State $T(\aaa_{\text{\rm in,out}})\ :=\ [0,0]$  
    \For {every two-part partition $A\uplus B = \aaa_{\text{in,out}}$
          with $A\neq\varnothing\neq B$}  
    \State $A_{\text{in}} := A\cap \aaa_{\text{in}}$;
           $A_{\text{out}} := A\cap \aaa_{\text{out}}$
    \State $B_{\text{in}} := B\cap \aaa_{\text{in}}$;
           $B_{\text{out}} := B\cap \aaa_{\text{out}}$
    \State\label{step:oneNodePT1}
      $r_1 := 
     -\min \SET{\sum \uc(B_{\text{in}}) - \sum \lc(B_{\text{out}}),
      \ \sum \uc(A_{\text{out}}) - \sum \lc(A_{\text{in}})}$
    \State\label{step:oneNodePT2}
      $r_2 := 
     +\min \SET{\sum \uc(A_{\text{in}}) - \sum \lc(A_{\text{out}}),
      \ \sum \uc(B_{\text{out}}) - \sum \lc(B_{\text{in}})}$
    \State
        $T(A)\ :=\ [r_1,r_2]$
    \EndFor
    \State \Return $T$
\end{algorithmic}
\end{algorithm}
}

\begin{lemma}[Principal Typings
for One-Node Networks]
\label{prop:one-node-principal-typings}
Let $\N$ be a network with one node, input arcs $\aaa_{\text{\rm in}}$,
output arcs $\aaa_{\text{\rm out}}$, and lower-bound and upper-bound
capacities 
$\lc,\uc : \aaa_{\text{\rm in}}\uplus\aaa_{\text{\rm out}}\to\nreals$.

\smallskip
\noindent
\textbf{Conclusion}: \emph{$\OnePT(\N)$} is a % tight, total, and 
principal typing for $\N$.
\end{lemma}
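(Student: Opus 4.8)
The plan is to verify the four conditions that, by Theorem~\ref{fact:existence-and-uniqueness}, characterise the principal typing: that $T = \OnePT(\N)$ is locally total, tight, sound, and complete for $\N$. Since $\N$ has a single node $\nu$ and hence no internal arcs, an IO assignment $f:\aaa_{\text{in,out}}\to\nreals$ is itself a flow, and it is feasible iff $\lc(a)\leqslant f(a)\leqslant\uc(a)$ for every arc $a$ together with $\sum f(\aaa_{\text{in}}) = \sum f(\aaa_{\text{out}})$ (conservation at $\nu$); let $P$ denote this feasible-flow polytope. Local totality is immediate: every $A\subseteq\aaa_{\text{in,out}}$ is $\varnothing$, $\aaa_{\text{in,out}}$, or one block of a two-part partition, so $T$ is total; and running Steps~\ref{step:oneNodePT1}--\ref{step:oneNodePT2} with the roles of $A$ and $B$ swapped gives $T(B) = -T(A)$, so the value assigned to a subset does not depend on which block of its complementary partition is processed (this is the instance of Lemma~\ref{lem:necessity} forced by $T(\varnothing)=T(\aaa_{\text{in,out}})=[0,0]$). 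It then remains to show (a) $P\subseteq\poly{T}$, (b) $\poly{T}\subseteq P$, and (c) each endpoint $\tmin{}{A}$, $\tmax{}{A}$ is attained on $\poly{T}$.

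For completeness (a), fix a partition $A\uplus B=\aaa_{\text{in,out}}$, write $A_{\text{in}}=A\cap\aaa_{\text{in}}$, etc., and set $\theta(f)=\sum f(A_{\text{in}})-\sum f(A_{\text{out}})$. If $f\in P$ then conservation gives the rewrite $\theta(f)=\sum f(B_{\text{out}})-\sum f(B_{\text{in}})$; bounding each of the two expressions termwise with the capacity functions yields $\theta(f)\leqslant \sum\uc(A_{\text{in}})-\sum\lc(A_{\text{out}})$ and $\theta(f)\leqslant \sum\uc(B_{\text{out}})-\sum\lc(B_{\text{in}})$, hence $\theta(f)\leqslant\tmax{}{A}$ by Step~\ref{step:oneNodePT2}, and symmetrically $\theta(f)\geqslant\tmin{}{A}$ by Step~\ref{step:oneNodePT1}; the cases $A=\varnothing,\aaa_{\text{in,out}}$ are the trivial $0\in[0,0]$ (using conservation for the latter). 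For soundness (b), let $f\in\poly{T}$. Taking $A=\{a\}$ for an input arc $a$ gives $\theta(f)=f(a)$, and Steps~\ref{step:oneNodePT1}--\ref{step:oneNodePT2} force $\tmin{}{\{a\}}\geqslant\lc(a)$ and $\tmax{}{\{a\}}\leqslant\uc(a)$ because the relevant $\min$'s are taken over sets containing $-\lc(a)$ and $\uc(a)$ respectively; the symmetric computation with $a$ an output arc (where $\theta(f)=-f(a)$) handles those capacities too. Finally, at $A=\aaa_{\text{in,out}}$ we have $T(A)=[0,0]$, so $\sum f(\aaa_{\text{in}})-\sum f(\aaa_{\text{out}})=0$, i.e.\ conservation. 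Thus $f\in P$, and together with (a) we get $\poly{T}=P$.

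For tightness (c), by (a) we already have $\theta(\poly{T})\subseteq[\tmin{}{A},\tmax{}{A}]$, so it suffices to realise the two endpoints by feasible flows. Consider $\tmax{}{A}$ (the argument for $\tmin{}{A}$ is symmetric, exchanging $\min$/$\max$ and $\lc$/$\uc$). If the first term of the $\min$ in Step~\ref{step:oneNodePT2} is the minimiser, i.e.\ $\tmax{}{A}=\sum\uc(A_{\text{in}})-\sum\lc(A_{\text{out}})$, put $f(a)=\uc(a)$ on $A_{\text{in}}$ and $f(a)=\lc(a)$ on $A_{\text{out}}$; then $\theta(f)=\tmax{}{A}$ and, by the conservation rewrite, we need a completion on $B$ with $f(b)\in[\lc(b),\uc(b)]$ achieving $\sum f(B_{\text{out}})-\sum f(B_{\text{in}})=\tmax{}{A}$. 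Such a completion exists because this quantity ranges over the interval $[\sum\lc(B_{\text{out}})-\sum\uc(B_{\text{in}}),\ \sum\uc(B_{\text{out}})-\sum\lc(B_{\text{in}})]$, whose upper end dominates $\tmax{}{A}$ (it is one of the two terms of the $\min$) and whose lower end is dominated by $\tmax{}{A}$ exactly when $\sum\uc(\aaa_{\text{in}})\geqslant\sum\lc(\aaa_{\text{out}})$. If instead the second term is the minimiser, saturate $B$ analogously and complete on $A$, using $\sum\uc(\aaa_{\text{out}})\geqslant\sum\lc(\aaa_{\text{in}})$. These two inequalities are precisely the condition $P\neq\varnothing$ for a one-node network, which we may assume throughout (otherwise Step~\ref{step:oneNodePT1}/\ref{step:oneNodePT2} would already return $\tmin{}{\aaa_{\text{in}}}>\tmax{}{\aaa_{\text{in}}}$, so $T$ would not even be a legal typing and the statement is vacuous).

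The step I expect to be the main obstacle is tightness: one must show that the two candidate values combined by the $\min$'s in Steps~\ref{step:oneNodePT1}--\ref{step:oneNodePT2} are not merely valid bounds but are each actually attained, which needs the small case split above together with the fact that an assignment saturating one block of a partition can always be extended feasibly across the other block — and it is exactly here that the feasibility (non-emptiness of $P$) of $\N$ gets used, everything else being routine term-by-term estimation.
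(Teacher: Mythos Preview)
Your proof is correct and in fact more complete than the paper's own argument. The paper outlines an exhaustive case analysis on the relative orderings of the eight capacity sums, explicitly omits those details, and then argues informally (first with all lower bounds zero, then in general) why the two terms inside each $\min$ capture the correct extremal values of $\theta(A)$. You instead verify the four defining properties of principality directly: local totality is immediate; completeness follows from the two-way rewrite of $\theta(f)$ via conservation; soundness is obtained cleanly by specialising to singleton $A=\{a\}$ (which the paper does not do); and tightness comes from the case split on which term of the $\min$ is active, with feasibility of the one-node network supplying exactly the inequality needed to complete the assignment on the opposite block. Your route avoids any case enumeration over orderings and makes explicit where non-emptiness of the feasible polytope enters, which the paper leaves implicit. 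One small point you leave tacit is that attaining both endpoints of $T(A)$ suffices for tightness over the whole interval, but this is immediate from convexity of $\poly{T}$ and linearity of $\theta$.
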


\begin{proof}
Let $A\uplus B = \aaa_{\text{in,out}}$ be an arbitrary two-part
partition of $\aaa_{\text{in,out}}$, with $A\neq\varnothing\neq B$.
Let $A_{\text{in}} := A\cap \aaa_{\text{in}}$,
$A_{\text{out}} := A\cap \aaa_{\text{out}}$,
$B_{\text{in}} := B\cap \aaa_{\text{in}}$, and
$B_{\text{out}} := B\cap \aaa_{\text{out}}$. Define the non-negative
numbers:
\begin{alignat*}{8}
   & s_{\text{in}}\ &&:=\ && \sum \lc(A_{\text{in}})
     \qquad && s'_{\text{in}}\ &&:=\ && \sum \uc(A_{\text{in}})
\\[1ex]
   &s_{\text{out}}\ && :=\ &&\sum \lc(A_{\text{out}})
     \quad && s'_{\text{out}}\ &&:=\ && \sum \uc(A_{\text{out}})
\\[1ex]
   & t_{\text{in}} &&:= &&\sum \lc(B_{\text{in}})
     \qquad && t'_{\text{in}} &&:= &&\sum \uc(B_{\text{in}})
\\[1ex]
   & t_{\text{out}} &&:= &&\sum \lc(B_{\text{out}})
     \quad && t'_{\text{out}} &&:= &&\sum \uc(B_{\text{out}})
\end{alignat*}
Although tedious and long, one approach to complete the proof
is to exhaustively consider all possible orderings
of the 8 values just defined, using the standard ordering on real
numbers. Cases that do not allow any feasible flow can be 
eliminated from consideration; for feasible flows to be possible,
we can assume that:
\[
  s_{\text{in}}\leqslant s'_{\text{in}}, \quad
  s_{\text{out}}\leqslant s'_{\text{out}}, \quad
  t_{\text{in}}\leqslant t'_{\text{in}}, \quad
  t_{\text{out}}\leqslant t'_{\text{out}}, \quad
\]
and also assume that:
\[
  s_{\text{in}} + t_{\text{in}} \leqslant 
  s'_{\text{out}} + t'_{\text{out}}, \quad
  s_{\text{out}} + t_{\text{out}} \leqslant 
  s'_{\text{in}} + t'_{\text{in}}, \quad
\]
thus reducing the total number of cases to consider. We 
consider the intervals $[s_{\text{in}},s'_{\text{in}}]$,
$[s_{\text{out}},s'_{\text{out}}]$, $[t_{\text{in}},t'_{\text{in}}]$, 
and $[t_{\text{out}},t'_{\text{out}}]$, and their relative positions,
under the preceding assumptions. Define the objective $\theta(A)$:
\[
    \theta(A)\ :=\ \sum A_{\text{in}} - \sum A_{\text{out}} .
\]
% By the material in Sections~\ref{sect:typings}
% and~\ref{sect:valid-vs-principal}, 
If $T$ is a principal typing for $\N$, and therefore tight,
with $T(A) = [r_1,r_2]$, then $r_1$ is the
minimum possible feasible value of $\theta(A)$ and $r_2$ is the maximum
possible feasible value of $\theta(A)$ relative to $\constOf{T}$.

We omit the details of the just-outlined exhaustive proof by cases.
Instead, we argue for the correctness of $\OnePT$ more informally.
It is helpful to consider the particular case when all lower-bound
capacities are zero, \ie, the case when 
$s_{\text{in}} = s_{\text{out}} = t_{\text{in}} = t_{\text{out}} = 0$. In this
case, it is easy to see that:
\begin{alignat*}{4}
   &r_1\ &&=\ 
   &&- \min \SET{ \sum \uc(B_{\text{in}}), \sum \uc(A_{\text{out}})}
   \quad &&\text{maximum amount entering at 
                 $B_{\text{in}}$ and exiting at $A_{\text{out}}$},
\\ & && && &&
          \text{while minimizing amount entering at $A_{\text{in}}$},
\\[1ex]
   &r_2\ &&=\ 
   &&+ \min \SET{ \sum \uc(A_{\text{in}}), \sum \uc(B_{\text{out}})} 
   \quad &&\text{maximum amount entering at 
                 $A_{\text{in}}$ and exiting at $B_{\text{out}}$},
\\ & && && &&
          \text{while minimizing amount exiting at $A_{\text{out}}$},
\end{alignat*}
which are exactly the endpoints of the type $T(A)$ returned
by $\OnePT(\N)$ in the particular case when all lower-bounds
are zero.

Consider now the case when some of the lower-bounds are not zero.
To determine the maximum throughput $r_2$ using the arcs of $A$, 
we consider two quantities:
\[
   r_2' := \sum \uc(A_{\text{in}}) - \sum \lc(A_{\text{out}})
   \quad\text{and}\quad
   r_2'' := \sum \uc(B_{\text{out}}) - \sum \lc(B_{\text{in}}).
\]
It is easy to see that $r_2'$ is the flow that is simultaneously
maximized at $A_{\text{in}}$ and minimized at $A_{\text{out}}$,
provided that $r_2' \leqslant r_2''$, \ie, the whole amount $r_2'$ can
be made to enter at $A_{\text{in}}$ and to exit at
$B_{\text{out}}$. However, if $r_2' > r_2''$, then only the amount
$r_2''$ can be made to enter at $A_{\text{in}}$ and to exit at
$B_{\text{out}}$. Hence, the desired value of $r_2$ is
$\min\Set{r_2',r_2''}$, which is exactly the higher endpoint
of the type $T(A)$ returned by $\OnePT(\N)$. 
A similar argument, here omitted, is used again to determine
the minimum throughput $r_1$ using the arcs of $A$.
\end{proof}

\paragraph{Complexity of $\OnePT$.} We estimate
the run-time of $\OnePT$ as a function of: $d =
\size{\aaa_{\text{in,out}}} \geqslant 2$, the number of input/output arcs,
also assuming that there is at least one input arc and one output arc
in $\N$.  $\OnePT$ assigns the type/interval $[0,0]$ to $\varnothing$
and $\aaa_{\text{in,out}}$. For every $\varnothing\neq
A\subsetneq\aaa_{\text{in,out}}$, it then computes a type $[r_1,r_2]$,
simultaneously for $A$ and its complement $B =
\aaa_{\text{in,out}}-A$.  (That $B$ is assigned $[-r_2,-r_1]$ is not
explicitly shown in Algorithm~\ref{alg:oneNodePT}.) Hence, $\OnePT$
computes $(2^d-2)/2 = (2^{d-1}-1)$ such types/intervals, each
involving $8$ summations and $4$ subtractions, in
lines~\ref{step:oneNodePT1} and~\ref{step:oneNodePT2}, on the
lower-bound capacities ($d$ of them) and upper-bound capacities ($d$
of them) of the input/output arcs. The run-time complexity of
$\OnePT$ is therefore $\bigO{2^d}$.

\end{Proof}

\Hide{

\begin{remark}
A different version of Algorithm~\ref{alg:oneNodePT} uses
linear programming to compute the typing of a one-node
network, but this is an unnecessary overkill. The resulting
run-time complexity is also worse than that of our version here.
The linear-programming version works as follows. 
Let $\EE$ be the set of flow-conservation equations and $\CC$
the set of capacity-constraint inequalities of the one-node
network. For every $A\in\power{\aaa_{\text{\rm in,out}}}$, 
we define the objective
$\theta(A) := \sum (A\cap\aaa_{\text{\rm in}})
- \sum (A\cap\aaa_{\text{\rm out}})$. 
The desired type $T(A) = [r_1,r_2]$ is obtained by setting: 
\( r_1 := \min \theta(A)\ \text{and}\ r_2 := \max \theta(A), \) 
\ie, the objective $\theta(A)$ is minimized/maximized
relative to $\EE\cup\CC$ using linear programming.
\end{remark}

}

\begin{Proof}{for part 2 in 
Theorem~\ref{fact:principal-typing-build-up}} 
Given a principal typing $T$ for network $\N$ with arcs 
$\aaa = \aaa_{\text{in}}\uplus\aaa_{\text{out}}\uplus\aaa_{\text{\#}}$,
together with $a^{+}\in\aaa_{\text{in}}$ and $a^{-}\in\aaa_{\text{out}}$,  
we need to compute a principal typing $T'$ for the network
$\bind{a}{\N}$. This is carried out by Algorithm~\ref{alg:effbind}
below and its correctness established by 
Lemma~\ref{lem:efficient-binding}.

{\spacing{\normalsize}{1.2}
\begin{algorithm}% [H]
\caption {\quad Bind One Input-Output Pair Efficiently} 
          \label{alg:effbind}
\begin{algorithmic}[1]
    \Statex \textbf{algorithm name}: $\BindTone$
    \Statex \textbf{input}: principal typing % tight typing 
      $T:\power{\aaa_{\text{\rm in,out}}}\to\intervals{\reals}$,
      \ $a^{+}\in\aaa_{\text{\rm in}}$, \ $a^{-}\in\aaa_{\text{\rm out}}$
    \Statex $\phantom{\textbf{input}:}$ 
      where for every two-part partition $A\uplus B = 
 \aaa'_{\text{\rm in,out}} := {\aaa_{\text{\rm in,out}}} - \Set{a^{+},a^{-}}$,
    \Statex $\phantom{\textbf{input}:}$ 
      either both $T(A)$ and $T(B)$ are defined or both 
      $T(A)$ and $T(B)$ are undefined
    \Statex \textbf{output}: principal typing % tight typing 
      $T':\power{\aaa'_{\text{\rm in,out}}}\to\intervals{\reals}$
    \Statex $\phantom{\textbf{output}:}$ 
    where $\poly{T'} = 
 \rest{\poly{\constOf{T}\cup\Set{a^{+}=a^{-}}}}{\aaa'_{\text{\rm in,out}}}$
    \Statex \rule[2pt]{15cm}{.9pt} % \hrulefill
    \Statex Definition of intermediate typing 
          $T_1 :\power{\aaa'_{\text{\rm in,out}}}\to\intervals{\reals}$
    \State\label{step:effBing1}
       $T_1\ :=\ \rest{T}{\power{\aaa'_{\text{\rm in,out}}}}$,
    \Statex
      \ie, every type assigned by $T$ to a set $A$ such that
      $A\cap\Set{a^{+},a^{-}}\neq\varnothing$ is omitted in $T_1$
    \Statex \rule[2pt]{10cm}{.1pt} % \hrulefill
    \Statex Definition of intermediate typing 
          $T_2 :\power{\aaa'_{\text{\rm in,out}}}\to\intervals{\reals}$
    \State\label{step:effBing2}
       $T_2\ :=\ T_1\bigl[\aaa'_{\text{\rm in,out}}\mapsto [0,0]\bigr]$
    \Statex
      \ie, the type assigned by $T_1$ to $\aaa'_{\text{\rm in,out}}$,
      if any, is changed to the type $[0,0]$ in $T_2$
    \Statex \rule[2pt]{10cm}{.1pt} % \hrulefill
    \Statex Definition of final typing 
          $T' :\power{\aaa'_{\text{\rm in,out}}}\to\intervals{\reals}$
    \For {every two-part partition $A\uplus B = \aaa'_{\text{in,out}}$}
      \label{step:effBing5}
      \If {$T_2(A)$ is defined with $T_2(A) = [r_1,s_1]$ 
        \textbf{and} $T_2(B)$ is defined with $-T_2(B) = [r_2,s_2]$} 
      \label{step:effBing55}
      \State\label{step:effBing555}
             $T'(A) := [\max\Set{r_1,r_2},\min\Set{s_1,s_2}]$ ;
             $T'(B) := -T'(A)$ 
      \ElsIf {both $T_2(A)$ \textbf{and} $T_2(B)$ are undefined} 
      \State $T'(A)$ is undefined ; $T'(B)$ is undefined
      \EndIf
    \EndFor\label{step:effBing6}
    \State \Return $T'$
\end{algorithmic}
\end{algorithm}
}

\begin{lemma}[Typing for Binding One Input/Output Pair]
% Efficiently]
\label{prop:efficient-binding}
\label{lem:efficient-binding}
Let $T :\power{\aaa_{\text{\rm in,out}}}\to\intervals{\reals}$ be a % tight
principal typing for network $\N$, with input/output arcs
$\aaa_{\text{\rm in,out}} = \aaa_{\text{\rm in}}\uplus\aaa_{\text{\rm out}}$,
and let $a^{+}\in\aaa_{\text{\rm in}}$ and $a^{-}\in\aaa_{\text{\rm out}}$.

\smallskip
\noindent
\textbf{Conclusion}: \emph{$\BindTone(a,T)$} is a % tight
principal typing for \emph{$\bind{a}{\N}$}.
% Moreover, if $T$ is total, then so is \\
% \emph{$\BindTone(a,T)$} total.
\end{lemma}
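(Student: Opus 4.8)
The plan is to verify directly that $T' := \BindTone(a,T)$ satisfies the four defining properties of a principal typing for $\bind{a}{\N}$ — locally total, tight, sound and complete. This suffices: it then is \emph{a} principal typing by definition, and it is \emph{the} principal typing because any two locally total, tight, sound and complete typings of one network have equal polytopes (Proposition~\ref{fact:equivalence-of-principal-typings}), hence the same domain (by local totality) and the same values (by tightness, which recovers each interval as a projection of the common polytope). I will run the argument assuming $\N$ is connected, so that $T$ is total and the input interface of $\BindTone$ is met; the disconnected case is handled componentwise, since binding $a^{+}$ to $a^{-}$ alters only the component(s) incident to them and leaves the typings of the other components untouched. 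Write $\aaa'_{\text{in,out}} := \aaa_{\text{in,out}}\setminus\Set{a^{+},a^{-}}$, let $\pi$ forget the coordinates $a^{+},a^{-}$, and for $S\subseteq\aaa_{\text{in,out}}$ put $\theta(S) := \sum(S\cap\aaa_{\text{in}}) - \sum(S\cap\aaa_{\text{out}})$, a modular set function.

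First I would fix the semantics of the operation: by the definition of $\bind{a}{\N}$, its feasible flows are exactly those obtained from feasible flows $g$ on $\aaa$ with $g(a^{+}) = g(a^{-})$ by merging the two equal values into the value on the re-spliced arc $a$. Since $T$ is sound and complete, $\poly{T}$ is precisely the set of IO assignments of $\N$ extendable to feasible flows; hence the set $P$ of IO assignments of $\bind{a}{\N}$ extendable to feasible flows equals $\pi\bigl(\poly{T}\cap\Set{a^{+}=a^{-}}\bigr)$. So soundness-plus-completeness of $T'$ is the assertion $\poly{T'} = P$, the heart of the lemma. The inclusion $P\subseteq\poly{T'}$ is the easy half: let $f = \pi(g)$ with $g\in\poly{T}$ and $g(a^{+}) = g(a^{-})$, and let $A\uplus B = \aaa'_{\text{in,out}}$; the algorithm sets $T'(A) = T_2(A)\cap\bigl(-T_2(B)\bigr)$, where $T_2$ coincides with $T$ on all $A\subsetneq\aaa'_{\text{in,out}}$ and $T_2(\aaa'_{\text{in,out}}) = [0,0]$. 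From $\theta_g(\aaa_{\text{in,out}}) = 0$ and $\theta_g(\aaa'_{\text{in,out}}) = \theta_g(\aaa_{\text{in,out}}) - g(a^{+}) + g(a^{-}) = 0$ we get $\theta_f(A) = -\theta_f(B)$; and $\theta_g(A) = \theta_f(A)$ since $A$ avoids $a^{\pm}$, so $\theta_f(A)\in T(A)$ and $-\theta_f(B) = \theta_f(A)\in -T(B)$ (for $A\in\Set{\varnothing,\aaa'_{\text{in,out}}}$ one has $\theta_f(A) = 0$ directly). Hence $\theta_f(A)\in T'(A)$.

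The reverse inclusion $\poly{T'}\subseteq P$ is the main obstacle. Given $f\in\poly{T'}$, I must produce $x\geqslant 0$ so that the point extending $f$ by $a^{+} = a^{-} = x$ lies in $\poly{T}$. Sorting the defining constraints $\theta(S)\in T(S)$ of $\poly{T}$ by $S\cap\Set{a^{+},a^{-}}$, the blocks with $S\cap\Set{a^{+},a^{-}}\in\Set{\varnothing,\Set{a^{+},a^{-}}}$ carry no $x$ and — using the symmetry $T(S) = -T(\aaa_{\text{in,out}}\setminus S)$ of Lemma~\ref{lem:necessity} (with Lemma~\ref{lem:necessity-11} for partitions in the domain on only one side) — collapse to $\theta_f(A)\in T'(A)$ for all $A$, which hold by hypothesis; the blocks with $S\cap\Set{a^{+},a^{-}} = \Set{a^{+}}$ and $= \Set{a^{-}}$ confine $x$ to the closed intervals $T(A\cup\Set{a^{+}}) - \theta_f(A)$ and $\theta_f(A) - T(A\cup\Set{a^{-}})$ (over all relevant $A$), along with $x\geqslant 0$. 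By the one-dimensional Helly property this family of closed intervals has a common point iff every two of them intersect, and each pairwise condition is an explicit inequality between endpoints of $T$ and values $\theta_f(\cdot)$, which I would discharge case by case from $f\in\poly{T'}$, the non-negativity of $g(a^{+}),g(a^{-})$ (which gives $\tmax{}{A\cup\Set{a^{+}}}\geqslant\tmax{}{A}$ and $\tmin{}{A\cup\Set{a^{-}}}\leqslant\tmin{}{A}$), the modularity of $\theta$, and again Lemma~\ref{lem:necessity}: showing that eliminating $a^{+}$ and $a^{-}$ by Fourier--Motzkin introduces no inequality beyond those already encoded in $T'$ is the technical core. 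Once $\poly{T'} = P$ is in hand, tightness follows by exhibiting, for each endpoint of $T'(A)$, an extremal point of $\poly{T}$ on the diagonal $a^{+} = a^{-}$ that attains it — describing the fiber of $\poly{T}$ over a fixed $f$ as a planar region cut out by bounds on $a^{+}$, on $a^{-}$, and on $a^{+}-a^{-}$, and using tightness of $T$; local totality of $T'$ is read off by matching the domain kept by $\BindTone$ against the component structure of $\bind{a}{\N}$ (splicing $a^{+},a^{-}$ either fuses two components of $\N$ or inserts a cycle into one).
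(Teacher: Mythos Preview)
Your approach is correct and takes a different route from the paper's. You attack the four principal-typing properties directly, with the crux being the soundness inclusion $\poly{T'}\subseteq P$, which you handle by eliminating the single unknown $x=a^{+}=a^{-}$ via one-dimensional Helly; your reduction $J_A=I_B$ through Lemma~\ref{lem:necessity} is the right move and halves the family of intervals to intersect. The paper instead works through intermediate typings $T_1,T_2$: it observes that, under $T(\aaa_{\text{in,out}})=[0,0]$, the constraint $a^{+}=a^{-}$ is equivalent to $\theta(\aaa'_{\text{in,out}})=0$, and from this passes directly to the projection identity $\pi\bigl(\poly{T}\cap\{a^{+}=a^{-}\}\bigr)=\poly{T_2}$ --- whose harder inclusion is precisely the lifting problem you make explicit. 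Where the paper's organization pays off is in tightness: rather than your fiber/extremal-point construction, it introduces a non-constructive linear-programming tightening $T_3$ of $T_2$, notes $T_3(A)\subseteq T_2(A)\cap\bigl(-T_2(B)\bigr)=T'(A)$ by Lemma~\ref{lem:necessity}, and closes the sandwich via $\poly{T'}\subseteq\poly{T_2}=\poly{T_3}$; this is shorter and avoids any geometric analysis of two-dimensional fibers. If you want the cleanest write-up, keep your Helly argument for soundness but borrow the paper's $T_3$-sandwich for tightness.
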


\begin{proof}
 Consider the intermediate typings 
 $T_1$ and $T_2$ as defined in algorithm $\BindTone$:
 \[
  T_1, T_2 :\power{\aaa'_{\text{\rm in,out}}}\to\intervals{\reals} 
  \quad
  \text{where $\aaa'_{\text{\rm in,out}} = 
      \aaa_{\text{\rm in,out}} - \Set{a^{+},a^{-}}$.}
 \]
 The definitions of $T_1$ and $T_2$ can be repeated
 differently as follows.
 For every $A\subseteq\aaa_{\text{\rm in,out}}$:
\begin{alignat*}{4}
  & T_1(A)\ &&:=\ && 
    \begin{cases}
    T(A) &\text{if $A\subseteq\aaa'_{\text{\rm in,out}}$
                and $T(A)$ is defined},
    \\
    \text{undefined}
       \quad &\text{if $A\not\subseteq\aaa'_{\text{\rm in,out}}$
               or $T(A)$ is undefined},
    \end{cases}
\\
  & T_2(A)\ &&:=
     \ &&\begin{cases}
         T_1(A) \qquad &\text{if $A\subsetneq\aaa'_{\text{\rm in,out}}$
           and $T_1(A)$ is defined},
          \\
         [0,0]  \qquad &\text{if $A=\aaa'_{\text{\rm in,out}}$},
          \\
         \text{undefined}
         \quad &\text{if $A\not\subseteq\aaa'_{\text{\rm in,out}}$
             or $T_1(A)$ is undefined}.
             \end{cases}
\end{alignat*}
If $T$ is tight, then so is $T_1$.  
% , by Proposition~\ref{prop:tightness-inherited}. 
The only difference
between $T_1$ and $T_2$ is that the latter includes the new
type assignment $T_2(\aaa'_{\text{\rm in,out}}) = [0,0]$, which
is equivalent to the constraint:
\[
  \sum \aaa'_{\text{\rm in}} - \sum \aaa'_{\text{\rm out}} = 0,
  \qquad\text{where 
  $\aaa'_{\text{\rm in}} = \aaa_{\text{\rm in}}-\Set{a^{+}}$ and
  $\aaa'_{\text{\rm out}} = \aaa_{\text{\rm out}}-\Set{a^{-}}$},
\]
which, given % Lemma~\ref{lem:necessity} and 
the fact that $\sum \aaa_{\text{\rm in}} - \sum \aaa_{\text{\rm out}} = 0$,
is in turn equivalent to the constraint $a^{+}=a^{-}$. 
This implies the following equalities:
\begin{alignat*}{4}
 &\rest{\poly{\constOf{T}\cup\Set{a^{+}=a^{-}}}}{\aaa'_{\text{\rm in,out}}} 
  \ &&=\ &&
  \poly{\constOf{T_1}\cup\Set{\aaa'_{\text{\rm in}}=\aaa'_{\text{\rm out}}}}
  \\
  & &&=\ &&\poly{\constOf{T_2}}
  \\
  & &&=\ &&\poly{T_2}
\end{alignat*}
Hence, if $T$ is a principal typing for $\N$, then 
$T_2$ is a principal typing for $\bind{a}{\N}$.
It remains to show that $T'$ as defined in algorithm 
$\BindTone$ is the tight version of $T_2$.

We define an additional typing 
$T_3 :\power{\aaa'_{\text{\rm in,out}}}\to\intervals{\reals}$ as
follows -- for the purposes of this proof only, $T_3$ is not
computed by algorithm $\BindTone$.
For every $A \subseteq\aaa'_{\text{\rm in,out}}$ for which
$T_2(A)$ is defined, let the objective $\theta_A$ be
$\sum(A\cap\aaa'_{\text{\rm in}}) - \sum(A\cap\aaa'_{\text{\rm out}})$ 
and let:
\begin{alignat*}{4}
   & T_3(A)\ :=\ [r,s]
 \qquad \text{where $r = \min\theta_A$ and $s = \max\theta_A$
   relative to $\constOf{T_2}$}.
\end{alignat*}
$T_3$ is obtained from $T_2$ in an ``expensive'' process, because it
uses a linear-programming algorithm to minimize/maximize the
objectives $\theta_A$. Clearly $\poly{T_2} = \poly{T_3}$.  Moreover,
$T_3$ is guaranteed to be tight by the definitions and results in
Section~\ref{sect:flow-networks} -- we leave to the reader the
straightforward details showing that $T_3$ is tight.  In
particular, for every $A \subseteq\aaa'_{\text{\rm in,out}}$ for which
$T_2(A)$ is defined, it holds that $T_3(A) \subseteq T_2(A)$.
Hence, for every $A\uplus B =\aaa'_{\text{\rm in,out}}$ for which
$T_2(A)$ and $T_2(B)$ are both defined:
\begin{equation}
\label{eq:effbind1} 
   T_3(A)\ \subseteq\ T_2(A)\,\cap\,-T_2(B) ,
\end{equation}
since $T_3(A) = -T_3(B)$ by Lemma~\ref{lem:necessity}.
Keep in mind that:
\begin{equation}
\label{eq:effbind2}
  \text{$\poly{T_3}$ is the largest polytope satisfying $\constOf{T_2}$},
\end{equation}
and every other polytope satisfying $\constOf{T_2}$ is 
a subset of $\poly{T_3}$.
We define one more typing 
$T_4 :\power{\aaa'_{\text{\rm in,out}}}\to\intervals{\reals}$ by
appropriately restricting $T_2$; namely, for every two-part partition
$A\uplus B =\aaa'_{\text{\rm in,out}}$:
\begin{alignat*}{4}
  & T_4(A)\ &&:=\ && 
    \begin{cases}
    \phantom{-}T_2(A)\,\cap\,-T_2(B)\quad
      &\text{if both $T_2(A)$ and $T_2(B)$ are defined},
    \\
    \phantom{-}\text{undefined}  
    &\text{if both $T_2(A)$ and $T_2(B)$ are undefined}.
    \end{cases}
\end{alignat*}
Hence, $\poly{T_4}$ satisfies $\constOf{T_2}$, so that also
for every $A\subseteq \aaa'_{\text{\rm in,out}}$ for which
$T_4(A)$ is defined, we have $T_3(A) \supseteq T_4(A)$, by
(\ref{eq:effbind2}) above.
Hence, for every $A\uplus B =\aaa'_{\text{\rm in,out}}$ for which
$T_4(A)$ and $T_4(B)$ are both defined, we have:
\begin{equation}
\label{eq:effbind3} 
   T_3(A)\ \supseteq\ T_4(A)\ =
   \ -T_4(B)\ =\ T_2(A)\,\cap\,-T_2(B) .
\end{equation}
Putting (\ref{eq:effbind1}) and (\ref{eq:effbind3}) together:
\[ 
   T_2(A)\,\cap\,-T_2(B)\ \subseteq
   \ T_3(A)\ \subseteq\ \ T_2(A)\,\cap\,-T_2(B),
\]
which implies $T_3(A) = T_2(A)\,\cap\,-T_2(B) = T_4(A)$.
Hence, also, for every $A\subseteq \aaa'_{\text{\rm in,out}}$ 
for which $T_3(A)$ is defined, $T_3(A) = T_4(A)$. This implies 
$\poly{T_3} = \poly{T_4}$ and that $T_4$ is tight. $T_4$ is none
other than $T'$ in algorithm $\BindTone$, thus concluding the proof
of the first part in the conclusion of the proposition. For
the second part, it is readily checked that if $T$ is a total
typing, then so is $T'$ (details omitted).
\end{proof}
\end{Proof}

\paragraph{Complexity of $\BindTone$.} We measure the run-time
of $\BindTone$ by the number of bookkeeping steps (whether a
variable/arc name is in a set or not) and the number of number-comparisons
(there are no additions and subtractions in $\BindTone$) as a function
of:
\begin{itemize}
\item $\size{\aaa_{\text{in,out}}}$, the number of input/output arcs,
\item $\size{T}$, the
      number of assigned types in the initial typing $T$.
\end{itemize} 
We consider each of the three parts separately:
\begin{enumerate}
\item The first part, line \ref{step:effBing1},  
      runs in $\bigO{\size{\aaa_{\text{in,out}}}\cdot\size{T}}$ time,
      according to the following reasoning. Suppose the types of $T$
      are organized as a list with $\size{T}$ entries, which we can
      scan from left to right. The algorithm removes every type
      assigned to a subset $A\subseteq\aaa_{\text{in,out}}$
      intersecting $\Set{a^{+},a^{-}}$. There are $\size{T}$ types to
      be inspected, and the subset $A$ to which $T(A)$ is assigned has
      to be checked that it does not contain $a^{+}$ or $a^{-}$.  The
      resulting intermediate typing $T_1$ is such that
      $\size{T_1}\leqslant\size{T}$.
\item The second part of $\BindTone$, line 
      \ref{step:effBing2}, runs in
      $\bigO{\size{\aaa'_{\text{in,out}}}\cdot\size{T_1}}$ time. 
      It inspects each of the $\size{T_1}$ types, looking
      for one assigned to $\aaa'_{\text{in,out}}$, each
      such inspection requiring $\size{\aaa'_{\text{in,out}}}$
      comparison steps. If it finds such a type, it replaces it by
      $[0,0]$. If it does not find such a type, it adds
      the type assignment $\Set{\aaa'_{\text{in,out}}\mapsto [0,0]}$.
      The resulting intermediate typing $T_2$ is such that
      $\size{T_2} = \size{T_1}$ or $\size{T_2} = 1+\size{T_1}$.
\item The third part, from line \ref{step:effBing5} to
      line \ref{step:effBing6}, runs in 
      $\bigO{\size{\aaa'_{\text{in,out}}}\cdot\size{T_2}^2}$ time.
      For every type $T_2(A)$, it looks for a type $T_2(B)$ in 
      at most $\size{T_2}$ scanning steps, such that 
      $A\uplus B = \aaa'_{\text{in,out}}$ in at most 
      $\size{\aaa'_{\text{in,out}}}$ comparison steps; if and
      when it finds a type $T_2(B)$, which is guaranteed to be
      defined, it carries out the operation
      in line~\ref{step:effBing555}.
\end{enumerate}
Adding the estimated run-times in the three parts,
the overall run-time of $\BindTone$ is  
$\bigO{\size{\aaa_{\text{in,out}}}\cdot\size{T}^2}$. 
Let $\delta = \size{\aaa_{\text{in,out}}} \geqslant 2$.
In the particular case when $T$ is a total typing which
therefore assigns a type to each of the $2^{\delta}$ subsets of 
$\aaa_{\text{in,out}}$, the overall run-time of $\BindTone$ is 
$\bigO{\delta\cdot 2^{2\delta}} = \bigO{2^{\log\delta + 2\delta}} = 2^{\bigOO{\delta}}$.

Note there are no arithmetical steps (addition, multiplication, etc.)
in the execution of $\BindTone$; besides the bookkeeping involved in
partitioning $\aaa'_{\text{in,out}}$ in two disjoint parts, $\BindTone$
uses only comparison of numbers in line~\ref{step:effBing555}.

\begin{Proof}{for part 3 in 
Theorem~\ref{fact:principal-typing-build-up}} 
Let $\N'$ and $\N''$ be two separate networks, with arcs 
$\aaa' = \aaa'_{\text{in}}\uplus\aaa'_{\text{out}}\uplus\aaa'_{\text{\#}}$
and 
$\aaa''=\aaa''_{\text{in}}\uplus\aaa''_{\text{out}}\uplus\aaa''_{\text{\#}}$,
respectively. 
The \emph{parallel addition} of $\N'$ and $\N''$,
denoted $\ConnP{\N'}{\N''}$, simply places $\N'$ and $\N''$
next to each other without connecting any of their external
arcs. If $\N = \ConnP{\N'}{\N''}$, then
the input and output arcs of $\N$ are
$\aaa'_{\text{in}}\uplus\aaa''_{\text{in}}$ and
$\aaa'_{\text{out}}\uplus\aaa''_{\text{out}}$, respectively, 
and its internal arcs are $\aaa'_{\text{\#}}\uplus\aaa''_{\text{\#}}$.
If $\N'$ and $\N''$ are each connected, we view $\N$ as
a network with two separate connected components.

Let $T'$ and $T''$ be % tight 
principal typings for the
two separate networks $\N'$ and $\N''$. We define the
the \emph{parallel addition} of $T'$ and $T''$ as follows:
\[
    (\ParAdd{T'}{T''})(A)
    \ :=\ \begin{cases}
         [0,0]\qquad   & \text{if $A = \varnothing$ or 
            $A = \aaa'_{\text{\rm in,out}}\cup\aaa''_{\text{\rm in,out}}$},
         \\[1ex]
         T'(A)    &\text{if $A\subseteq\aaa'_{\text{\rm in,out}}$
                        and $T'(A)$ is defined},
         \\[1ex]
         T''(A)    &\text{if $A\subseteq\aaa''_{\text{\rm in,out}}$
                        and $T''(A)$ is defined},
         \\[1ex]
         \text{undefined}\qquad &\text{otherwise}.
         \end{cases}
\]

\begin{lemma}[Typing for Parallel Addition]
\label{prop:parallel-addition}
\label{lem:parallel-addition}
\label{prop:partial-addition}
Let $\N'$ and $\N''$ be two separate networks with external arcs
$\aaa'_{\text{\rm in,out}} = \aaa'_{\text{\rm in}}\uplus\aaa'_{\text{\rm out}}$
and $\aaa''_{\text{\rm in,out}} = 
\aaa''_{\text{\rm in}}\uplus\aaa''_{\text{\rm out}}$,
respectively. Let $T'$ and $T''$ be % tight 
principal typings for $\N'$ and $\N''$, respectively.

\smallskip
\noindent
\textbf{Conclusion}: $(\ParAdd{T'}{T''})$ is a % tight 
principal typing for the network $\ConnP{\N'}{\N''}$.
\end{lemma}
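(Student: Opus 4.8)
The plan is to check directly that $T := \ParAdd{T'}{T''}$ meets the four conditions of Theorem~\ref{fact:existence-and-uniqueness} for the network $\N := \ConnP{\N'}{\N''}$ — namely that $T$ is locally total, tight, sound, and complete. Everything rides on the fact that $\N$ is the \emph{disjoint} union of $\N'$ and $\N''$: no node and no arc is shared and no external arc is connected, so flows, IO assignments, and the feasibility conditions all split component-wise.

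First I would record the decomposition. Write $\aaa_{\text{in,out}} = \aaa'_{\text{in,out}} \uplus \aaa''_{\text{in,out}}$. An IO assignment $f:\aaa_{\text{in,out}}\to\nreals$ splits as $f = f'\uplus f''$ with $f' = f|_{\aaa'_{\text{in,out}}}$ and $f'' = f|_{\aaa''_{\text{in,out}}}$, and for $A\subseteq\aaa'_{\text{in,out}}$ the quantity $\sum f(A\cap\aaa_{\text{in}}) - \sum f(A\cap\aaa_{\text{out}})$ equals the corresponding balance of $f'$ (symmetrically for $A\subseteq\aaa''_{\text{in,out}}$). Likewise any flow on $\N$ is a disjoint union of a flow on $\N'$ and a flow on $\N''$, and it is feasible iff both halves are, since conservation is a condition at each node and capacity is a condition at each arc, and $\N',\N''$ have disjoint node and arc sets. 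From the definition of $\ParAdd$, $f$ satisfies $T$ iff $f'$ satisfies $T'$ and $f''$ satisfies $T''$; the one extra clause, $T(\aaa_{\text{in,out}})=[0,0]$, is then automatic, because the global balance of $f$ is the sum of the $\aaa'_{\text{in,out}}$-balance of $f'$ and the $\aaa''_{\text{in,out}}$-balance of $f''$, each of which is $0$ as $T'(\aaa'_{\text{in,out}})=T''(\aaa''_{\text{in,out}})=[0,0]$. Hence $\poly{T}$ is exactly $\poly{T'}\times\poly{T''}$ under this splitting.

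Soundness and completeness then transport across the splitting. For soundness: if $f$ satisfies $T$ then $f'$ satisfies $T'$ and $f''$ satisfies $T''$, so by soundness of $T'$ and $T''$ there are feasible flows $g'$ in $\N'$ and $g''$ in $\N''$ extending them, and by disjointness $g'\uplus g''$ is a feasible flow in $\N$ extending $f$. For completeness: a feasible flow $g$ in $\N$ restricts to feasible flows in $\N'$ and $\N''$, which satisfy $T'$ and $T''$ by completeness, so $g$ (on its external arcs) satisfies $T$ by the equivalence above. For tightness: given $A$ with $T(A)$ defined and $r\in T(A)$, if $A\subseteq\aaa'_{\text{in,out}}$ then $T(A)=T'(A)$ and tightness of $T'$ yields $f'\in\poly{T'}$ realizing $r$; picking any $f''\in\poly{T''}$ — which exists since a tight typing has nonempty polytope (applying tightness to $\varnothing$ with $r=0\in T''(\varnothing)$ already produces a point) — the point $f'\uplus f''\in\poly{T}$ realizes $r$; the case $A\subseteq\aaa''_{\text{in,out}}$ is symmetric, and $A=\varnothing$ or $A=\aaa_{\text{in,out}}$ is trivial since then $r=0$. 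Finally, local totality: the components of $\N$ are precisely the components of $\N'$ together with those of $\N''$; on a set $B$ contained in the external arcs of one such component, $T$ agrees with whichever of $T',T''$ owns that component and is defined by local totality of $T',T''$; on a union $B\uplus B'$ with $B,B'$ nonempty in distinct components, $T$ is undefined, apart from the two mandatory assignments of $[0,0]$ to $\varnothing$ and to $\aaa_{\text{in,out}}$ that every network typing carries.

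The only step that is not pure powerset bookkeeping is the soundness direction — checking that the glued flow $g'\uplus g''$ is actually feasible on $\N$ — and even that is immediate once one notes that $\N'$ and $\N''$ are vertex- and arc-disjoint and that feasibility is verified node by node and arc by arc. So I expect no genuine obstacle: the content of the lemma is really that ``placing two networks side by side'' is the trivial composition on both the network side and the typing side, and the proof is a matching of the two decompositions.
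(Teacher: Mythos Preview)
Your proposal is correct and follows the same approach the paper does: the paper's own proof simply observes that there is no communication between $\N'$ and $\N''$, declares the result a straightforward consequence of the definitions, and omits all details. You have supplied exactly those omitted details, verifying local totality, tightness, soundness, and completeness component-wise via the disjointness of $\N'$ and $\N''$.
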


\begin{proof}
There is no communication between $\N'$ and $\N''$.
The conclusion of the proposition is a straightforward consequence
of the definitions. All details omitted.
\end{proof}

The typing $(\ParAdd{T'}{T''})$ is partial even when
$T'$ and $T''$ are total typings. We need to define the 
\emph{total parallel addition} 
of total typings which is another total typing.
If $[r_1,s_1]$ and $[r_2,s_1]$
are intervals of real numbers for some $r_1\leqslant s_1$
and $r_2\leqslant s_2$, we write $[r_1,s_1]+[r_2,s_2]$ 
to denote the interval $[r_1+r_2,s_1+s_2]$:
\[
  [r_1,s_1]+[r_2,s_2]\ :=
  \ \Set{\, t\in\reals\;|\; r_1+r_2\leqslant t\leqslant s_1+s_2\,}
\]
Let $T'$ and $T''$ be tight and total typings over 
disjoint sets of external arcs, 
$\aaa'_{\text{\rm in,out}} = \aaa'_{\text{\rm in}}\uplus\aaa'_{\text{\rm out}}$ 
and 
$\aaa''_{\text{\rm in,out}}=\aaa''_{\text{\rm in}}\uplus\aaa''_{\text{\rm out}}$, 
respectively.
We define the \emph{total parallel addition} $(\TotAdd{T'}{T''})$ of
the typings $T'$ and $T''$ as follows. For every
$A\subseteq\aaa'_{\text{\rm in,out}}\cup\aaa''_{\text{\rm in,out}}$:
\[
    (\TotAdd{T'}{T''})(A)
    \ :=\ \begin{cases}
         [0,0]\qquad   & \text{if $A = \varnothing$ or 
            $A = \aaa'_{\text{\rm in,out}}\cup\aaa''_{\text{\rm in,out}}$},
         \\[1ex]
         T'(A')+T''(A'')
              &\text{if $A = A'\uplus A''$ with
               $A' = A\cap\aaa'_{\text{\rm in,out}}$
               and $A'' = A\cap\aaa''_{\text{\rm in,out}}$.}
         \end{cases}
\]

\begin{lemma}[Total Typing for Parallel Addition]
\label{prop:total-addition}
\label{lem:total-addition}
Let $\N'$ and $\N''$ be two separate networks with external arcs
$\aaa'_{\text{\rm in,out}} = \aaa'_{\text{\rm in}}\uplus\aaa'_{\text{\rm out}}$
and 
$\aaa''_{\text{\rm in,out}}=\aaa''_{\text{\rm in}}\uplus\aaa''_{\text{\rm out}}$,
respectively. Let $T'$ and $T''$ be % tight, total, and 
principal typings for $\N'$ and $\N''$, respectively.

\smallskip
\noindent
\textbf{Conclusion}: $(\TotAdd{T'}{T''})$ is a % tight, total, and
tight, sound, and complete typing for the network $\ConnP{\N'}{\N''}$.
Moreover, if $T'$ and $T''$ are total (because $\N'$ and $\N''$ are
each a connected network), then $(\TotAdd{T'}{T''})$ is also total --
but not locally total, because $\ConnP{\N'}{\N''}$ consists of two
separate components.
\end{lemma}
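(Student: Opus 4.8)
The plan is to reduce everything to the fact that there is no communication between $\N'$ and $\N''$ inside $\ConnP{\N'}{\N''}$, which makes both the feasible-flow side and the typing side split as a Cartesian product. First I would record the elementary observation that a function $f:\aaa'\uplus\aaa''\to\nreals$ is a feasible flow in $\ConnP{\N'}{\N''}$ if and only if its restriction to $\aaa'$ is feasible in $\N'$ and its restriction to $\aaa''$ is feasible in $\N''$ (flow conservation at a node of $\N'$ involves only arcs of $\N'$, and likewise for capacity constraints at an arc). Hence the set of IO assignments of $\ConnP{\N'}{\N''}$ extendable to feasible flows is exactly $X'\times X''$, where $X'$ and $X''$ are the corresponding extendable-IO sets of $\N'$ and $\N''$; and since $T'$ and $T''$ are principal, hence sound and complete, $X'=\poly{T'}$ and $X''=\poly{T''}$. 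This is the same product decomposition already invoked for Lemma~\ref{lem:parallel-addition}.

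Second, I would establish the key polytope identity $\poly{\TotAdd{T'}{T''}} = \poly{T'}\times\poly{T''}$ inside $(\nreals)^{(p'+q')+(p''+q'')}$. Write a generic IO assignment of $\ConnP{\N'}{\N''}$ as a pair $(g',g'')$, with $g'$ and $g''$ its restrictions to $\aaa'_{\text{in,out}}$ and $\aaa''_{\text{in,out}}$, and for $A = A'\uplus A''$ with $A'\subseteq\aaa'_{\text{in,out}}$, $A''\subseteq\aaa''_{\text{in,out}}$ observe that the signed sum $\theta_A(g) := \sum g(A\cap\aaa_{\text{in}}) - \sum g(A\cap\aaa_{\text{out}})$ decomposes additively as $\theta_{A'}(g') + \theta_{A''}(g'')$. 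The inclusion $\supseteq$ is immediate: if $g'\in\poly{T'}$ and $g''\in\poly{T''}$ then $\theta_{A'}(g')\in T'(A')$ and $\theta_{A''}(g'')\in T''(A'')$, so $\theta_A(g)\in T'(A')+T''(A'') = (\TotAdd{T'}{T''})(A)$. For $\subseteq$ it suffices to note that among the constraints of $\TotAdd{T'}{T''}$ are those indexed by $A = A'\uplus\varnothing$, which read $\theta_{A'}(g')\in T'(A')+T''(\varnothing) = T'(A')$ (using $T''(\varnothing)=[0,0]$), i.e. they are exactly $\constOf{T'}$ imposed on the $g'$-coordinates; symmetrically the sets $\varnothing\uplus A''$ recover $\constOf{T''}$ on the $g''$-coordinates, while the mixed constraints ($A'\neq\varnothing\neq A''$) are automatically implied and add nothing. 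Combined with the first paragraph, this gives $\poly{\TotAdd{T'}{T''}} = \poly{T'}\times\poly{T''} = X'\times X''$, which is precisely the assertion that $\TotAdd{T'}{T''}$ is sound and complete for $\ConnP{\N'}{\N''}$.

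Third, tightness. Let $(\TotAdd{T'}{T''})(A)=[r,s]$ be a defined type with $A=A'\uplus A''$, $T'(A')=[r',s']$, $T''(A'')=[r'',s'']$, so $r = r'+r''$ and $s = s'+s''$; fix $t\in[r,s]$. The elementary splitting of a sum interval — there exist $t'\in[r',s']$, $t''\in[r'',s'']$ with $t'+t''=t$, e.g. $t' = \min\{s',\max\{r',t-r''\}\}$ and $t'' = t-t'$ — lets me invoke tightness of $T'$ to obtain $g'\in\poly{T'}$ with $\theta_{A'}(g')=t'$ and tightness of $T''$ to obtain $g''\in\poly{T''}$ with $\theta_{A''}(g'')=t''$; then $(g',g'')\in\poly{\TotAdd{T'}{T''}}$ by the identity above and $\theta_A(g) = t'+t'' = t$. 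The boundary cases $A=\varnothing$ and $A=\aaa'_{\text{in,out}}\cup\aaa''_{\text{in,out}}$ carry type $[0,0]$ and are discharged by taking any point of the (assumed nonempty) polytope, using $\theta_\varnothing\equiv 0$ in the first case and flow conservation in the second. Finally, totality: if $T'$ and $T''$ are total then every $A$ in the combined powerset has a unique decomposition $A'\uplus A''$ with $T'(A')$ and $T''(A'')$ both defined, so $\TotAdd{T'}{T''}$ is total; it is not locally total because $\N'$ and $\N''$ are distinct components of $\ConnP{\N'}{\N''}$ yet $(\TotAdd{T'}{T''})(B\cup B')$ is defined for nonempty $B\subseteq\aaa'_{\text{in,out}}$ and nonempty $B'\subseteq\aaa''_{\text{in,out}}$, contradicting the second clause of the definition of ``locally total''.

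I expect no serious obstacle: the lemma is bookkeeping once the no-communication product decomposition of paragraph one is in place. The only spots needing a touch of care are the sum-interval splitting used for tightness and the degenerate situation in which $\N'$ or $\N''$ admits no feasible flow (empty polytope), which makes soundness vacuous and tightness trivial.
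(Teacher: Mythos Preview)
Your proposal is correct and takes essentially the same approach as the paper: the paper's proof simply says ``Similar to the proof of Lemma~\ref{prop:partial-addition}. Straightforward consequence from the fact there is no communication between $\N'$ and $\N''$. All details omitted.'' You have supplied precisely those omitted details --- the product decomposition of feasible flows, the polytope identity $\poly{\TotAdd{T'}{T''}} = \poly{T'}\times\poly{T''}$, and the interval-splitting argument for tightness --- all of which are exactly what the paper's one-line justification is gesturing at.
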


\begin{proof}
Similar to the proof of Lemma~\ref{prop:partial-addition}.
Straightforward consequence from the fact there is no communication
between $\N'$ and $\N''$. All details omitted.
\end{proof}

\paragraph{Complexity of $\ParAdd{}{}$ and $\TotAdd{}{}$.}
The cost of $(\ParAdd{T'}{T''})(A)$ is the cost of determining
whether $A \subseteq\aaa'_{\text{\rm in,out}}$ or 
$A \subseteq\aaa''_{\text{\rm in,out}}$, which is therefore
a number of bookkeeping steps linear in
$\size{\aaa'_{\text{\rm in,out}}}+\size{\aaa''_{\text{\rm in,out}}}$.
There are no arithmetical operations in the computation of
$(\ParAdd{T'}{T''})(A)$.

The cost of $(\TotAdd{T'}{T''})(A)$ is a little more involved.  In
addition to the bookkeeping steps, it includes two number-additions
after the two subsets $A' = A\cap\aaa'_{\text{\rm in,out}}$ and $A'' =
A\cap\aaa''_{\text{\rm in,out}}$ are determined.

\bigskip

We need one more operation on typings, $\BindT$, to define
Algorithm~\ref{alg:basic-compositional} precisely.  $\BindT$ uses
operations $\TotAdd{}{}$ defined above and $\BindTone$ defined in
the proof of part 2 of Theorem~\ref{fact:principal-typing-build-up}.
% Section~\ref{sect:binding-input-output-pairs}.  
Let $\N$ be a network with $k\geqslant 2$ components, say, $\N
= \ConnPP{\M_{1}}{\ConnPP{\M_{2}}{\ConnPP{\cdots}{\M_{k}}}}$, with
input arcs $\aaa_{\text{in}}$ and output arcs $\aaa_{\text{out}}$.
Let $T$ be a % tight 
principal typing for $\N$, say, $T
= \ParAdd{U_{1}}{\ParAdd{U_{2}}{\ParAdd{\cdots}{U_{k}}}}$, where $U_i$
is a % tight, total, and 
principal typing for component $\M_i$. Let
$a^{+}\in\aaa_{\text{in}}$ and $a^{-}\in\aaa_{\text{out}}$. 

There are two cases: (1) The two halves $a^{+}$ and $a^{-}$ occur in
the same component $\M_i$, and (2) the two halves occur in two
separate components $\M_i$ and $\M_j$ with $i\neq j$.  We define
$\BindT(a,T)$ by:
\[
   \BindT(a,T)\ :=\ 
   \begin{cases}
   \bigoplus \bigl(\Set{U_{1},\ldots,U_k} - \Set{U_i}\bigr)
    \ \oplus\ \BindTone(a,U_i)
     \quad &\text{if $a^{+}$ and $a^{-}$ are both in}
   \\
     \quad &\text{component $\M_i$,}
   \\[1.2ex]
   \bigoplus \bigl(\Set{U_{1},\ldots,U_k} - \Set{U_i,U_j}\bigr)
    \ \oplus\ \BindTone(a,\TotAdd{U_i}{U_j})
     \quad &\text{if $a^{+}$ and $a^{-}$ are in two}
   \\
     \quad &\text{separate components,}
   \\
     \quad &\text{$\M_i$ and $\M_j$, with $i\neq j$.}
   \end{cases}
\]
\end{Proof}

\paragraph{Complexity of $\BindT$.}
Part of the information given to $\BindT$ is whether $a^{+}$ and
$a^{-}$ are in the same component $\M_i$ or in two different
components $\M_i$ and $\M_j$ of $\N$. This is a bookkeeping task that
can be included in the execution of
Algorithm~\ref{alg:basic-compositional}. Suppose $\delta\geqslant 2$
is an upper bound on the number of external arcs of $\M_i$ (in the first
case) or $\ConnP{\M_i}{\M_j}$ (in the second case). 

In the first case, the cost of executing $\BindT(a,T)$ is the cost of
executing $\BindTone(a,T)$, which is 
$\bigO{\delta\cdot 2^{2\delta}}  = 2^{\bigOO{\delta}}$.
In the second case, we need to add the initial cost of computing
$\TotAdd{U_i}{U_j}$, which consists of performing two additions for
each of $2^{\delta}- 2$ subsets (of the external arcs of
$\ConnP{\M_i}{\M_j}$, \ie, the set of arcs/variables over which
$\TotAdd{U_i}{U_j}$ is defined).  This initial cost is
$\bigO{2^{\delta}}$, so that the total of executing $\BindT(a,T)$ in
the second case is again $2^{\bigOO{\delta}}$.
 
\bigskip

{\spacing{\normalsize}{1.2} 
\begin{algorithm} % [H]
\caption {\quad Calculate a Principal Typing for 
          Network $\N$} 
          \label{alg:basic-compositional}
\begin{algorithmic}[1]
    \Statex \textbf{algorithm name}: $\CompPT$
    \Statex \textbf{input}: 
      $\N = (\nn,\aaa)$ is a connected flow network,
    \Statex $\phantom{\textbf{input}:}$
      $\sigma = b_1 b_2 \cdots b_m$ is an ordering 
      of the internal arcs of $\N$ (a ``binding schedule''), 
    \Statex $\phantom{\textbf{input}:}$
         where $\nn = \Set{\nu_1,\nu_2,\ldots,\nu_n}$,
         \ \ $\aaa = \aaa_{\text{in,out}}\uplus\aaa_{\text{\#}}$,\ \ and
         \ $\aaa_{\text{\#}} = \Set{b_1, b_2, \ldots , b_m}$
    \Statex \textbf{output}: 
       principal typing $T$ for $\N$
    \Statex \rule[2pt]{15cm}{.9pt} % \hrulefill
    \State \label{step:basiccomp1}
      $\N_0\ :=\ \ConnPP{\M_{1}}{\ConnPP{\M_{2}}
                 {\ConnPP{\cdots}{\M_{n}}}}$
    \Statex \qquad where
             $\Set{\M_{1}, \M_{2}, \ldots, \M_{n}} = \Break{\N}$
    \State \label{step:basiccomp2}
      $T_0\ := 
    \ \ParAdd{\OnePT(\M_{1})}{\ParAdd{\OnePT(\M_{2})}
      {\ParAdd{\cdots}{\OnePT(\M_{n})}}}$
    \For {every $k = 0, 1, \ldots, m-1$} \label{step:basiccomp3}
     \State \label{step:basiccomp4}
        {$\N_{k+1}\ :=\ \bind{b_{k+1}}{\N_{k}}$}
     \State 
        {$T_{k+1}\ :=\ \BindT(b_{k+1},T_{k})$}
    \EndFor \label{step:basiccomp33}
    \State {$T := T_m$}
    \State \Return $T$
\end{algorithmic}
\end{algorithm}
}

\begin{lemma}[Inferring Principal Typings]
\label{thm:principal-typing-in-basic-compositional}
  Let $\N$ be a connected flow network and let
  \ $\sigma = b_1  b_2\cdots b_m$ \ be an ordering of all the 
  internal arcs of $\N$ (a
  ``binding schedule'').  Then the typing 
  \emph{$T = \CompPT(\N,\sigma)$} is principal
  for network $\N$.
\end{lemma}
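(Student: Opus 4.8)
The plan is to argue by induction on $k$ along the sequence of intermediate networks $\N_0,\N_1,\ldots,\N_m$ and typings $T_0,T_1,\ldots,T_m$ built by $\CompPT$, carrying the following invariant: $\N_k$ is the parallel addition $\ConnPP{\M_1}{\ConnPP{\cdots}{\M_{n_k}}}$ of its connected components, and $T_k = \ParAdd{U_1}{\ParAdd{\cdots}{U_{n_k}}}$ where each $U_i$ is a principal typing of the corresponding component $\M_i$. Since every $\M_i$ is connected, each $U_i$ is in fact a total typing; this is precisely the hypothesis required by $\BindTone$ (for every two-part partition of a component's external arcs, either both halves receive a type or neither does). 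By Lemma~\ref{lem:parallel-addition}, the invariant implies that $T_k$ is a principal typing of $\N_k$, so establishing the invariant for $k=m$ and using $\N_m = \N$ (which is connected, hence a single component) will finish the proof.

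For the base case $k=0$: by line~\ref{step:basiccomp1}, $\N_0 = \Break{\N}$ is the parallel addition of the $n = \size{\nn}$ one-node networks $\M_1,\ldots,\M_n$, and by line~\ref{step:basiccomp2}, $T_0$ is the parallel addition of $\OnePT(\M_1),\ldots,\OnePT(\M_n)$. By Lemma~\ref{prop:one-node-principal-typings} (equivalently, part~1 of Theorem~\ref{fact:principal-typing-build-up}), each $\OnePT(\M_i)$ is a principal --- hence total, since $\M_i$ is connected --- typing of $\M_i$, so the invariant holds at $k=0$.

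For the inductive step, suppose the invariant holds at step $k$ and let $b = b_{k+1}$, with halves $a^{+}\in\aaa_{\text{in}}$ and $a^{-}\in\aaa_{\text{out}}$; here $\N_{k+1} = \bind{b}{\N_k}$ and $T_{k+1} = \BindT(b,T_k)$. Binding never disconnects, so there are exactly the two cases in the definition of $\BindT$. If $a^{+}$ and $a^{-}$ lie in the same component $\M_i$, then $\N_{k+1}$ has the same components except that $\M_i$ is replaced by the still-connected $\bind{b}{\M_i}$; by Lemma~\ref{lem:efficient-binding} (part~2 of Theorem~\ref{fact:principal-typing-build-up}), $\BindTone(b,U_i)$ is a principal typing of $\bind{b}{\M_i}$, and Lemma~\ref{lem:parallel-addition} applied to this together with the untouched $U_j$ ($j\neq i$) shows $\BindT(b,T_k)$ is principal for $\N_{k+1}$. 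If instead $a^{+}\in\M_i$ and $a^{-}\in\M_j$ with $i\neq j$, then $\N_{k+1}$ merges these into the single connected component $\bind{b}{\ConnP{\M_i}{\M_j}}$; here I would first invoke Lemma~\ref{lem:total-addition} to see that $\TotAdd{U_i}{U_j}$ is a tight, sound, complete, and total typing of $\ConnP{\M_i}{\M_j}$, then apply Lemma~\ref{lem:efficient-binding} to conclude $\BindTone(b,\TotAdd{U_i}{U_j})$ is principal for the merged (connected) component, and finally Lemma~\ref{lem:parallel-addition} to reassemble with the remaining components. In both cases the invariant is restored at $k+1$, and the induction goes through.

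The one delicate point --- the part I would write most carefully --- is the second case of the inductive step. As stated, Lemma~\ref{lem:efficient-binding} takes a \emph{principal} typing of a network, whereas $\TotAdd{U_i}{U_j}$ is the typing of a network with \emph{two} components and is total but \emph{not} locally total. The reconciling observation is that the proof of Lemma~\ref{lem:efficient-binding} only uses tightness, soundness, completeness, and the ``both-halves-defined-or-neither'' property of its input (a total typing satisfies the last trivially), never local totality; so it applies verbatim to $\TotAdd{U_i}{U_j}$, and since the resulting network $\bind{b}{\ConnP{\M_i}{\M_j}}$ is connected, the typing $\BindTone(b,\TotAdd{U_i}{U_j})$ it produces is automatically locally total, hence genuinely principal. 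Making this explicit is what keeps the case analysis of $\BindT$ consistent with the hypotheses of the lemmas it is built from.
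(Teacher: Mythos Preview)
Your proof is correct and follows the same inductive strategy as the paper, invoking the same four lemmas (Lemmas~\ref{prop:one-node-principal-typings}, \ref{prop:partial-addition}, \ref{prop:efficient-binding}, and \ref{prop:total-addition}) in the same roles. Your careful treatment of the second case --- observing that $\TotAdd{U_i}{U_j}$ is total but not locally total, and that the proof of Lemma~\ref{lem:efficient-binding} never actually uses local totality --- makes explicit a point the paper's terse proof leaves implicit.
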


\begin{proof}
It suffices to show that, for every $k = 0, 1, 2,\ldots,m$, the typing
$T_k$ is principal for $\N_k$, where $T_k$ consists
of the parallel addition of as many % \emph{total}, tight, and 
principal typings as there are components in $\N_k$. This is true for $k=0$ by
Lemmas~\ref{prop:one-node-principal-typings}
and~\ref{prop:partial-addition}, and is true again
for each $k \geqslant 1$ by Lemmas~\ref{prop:efficient-binding}
and~\ref{prop:total-addition}. The final $\N_m$ consists of a single
component (itself) and the resulting typing $T_m$ is therefore total.
\end{proof}

\Hide{
Let $\N$ be a connected network and $\aaa_{\text{in,out}}$ its set of
external arcs. We define the \emph{dimension} of $\N$,
$\dime{\N} := \size{\aaa_{\text{in,out}}}$, which is the number of
external arcs of $\N$. 

Let $\N$ be a network with $k\geqslant 2$ components, say, $\N
= \ConnPP{\M_{1}}{\ConnPP{\M_{2}}{\ConnPP{\cdots}{\M_{k}}}}$, 
where each $\M_i$ is connected. We define $\dime{\N}$ by setting:
\[
   \dime{\N}\ :=
   \ \max\,\Set{\dime{\M_{1}},
   \;\dime{\M_{2}},\;\ldots\;,\;\dime{\M_{k}}}.
\]
}
Let $\N$ be a connected network and $\sigma = b_1 b_2 \cdots b_m$ 
an ordering of the internal arcs of $\N$. As in 
line~\ref{step:basiccomp1} of Algorithm~\ref{alg:basic-compositional},
let: 
\[
  \N_0\ =\ \ConnPP{\M_{1}}{\ConnPP{\M_{2}}
                 {\ConnPP{\cdots}{\M_{n}}}}
\]
where $\Set{\M_{1}, \M_{2}, \ldots, \M_{n}} = \Break{\N}$, and as
in line~\ref{step:basiccomp4}, let:
\[
  \N_{1}\ =\ \bind{b_{1}}{\N_{0}},
  \ \N_{2}\ =\ \bind{b_{2}}{\N_{1}},
  \ \ldots\ ,
  \ \N_{m}\ =\ \bind{b_{m}}{\N_{m-1}} .
\]
These $\N_0, \N_1, \ldots, \N_m$ are the same as in 
Section~\ref{sect:disassemble-and-reassemble}.
\Hide{
We define $\Index{\sigma}$, the \emph{index of the binding
schedule} $\sigma$, by setting:
\[
   \Index{\sigma}\ :=\ \max\,\Set{\exDim{\N_0},
   \exDim{\N_1},\ldots, \exDim{\N_m}} .
\]
In words, $\Index{\sigma}$ 
is an upper-bound on the number of external arcs
of any component in any intermediate network in 
$\Set{\N_{0}, \N_{1}, \ldots, \N_{m}}$.}

\paragraph{Complexity of $\CompPT$.} 
We ignore the effort to define the initial network $\N_0$ and then to
update it to $\N_1, \N_2, \ldots, \N_m$.  In fact, beyond the initial
$\N_0$, which we use to define the initial typing $T_0$, the
intermediate networks $\N_1,\N_2,\ldots,\N_m$ play no role in the
computation of the final typing $T = T_m$ returned by $\CompPT$. We
included the intermediate networks in the algorithm for clarity and to
make explicit the correspondence between $T_k$ and $\N_k$ for 
every $1\leqslant k\leqslant m$ (which is
used in the proof of
Lemma~\ref{thm:principal-typing-in-basic-compositional}).

The run-time complexity of $\CompPT$ is dominated by the computation
of type assignments (involving arithmetical additions, subtractions,
and comparisons), not by the bookkeeping steps.  Let $\delta
= \Index{\sigma}$.

There are at most $n\cdot 2^{\delta}$ type assignments in $T_0$ in
line~\ref{step:basiccomp2}.  In the \textbf{for}-loop from
line~\ref{step:basiccomp3} to line~\ref{step:basiccomp33}, $\CompPT$
calls $\BindT$ once in each of $m$ iterations, for a total of $m$
calls. Each such call to $\BindT$ runs in time 
$\bigO{\delta\cdot 2^{2\delta}}  = 2^{\bigOO{\delta}}$. 
Hence, the run-time complexity of
$\CompPT$ is:
\[
    \bigO{n\cdot 2^{\delta} + m\cdot \delta\cdot 2^{2\delta}} = 
    \bigO{(m+n)\cdot \delta\cdot 2^{2\delta}} = (m+n)\cdot 2^{\bigOO{\delta}}.
\]

\vspace{-.2in}
\section{Appendix: Proofs and Supporting Lemmas  
         for \hyperref[sect:max+min-flows-in-planar-networks]%
         {Section~\ref*{sect:max+min-flows-in-planar-networks}}}
  \label{sect:appendix:planar}
  %% appendixForPlanarNetworks.tex

We need a few definitions and several technical result 
% about undirected planar graphs 
% (Lemma~\ref{lem:outerplanarity-increased-by-factor-of-atmost-2})
before we can prove Theorem~\ref{fact:principal-typing-of-planar} and
Lemma~\ref{lem:from-arbitrary-to-3-regular}
in Section~\ref{sect:max+min-flows-in-planar-networks}.

\begin{definition}{Onion-Peel Arcs and Cross Arcs}
\label{def:peeling-and-cross}
Let $\N = \N_1 = (\nn,\aaa)$ be a planar network, given with a
specific planar embedding. We define $L_1$ as the set of nodes
incident to $\OutF{\N_1}$, and define $L_i$ for $i>1$ recursively
as the set of nodes incident to $\OutF{\N_i}$, where $\N_i$ is
the planar embedding obtained after deleting all the nodes in
$L_1\cup\cdots\cup L_{i-1}$ and all the arcs incident to them.

We call $L_i$, for $i\geqslant 1$, the $i$-th \emph{onion peel}, or
the $i$-th \emph{peeling}, of the given planar embedding of $\N$.  If
the outerplanarity of the planar embedding is $k$, then there are $k$
non-empty peelings. We pose $\N_{k+1} = \varnothing$, the empty
network obtained after deleting the $k$-th and last non-empty peeling
$L_k$.

We call an arc $a$ which is bounding $\OutF{\N_i}$ an 
\emph{arc of level-$i$ peeling}, or also a \emph{level-$i$ peeling arc}. 
If we ignore the level of $a$, we simply say $a$ is a \emph{peeling arc}.
The two endpoints of $a$ are necessarily two distinct nodes in $L_i$.

All the arcs of $\N$ which are not peeling arcs, and which are not
input/output arcs, are called
\emph{cross arcs}. If $a$ is a cross arc with endpoints 
$\Set{\nu,\nu'} = \Set{\head{a},\tail{a}}$, then there are
one of two cases:
\begin{itemize}[itemsep=1pt,parsep=2pt,topsep=2pt,partopsep=0pt] 
\item \emph{either} there are two consecutive peelings $L_i$ and
      $L_{i+1}$, with $1\leqslant i < k$, such that $\nu\in L_i$
      and $\nu'\in L_{i+1}$,
\item \emph{or} there is a peeling $L_i$, with $1\leqslant i \leqslant k$,
      such that $\nu,\nu'\in L_i$ and $a$ is not bounding $\OutF{\N_i}$.
\end{itemize}
In either case, a cross arc $a$ bounds two adjacent
inner faces of $\N_i$ and is one
of the arcs to be deleted when we define $\N_{i+i}$ from $\N_i$.

We thus have a classification of all arcs in a
$k$-outerplanar embedding of a planar network: (1) the peeling arcs,
which are further partioned into $k$ disjoint levels, (2) the cross
arcs, and (3) the input/output arcs.
\end{definition}

Definition~\ref{def:peeling-and-cross} is written for a planar
network $\N$, but it applies just as well to an undirected planar
graph. 
We can view networks as undirected finite graphs, ignoring directions 
of arcs and ignoring the presence of input arcs and output arcs.  
To make the distinction between networks and graphs
explicit, we switch from ``nodes'' and
``directed arcs'' (for networks) to ``vertices'' and ``undirected edges''
(for graphs).

We write $G = (V(G),E(G))$ to denote an undirected simple (no
self-loops, no multiple edges) graph $G$, whose set of vertices is
$V(G)$ and set of edges is $E(G) \subseteq V(G)\times V(G)$.  We write
$\Angles{v,w}$ for an edge whose endpoints are the vertices $v$ and
$w$, which is the same as $\Angles{w,v}$, ignoring the direction in
the two ordered pairs. 

\begin{lemma}
\label{lem:outerplanarity-increased-by-factor-of-atmost-2}
Let $G$ be a simple planar graph, given with a fixed planar
embedding. Let $G'$ be obtained from $G$ by the following operation, 
for every vertex $v$ of degree $\geqslant 4$:
\begin{enumerate}[itemsep=1pt,parsep=2pt,topsep=2pt,partopsep=0pt] 
\item If {$\degree{v} = d \geqslant 4$} and $\Set{e_1,\ldots,e_d}$
      are the $d$ edges incident to $v$, we replace vertex $v$ by a
      simple cycle with $d$ fresh vertices $\Set{v_1,\ldots,v_d}$.
\item For every $1\leqslant i\leqslant d$,
      we replace vertex $v$ by $v_i$ as one of the two
      endpoints of edge $e_i$. 
\end{enumerate}
\textbf{Conclusion}: If the planar embedding of $G$ has outerplanarity
$k$, the resulting $G'$ is a planar graph with a planar
embedding of outerplanarity $k'\leqslant 2k$ and
where every vertex has degree $\leqslant 3$.
\end{lemma}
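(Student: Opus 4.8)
The plan is to construct $G'$ explicitly from the given embedding of $G$ and then analyse its onion‑peeling. For each vertex $v$ with $\degree{v}=d\geqslant 4$, fix an open disk around $v$ small enough to meet no other vertex, with these disks pairwise disjoint; inside it, replace $v$ by a circle carrying $d$ fresh vertices $v_1,\ldots,v_d$ placed in the rotation order of the edges $e_1,\ldots,e_d$ at $v$, reattach $e_j$ to $v_j$, and draw the circle‑arc from $v_j$ to $v_{j+1}$ into the face of $G$ that previously lay between $e_j$ and $e_{j+1}$ at $v$. This is a legal planar embedding of $G'$, so $G'$ is planar; moreover each $v_j$ has degree exactly $3$ and each untouched vertex keeps its degree $\leqslant 3$, so $\maxDeg{G'}\leqslant 3$. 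It remains to bound the outerplanarity of this embedding. Call $v_j$ (or $v$ itself, if $\degree{v}\leqslant 3$) a \emph{descendant} of $v$; each vertex of $G'$ has a unique ancestor in $V(G)$.

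Let $L_1,\ldots,L_k$ be the onion‑peels of the given embedding of $G$ (Definition~\ref{def:peeling-and-cross}) and let $L'_1,L'_2,\ldots$ be the onion‑peels of the constructed embedding of $G'$. I will prove, by induction on $i$, the monotone statement: \emph{after $2i$ rounds of peeling $G'$, every descendant of every vertex of $L_1\cup\cdots\cup L_i$ has been removed} (equivalently, $L'_1\cup\cdots\cup L'_{2i}$ contains all such descendants). Since every vertex of $G'$ is a descendant of some $L_i$, the case $i=k$ shows the embedding is $\leqslant 2k$-outerplanar, which is the conclusion.

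The heart of the induction is a local mechanism for ``two peels of $G'$ per peel of $G$''. Consider any intermediate stage at which all descendants of $L_1,\ldots,L_{i-1}$ have been removed, and a surviving vertex $v\in L_i$ with $\degree{v}=d\geqslant 4$: near $v$ the outer boundary of the surviving embedding runs along the sub‑path of $C_v$ whose incident circle‑arcs were drawn into the (now) outer face; call this the \emph{outer arc} of $C_v$ and its complement the \emph{inner arc}. One peeling round deletes the outer arc and all incident edges; among these is the circle‑arc joining the two ends of the outer arc, and that edge was precisely what separated the bounded face interior to $C_v$ from the outer face. Once it is deleted the interior face merges with the outer face, so every vertex of the inner arc becomes incident to the outer face and is removed by the next round. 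Combined with the facts that every circle‑edge joins two descendants of one vertex and (Definition~\ref{def:peeling-and-cross}) every edge of $G$ joins vertices of the same peel or of consecutive peels, this keeps the descendants of $L_{i+1},\ldots,L_k$ shielded from the outer face until their own turn, and the two‑for‑one count compounds from layer to layer.

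The step I expect to be the main obstacle is making this rigorous, because peeling of $G'$ is not synchronised with peeling of $G$: a single round may strip outer arcs of cycles from several $L_i$-blocks at once, and it may finish the last descendants of $L_i$ in the very round it first touches $L_{i+1}$ — which is why the inductive hypothesis must be the monotone one above rather than an exact layer‑for‑layer correspondence. The step then needs a Jordan‑curve / separation argument: at each intermediate stage the not‑yet‑removed descendants of the current outermost layer $L_i$ form a union of circle‑arcs, joined by the surviving edges of $G$ internal to $L_i$, that still encloses all deeper descendants, and this barrier is breached within two rounds by the outer‑arc/inner‑arc mechanism. Two further wrinkles enter here: a vertex may occur several times on the outer boundary (for instance a cut vertex), so $C_v$ can have several outer arcs and the ``interior merges with the outer face'' step must be applied to each; and the cycles $C_v$ encountered at stage $i$ may already be partially peeled from outside, so the argument is about their surviving sub‑paths rather than whole cycles. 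None of this affects the bound $k'\leqslant 2k$, but it accounts for the proof being delicate.
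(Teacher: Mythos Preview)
Your approach is sound and takes a genuinely different route from the paper's. The paper argues by induction on the outerplanarity $k$: it decomposes $G$ into a graph $G_1$ built from the first onion-peel $P$ together with degree-$1$ ``hook'' vertices standing in for the severed cross-edges to the second peel $Q$, and a graph $G_2$ built from the remaining $k$ layers with matching hooks; it transforms each piece separately, applies the induction hypothesis to $G_2$, and re-glues $G_1'$ and $G_2'$ along the hooks, concluding that the outerplanarity of $G'$ is at most that of $G_1'$ plus that of $G_2'$. Your argument instead analyses the onion-peeling of $G'$ directly, proving the monotone claim that $2i$ rounds of peeling $G'$ clear all descendants of $L_1\cup\cdots\cup L_i$ via the outer-arc/inner-arc mechanism. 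Your route makes the ``two peels of $G'$ per peel of $G$'' intuition explicit and avoids the hook bookkeeping; the paper's route sidesteps tracking the unsynchronised peeling of $G'$ at the cost of the decomposition and gluing machinery. One wording slip to fix: your sentence ``among these is the circle-arc joining the two ends of the outer arc'' is not well-posed when the outer arc spans more than one circle-edge; what you actually need is that at least one circle-edge \emph{within} the outer arc borders the current outer face on its non-$F_v$ side (a cycle vertex lies on the outer face only if one of its two incident circle-edges does), and deleting that edge is what merges $F_v$ with the outer face.
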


\begin{proof}
The proof is by induction on the outerplanarity $k\geqslant 1$.
We omit the straightforward proof for the case $k = 1$: The construction
of $G'$ produces a planar embedding with outerplanarity $\leqslant 2$
and where every vertex has degree $\leqslant 3$. To be more specific, if
$G$ has an inner face $F$, a vertex $v$ on the boundary of $F$ with 
$\degree{v}\geqslant 4$, and an edge $\Angles{v,w}$ not contained in
$\OutF{G}$, then the new $G'$ has outerplanarity $2$. Otherwise,
if this condition is not satisfied, $G'$ has outerplanarity $1$.

Proceeding inductively, the \emph{induction hypothesis} assumes that,
given an arbitrary planar $G$ with a planar embedding of outerplanarity
$k\geqslant 1$, the transformation described in the lemma statement
produces a planar $G'$ with a planar embedding of outerplanarity
$k'\leqslant 2k$ and where every vertex has degree $\leqslant 3$.

We prove the result again for an arbitrary planar graph $G$ with a
planar embedding of outerplanarity $k+1$. For the rest of the proof, we
uniquely label every edge with a positive
integer. Thus, we denote an edge $e$ by a triple $\Angles{v,w,\ell}$
where $v$ and $w$ are distinct vertices and $\ell\in\posnats$.
We also introduce an additional set of vertices, which we
call \emph{hooks}. For every $v\in V(G)$, the set of hooks
associated with $v$ is:
\[
   \hook{*}{v}\ :=\ \Set{\,\hook{\ell}{v}\;|
   \; \text{there is $w\in V(G)$ and $\ell\in\posnats$
            such that $\Angles{v,w,\ell}\in E(G)$}\,} ,
\]
\ie, there are as many hooks associated with $v$ as
there are edges incident to $v$. The set of all the hooks in $G$ is
$\hook{*}{G} := \bigcup \Set{\hook{*}{v}\,|\,v\in V(G)}$.

Let $P$ and $Q$ be the first and second onion-peels of $G$,
respectively. $P$ and $Q$ are disjoint subsets of vertices.  Although
``$\Angles{v,w,\ell}$'' and ``$\Angles{w,v,\ell}$'' denote the same
edge, it will be clearer to write $\Angles{v,w,\ell}$ instead of
$\Angles{w,v,\ell}$ whenever $v\in P$ and $w\in Q$. We define two
graphs $G_1$ and $G_2$ from the $(k+1)$-outerplanar $G$:
\begin{alignat*}{8}
 & V(G_1)\ &&:=\ && P\ \cup\ \Set{\,\hook{\ell}{w}\;|
   \;\text{$w\in Q$, $\ell\in\posnats$, and there is $v\in P$  
           such that $\Angles{v,w,\ell}\in E(G)$}\,}
\\[1.5ex]
 & E(G_1)\ &&:=\ &&
   \Set{\,\Angles{v,w,\ell}\;|
      \;v,w\in P\text{ and }\Angles{v,w,\ell}\in E(G)\,}\ \cup
\\
 & && &&
   \Set{\,\Angles{v,\hook{\ell}{w},\ell}\;|
          \;\text{$v\in P$, $w\in Q$, and $\Angles{v,w,\ell}\in E(G)$}\,}
\\[1.5ex]
 & V(G_2) &&:= && (V(G) - P)\ \cup
    \ \Set{\,\hook{\ell}{v}\;|
    \;\text{$v\in P$, $\ell\in\posnats$, and there is $w\in Q$  
           such that $\Angles{v,w,\ell}\in E(G)$}\,}
\\[1.5ex]
 & E(G_2)\ &&:=\ &&
   \Set{\,\Angles{v,w,\ell}\;|
      \;v,w\in (V(G)-P)\text{ and }\Angles{v,w,\ell}\in E(G)\,}\ \cup
\\
 & && &&
   \Set{\,\Angles{\hook{\ell}{v},w,\ell}\;|
          \;\text{$v\in P$, $w\in Q$, and $\Angles{v,w,\ell}\in E(G)$}\,}
\end{alignat*}
Observe that every hook, \ie, a vertex of the form $\hook{\ell}{v}$
has degree $= 1$, and that every edge of the form
$\Angles{v,\hook{\ell}{w},\ell}$ or $\Angles{\hook{\ell}{v},w,\ell}$
is entirely contained in both $\OutF{G_1}$ and $\OutF{G_2}$. We need
to distinguish between \emph{open edges} and \emph{closed edges} of
$G_1$ and $G_2$. For $G_1$ first:
\begin{alignat*}{8}
  & E_{o}(G_1)\ &&:=\ && \Set{\,\Angles{v,\hook{\ell}{w},\ell}\;|
         \; v\in P,\ w\in Q,\ \ell\in\posnats\,} 
         \qquad &&\text{the \emph{open edges} of $G_1$}
\\[.8ex]
  & E_{c}(G_1)\ &&:=\ && E(G_1) - E_{o}(G_1)
         \qquad &&\text{the \emph{closed edges} of $G_1$}
\end{alignat*}
And similarly for $G_2$:
\begin{alignat*}{8}
  & E_{o}(G_2)\ &&:=\ && \Set{\,\Angles{\hook{\ell}{v},w,\ell}\;|
         \; v\in P,\ w\in Q,\ \ell\in\posnats\,} 
         \qquad &&\text{the \emph{open edges} of $G_2$}
\\[.8ex]
  & E_{c}(G_2)\ &&:=\ && E(G_2) - E_{o}(G_2)
         \qquad &&\text{the \emph{closed edges} of $G_2$}
\end{alignat*}
An \emph{open edge} is therefore an edge with a hook as one of its two
endpoints, which is always of degree $= 1$.
The graphs $G_1$ and $G_2$ are planar, and their definitions are such
that they produce a planar embedding of $G_1$ with outerplanarity $= 2$
and a planar embedding of $G_2$ with outerplanarity $= k$. These
assertions follow from the two facts below, together with
the fact that the presence of open edges drawn inward (as in $G_1$)
increases outerplanarity by $1$, and drawn outward (as in $G_2$)
does not increase outerplanarity:
\begin{itemize}[itemsep=1pt,parsep=2pt,topsep=2pt,partopsep=0pt] 
\item If we delete every open edge in $G_1$, we obtain the $1$-outerplanar
      subgraph of $G$ induced by $P$.
\item If we delete every open edge in $G_2$, we obtain the $k$-outerplanar
      subgraph of $G$ induced by $(V(G)-P)$.
\end{itemize}
An example of how $G$ is broken up into two graphs $G_1$ and $G_2$
is shown in Figure~\ref{fig:breaking-up}.
%
% \clearpage

\begin{minipage}[c][13cm][c]{0.9\linewidth} 
\begin{figure}[H] % [!ht] % [H] % [!ht] % [ht] 
        % H  "Pins" the inner float to the outer float,
        %    although ``minipage'' is not, strictly speaking, a float. 
        % h  Place the float here, i.e., 
        %    at the same point it occurs in the source text.
        % t  Position at the top of the page.
        % b  Position at the bottom of the page.
        % p  Put on a special page for floats only.
        % !  Override internal parameters Latex uses 
        %    for determining `good' float positions.
\begin{center}
\includegraphics[scale=.55,trim=0cm 12.25cm 0cm 0.80cm,clip]%
    {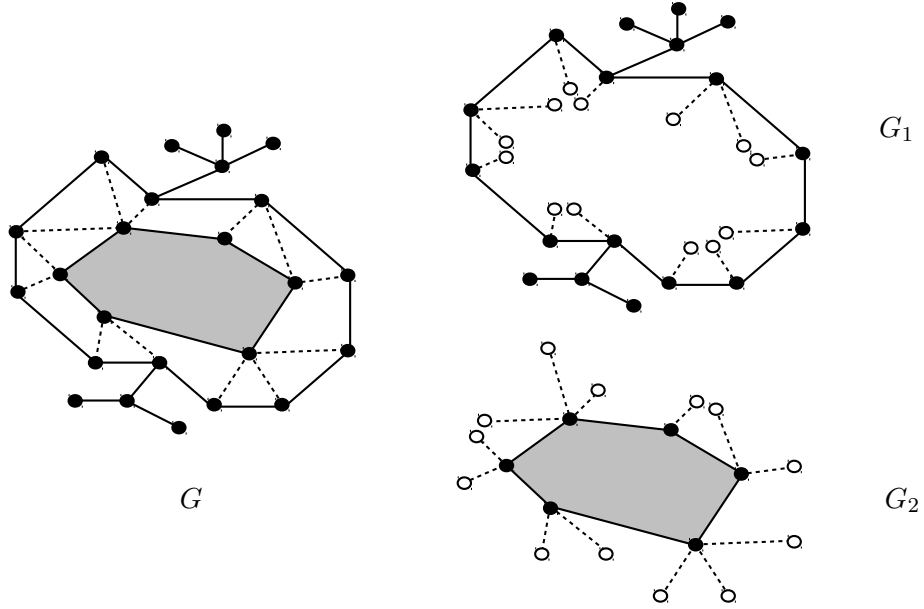}
\end{center}
\caption{Example for the proof of 
         Lemma~\ref{lem:outerplanarity-increased-by-factor-of-atmost-2}.
%         The planar graph $G$, given with a planar embedding of outerplanarity 
%         $k+1$, is shown on the left; planar graph $G_1$ is shown on the 
%         top right; and planar graph $G_2$ is shown on the bottom right. 
         The outermost edges in $G$ form its first peeling; the
         edges enclosing the shaded area form its second peeling;
         the shaded area, including the nodes of the second peeling, is
         a subgraph of $G$ of outerplanarity $k$.
         $G_1$ and $G_2$ are shown in planar embeddings of outerplanarity $2$
         and $k$, respectively. An \emph{open edge} (dashed line)
         in $G_1$ and $G_2$ has a \emph{hook} (white node) as one of
         its two endpoints. }
\label{fig:breaking-up}
\end{figure}

\ \vspace{-14cm}

\begin{minipage}[c][13cm][c]{0.9\linewidth} 
%
%
% \begin{minipage}[pos][height][contentpos]{width} text \end{minipage}
%
% (1) ``pos'' is ``c'' or ``t'' or ``b'' and controls vertical alignment
%     of minipage relative to the baseline of the surrounding text.
%
% (2) ``height'' is height of minipage.
% 
% (3) ``contentpos'' is position of content and is ``c'' or ``t'' or ``b''
%     or ``s'' for 'spread'.
%  
% (4) ``width'' is width of the box and is the only mandatory parameter.
%
\hspace{13.8cm} $G_1$ \\
\vspace{4cm} 

\hspace{4.5cm} $G$\hspace{9cm} $G_2$
\end{minipage}
\end{minipage}

\bigskip
\noindent
We can ``re-build'' $G$ from $G_1$ and $G_2$ as follows:
\begin{alignat*}{8}
 & V(G)\ &&=\ \ && \bigl(V(G_1) \cup V(G_2)\bigr)\ -
    \ \bigl(\hook{*}{G_1} \cup \hook{*}{G_2}\bigr)
\\[.8ex]
  & E(G)\ &&= && E_c(G_1)\ \cup\ E_c(G_2)\ \cup
\\
  & && &&
    \Set{\,\Angles{v,w,\ell}\;|\;\Angles{v,\hook{\ell}{w},\ell}\in E_o(G_1)
    \text{ and } \Angles{\hook{\ell}{v},w,\ell}\in E_o(G_2)\,}
\end{alignat*}
The preceding are not definitions of $V(G)$ and $E(G)$, but rather 
equalities that are easily checked against the earlier
definitions. They make explicit the way in which we use hooks and open
edges to connect two graphs.

\medskip
Let $G_1'$ be the graph obtained from $G_1$ according to the
transformation defined in the lemma statement, which produces a planar
embedding of $G_1'$ with outerplanarity $\leqslant 2$.  And let $G_2'$
be the graph obtained from $G_2$ according to the transformation
defined in the lemma statement, which, by the \emph{induction
hypothesis}, produces a planar embedding of $G_2'$ with outerplanarity
$\leqslant 2k$. 

We note carefully how open edges in $G_1$ may get transformed into open
edges in $G_1'$. Consider a vertex $v\in P$ with $\degree{v} =
d \geqslant 4$ and let the open edges of $G_1$ that have
$v$ as one of their two endpoints be:
\[
   \Angles{v,\hook{\ell_1}{w_1},\ell_1}\ ,\ \ \ldots\ \ ,
   \ \Angles{v,\hook{\ell_t}{w_t},\ell_t}
\]
where $1\leqslant t\leqslant d$. The number $t$ of open
edges with endpoint $v$ is not necessarily $d$. The corresponding open
edges in $G_1'$ are:
\[
   \Angles{v_{i_1},\hook{\ell_1}{w_1},\ell_1} ,\ \ldots\ ,
   \Angles{v_{i_t},\hook{\ell_t}{w_t},\ell_t}
\]
where $\Set{v_{i_1},\ldots,v_{i_t}}\subseteq
\Set{v_1,\ldots,v_d}$, and $\Set{v_1,\ldots,v_d}$ is the set
of fresh vertices in the simple cycle that replaces $v$ in $G'_1$.
Note that the transformation from $G_1$ to $G'_1$ does not affect
the second endpoints (the hooks) of these open edges, because the
degree of a hook is always $= 1$. Similar observations apply to the
way in which open edges in $G_2$ get transformed into open
edges in $G_2'$.

We are ready to define the desired graph $G'$ 
by connecting $G_1'$ and $G_2'$ via their hooks and open edges,
in the same way in which we can re-connect $G$ from $G_1$ and $G_2$:
\begin{alignat*}{8}
 & V(G')\ &&=\ \ && \bigl(V(G'_1)\ \ \cup\ &&V(G'_2)\bigr)\ -
    \ \bigl(\hook{*}{G'_1} \cup \hook{*}{G'_2}\bigr)
\\[1ex]
  & E(G')\ &&= && E_c(G'_1)\ \ \cup\ &&E_c(G'_2)\ \ \cup
\\
  & && &&
    \{\,\Angles{v_i,w_j,\ell}\; &&|
         \;\text{there are $v\in P$ and $w\in Q$ such that}
\\
  & && && &&\;\;\; \Angles{v_i,\hook{\ell}{w},\ell}\in E_o(G'_1)\text{ and }
    \Angles{\hook{\ell}{v},w_j,\ell}\in E_o(G'_2),
\\
  & && && &&\;\;\; \text{where $1\leqslant i\leqslant\degree{v}$
              and $1\leqslant j\leqslant\degree{w}$}\,\}
\end{alignat*}
This produces a planar graph $G'$ together with a planar embedding. 
To conclude the induction and the proof, it suffices to note that
the outerplanarity of $G'$ is ``the outerplanarity of $G_1$'' +
``the outerplanarity of $G_2$'' which is therefore $\leqslant 2k$.
\end{proof}

\begin{Proof}{for Lemma~\ref{lem:from-arbitrary-to-3-regular}} 
The computation is a little easier if we introduce a new node
$\nu_{\text{in,out}}$ and connect the tail of every input arc
$a\in\aaa_{\text{in}}$ to $\nu_{\text{in,out}}$,
\ie, $\tail{a} = \nu_{\text{in,out}}$, and the head of every output arc
$b\in\aaa_{\text{out}}$ to $\nu_{\text{in,out}}$,
\ie, $\head{b} = \nu_{\text{in,out}}$. In the resulting network,
there are no input arcs and no output arcs, with $r = n+1$ nodes.
The number $m$ of arcs remains unchanged, and they are now all
internal arcs. 
With no loss of generality, we assume for every $\nu\in\nn$:
\begin{itemize}[itemsep=1pt,parsep=2pt,topsep=2pt,partopsep=0pt] 
\item[(1)]  $\degree{\nu} \geqslant 3$, 
\item[(2)]  $\indegree{\nu} \neq 0 \neq \outdegree{\nu}$.
\end{itemize}
It is easy to see that every node $\nu$ violating one of the preceding
assumptions can be eliminated from $\N$. Also, with no loss of
generality, assume that $\size{\aaa_{\text{in}}}
+ \size{\aaa_{\text{out}}} = p+q \geqslant 3$, so that 
$\degree{\nu_{\text{in,out}}} \geqslant 3$. We also assume:
\begin{itemize}[itemsep=1pt,parsep=2pt,topsep=2pt,partopsep=0pt] 
\item[(3)]  the outerplanarity $k$ of $\N$ is $\geqslant 2$.
\end{itemize}
The case $k = 1$ is handled similarly, with few minor
adjustments (which, in fact, makes it easier).

\medskip
The construction of $\N'$ from $\N$ proceeds in two parts,
with each taking time $\bigOO{n}$.
The first part consists in eliminating all two-node cycles.
Let $\gamma$ be a two-node cycle, \ie, there are
two arcs $a$ and $a'$ such that:
\[
   \tail{a} = \head{a'} = \nu \quad
   \text{\ and\ }             \quad
   \head{a} = \tail{a'} = \nu'.
\]
To eliminate $\gamma$ as a two-node cycle, we insert a new
node $\mu$ in the middle of $a$, another new node $\mu'$ in the middle
of $a'$, and add a new arc $b = \Angles{\mu,\mu'}$. We make the new
arc $b$ \emph{dummy}, by setting $\lc(b) = \uc(b) = 0$, which prevents it 
from carrying any flow. Clearly:
\begin{itemize}[itemsep=1pt,parsep=2pt,topsep=2pt,partopsep=0pt] 
\item $\degree{\mu} = \degree{\mu'} = 3$.
\item If the two arcs $a$ and $a'$ do not enclose an inner face of $\N$,
      one of the two can be redrawn so that after inserting the new
      arc $b$, planarity is preserved.
\item If $k\geqslant 2$, it is easy to see that
      the outerplanarity remains the same. (This is the reason
      for assumption (3).)
\end{itemize}
This operation can be extended to all two-node
cycles in $\N$ in time $\bigOO{n}$, resulting in an equivalent
network, with fewer than $m$ new nodes and fewer than $\lceil m/2\rceil$
new arcs, and where the degree of every node $\geqslant 3$.

\medskip
The second part of the construction consists in replacing every node
$\nu\in\nn$ with $\degree{\nu} = s \geqslant 4$ by an appropriate
cycle with $s$ new nodes, say $\Set{\nu_1,\ldots,\nu_s}$, to obtain
a network satisfying conclusions 2, 3, 4, and 5.
More specifically, consider the arcs incident to $\nu$, say:
\[
   \Set{a_1,\ldots,a_s}\ := 
   \ \Set{\,a\in\aaa\;|\; \head{a} = \nu\text{ or } \tail{a} = \nu\,} .
\]
We introduce $s$ new arcs, say $\Set{b_1,\ldots,b_s}$, to form a
directed cycle connecting the new nodes $\Set{\nu_1,\ldots,\nu_s}$.
We make each new node $\nu_i$ the endpoint of an arc in
$\Set{a_1,\ldots,a_s}$, \ie, for every $1\leqslant i\leqslant s$:
\begin{itemize}
\item if $\head{a_i} = \nu$, we set $\head{a_i} := \nu_i$,
\item if $\tail{a_i} = \nu$, we set $\tail{a_i} := \nu_i$.
\end{itemize}
An example of the transformation from the node $\nu$ to the
directed cycle replacing it is shown in Figure~\ref{fig:one-node-expanded}.

\begin{figure}[t]
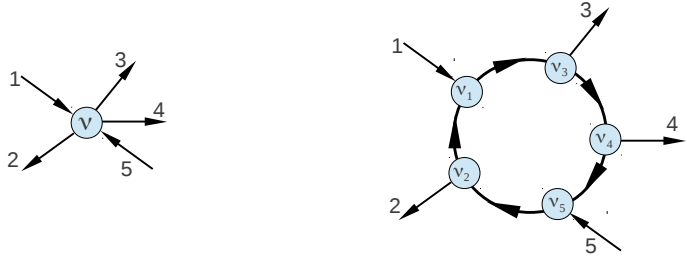
 % [H] % [!ht] % [ht]  
        % h  Place the float here, i.e., 
        %    at the same point it occurs in the source text.
        % t  Position at the top of the page.
        % b  Position at the bottom of the page.
        % p  Put on a special page for floats only.
        % !  Override internal parameters Latex uses 
        %    for determining `good' float positions.
%
\begin{minipage}[b][3.70cm][t]{0.4\linewidth}
\begin{center}
\includegraphics[scale=.25,trim=0cm 13.00cm 6cm 0.0cm,clip]%
    {./Graphics/node-of-degree-5}
\end{center}
\end{minipage}
\begin{minipage}[b][3.70cm][t]{0.4\linewidth}
\begin{center}
\includegraphics[scale=.25,trim=2cm 12.0cm 0cm 1.0cm,clip]%
    {./Graphics/node-of-degree-5-expanded}
\end{center}
\end{minipage}
\caption{A node $\nu$ of degree $= 5$ (on the left) is transformed into
     a cycle with $5$ nodes $\nu_1, \nu_2, \nu_3, \nu_4,$
     and $\nu_5$, 
     (on the right), each of degree $= 3$. The transformation preserves
     planarity.}
\label{fig:one-node-expanded}
\end{figure}

We want the cycle connecting the new nodes $\Set{\nu_1,\ldots,\nu_s}$
to put no restriction on flows, so we set all the lower
bounds to $0$ and all the upper bounds to the ``very large number'' $K$:
\[
  \lc(b_1) := \cdots := \lc(b_s) := 0\quad \text{ and }\quad
  \uc(b_1) := \cdots := \uc(b_s) := K .
\]
Repeating the preceding operation for every node $\nu\in\nn$,
it is straightforward to check that conclusions 1, 2, and 3,
in the statement of Lemma~\ref{lem:from-arbitrary-to-3-regular}
are satisfied, and the construction
can be carried out in time $\bigOO{n}$.

\medskip
For conclusion 4, consider an arc $a\in\aaa$.  If $a$ is incident to
one node $\nu$ of degree $\geqslant 4$ (resp. two nodes $\nu$ and
$\mu$ of degree $\geqslant 4$), then the preceding construction
introduces one new arc corresponding to $a$ in the cycle simulating
$\nu$ in $\N'$ (resp. two new arcs corresponding to $a$, one in the
cycle simulating $\nu$ and one in the cycle simulating $\mu$, in
$\N'$). Hence, the number $m'$ of arcs in $\N'$ is such that
$m'\leqslant 3m$.

Moreover, because every arc is incident to two distinct
nodes \emph{and} for every node $\nu$ in $\N'$ there are exactly three
arcs incident to $\nu$, the number $n'$ of nodes in $\N'$ is such that
$n' = 2m'/3 \leqslant 2 m$.

\medskip
It remains to prove conlusion 5. The preceding construction preserves
planarity: If $\N$ is given with a planar embedding, the new $\N'$
is produced with a planar embedding. 

The standard notion of an undirected graph $G$ does not include
the presence of one-ended edges corresponding to input/output arcs in
a network $\N$. In order to turn the input/output arcs of network $\N$ into
two-ended edges in graph $G$ we simply add a new node of degree $= 1$
at the end of every input/output arc missing a node.

With the preceding qualification, we can
take $G$ to be the undirected simple graph corresponding to network
$\N$ \emph{after elimination
of all two-node directed cycles}. This is the first part in the
two-part construction of $\N'$ from $\N$. Absence of two-node cycles
allows us to take $G$ as a simple graph (no multiple edges). 
Every vertex in $G$ has thus degree $1$ or degree $\geqslant 3$, as we assume
that all nodes in $\N$ have degrees $\geqslant 3$.

We take $G'$ to be the undirected simple graph corresponding to
network $\N'$ after the second part in the construction. 
The construction of $G'$ from $G$ in
Lemma~\ref{lem:outerplanarity-increased-by-factor-of-atmost-2}
corresponds to the construction of $\N'$ from $\N$ after elimination
of all two-node cycles. The conclusion of 
Lemma~\ref{lem:outerplanarity-increased-by-factor-of-atmost-2}
implies conclusion 5 in
Lemma~\ref{lem:from-arbitrary-to-3-regular}. \hfill \QED
\end{Proof}
% \bigskip

\begin{example}
\label{ex:running-example1}
Consider the planar network $\N$ on the left in 
Figure~\ref{fig:thirty-node-network}. We transform $\N$ into
a $3$-regular network $\N'$ according to the construction
in the proof of Lemma~\ref{lem:from-arbitrary-to-3-regular}.
We omit all arc directions
in $\N$ which play no role in the transformation. 

The outerplanarity of $\N$ is $3$: There are three enclosing peelings
(drawn with foldface arcs on the left in Figure~\ref{fig:thirty-node-network}). 
The outerplanarity of $\N'$ is guaranteed not to exceed $6$ by
Lemma~\ref{lem:outerplanarity-increased-by-factor-of-atmost-2}, but
is also $3$ in this example, as one can easily check. 

On the right in
Figure~\ref{fig:thirty-node-network-on-rectangular-grid}, $\N'$ is
re-drawn on a rectangular grid -- except for two arcs because of
a missing north-east corner and a missing south-west corner --
which makes its outerplanarity ($= 3$)
explicit and reasoning in the proof of 
Theorem~\ref{fact:principal-typing-of-planar} easier to follow. This
re-drawing can always be done in linear time~\cite{nishizeki2013}.
\end{example}

\begin{assumption}
\label{ass:3-regular-connected}
From now on, there is no loss of generality if we assume that:
\begin{enumerate}[itemsep=1pt,parsep=2pt,topsep=2pt,partopsep=0pt] 
\item Networks are connected.
\item Networks are $3$-regular.
\item There are no two-node cycles in networks.
\item No two distinct input/output arcs are incident to the
      same node.
\end{enumerate}
The second and third conditions follow from the construction
in the proof of Lemma~\ref{lem:from-arbitrary-to-3-regular}.
The fourth condition is equivalent to saying that a node cannot
be both a source and a sink.
\end{assumption}

The next definition, and lemma based on it, are not essential. But,
together with the preceding assumption, they simplify considerably
Algorithm~\ref{alg:bindSchedule} and proving its correctness.

\begin{definition}{Good Planar Embeddings}
\label{def:full-plane-embeddings}
\label{def:balanced-planar-embeddings}
\label{def:good-planar-embeddings}
Let $\N = (\nn,\aaa)$ be a network satisfying
Assumption~\ref{ass:3-regular-connected} and given
in a fixed $k$-outerplanar embedding, for some $k\geqslant 1$. 
From the peelings $L_1,\ldots,L_k$ specified in 
Definition~\ref{def:peeling-and-cross}, we define the
sets of nodes $L'_1,\ldots,L'_k$, respectively, as follows.
For every $1\leqslant i\leqslant k$:
\[
    L_i' := L_i - \Set{\,\nu\in L_i\;|
    \;\text{$\nu$ is incident to at most one peeling arc}\,} .
\]
In words, $L_i'$ is a subset of $L_i$ which is proper whenever
$L_i$ contains a node $\nu$ such that:
\begin{enumerate}[itemsep=1pt,parsep=2pt,topsep=2pt,partopsep=0pt] 
\item $\nu$ is incident to three cross arcs. 
\item $\nu$ is incident to two cross arcs and one input/output arc.
\item $\nu$ is incident to one cross arc and one input/output arc.
\end{enumerate}
Thus, $L_i'$ is defined to exclude all the nodes of 
$L_i$ that are of degree $\leqslant 1$ in
the network $\N_i$ (see Definition~\ref{def:peeling-and-cross}).

We say the planar embedding of $\N$ is \emph{good} if for every 
$1\leqslant i\leqslant k$, the nodes in $L_i'$  
form a single (undirected) simple cycle, namely, the outermost one, in the 
network $\N_i$. % defined in Definition~\ref{def:peeling-and-cross}.
\end{definition}

\begin{example}
\label{ex:running-example2}
For an example of how $L_i'$ may be different from $L_i$, consider the
$3$-outerplanar embedding in
Figure~\ref{fig:thirty-node-network-on-rectangular-grid}: $L_1 = L_1'$
and $L_3 = L_3'$, but $L_2 \neq L_2'$. The latter inequality is caused
by one of the nodes on the periphery of the south-east face, which is
incident to one cross arc and one input/output arc. 

In a good planar embedding, the sets $L'_1,\ldots,L'_k$
can be viewed as forming $k$ concentric simple cycles.
All the nodes in $(L_i - L_i')$ occur between the level-$i$
concentric cycle and the one immediately enclosing it (the
level-$(i-1)$ concentric cycle). This implies that, if
$\N$ is $3$-regular and no two distinct input/output arcs
are incident to the same node (as required by 
Assumption~\ref{ass:3-regular-connected}), then $L_1 = L_1'$
but we may have $L_i \neq L_i'$ for $i\geqslant 2$.
The $3$-outerplanar embedding in 
Figure~\ref{fig:thirty-node-network-on-rectangular-grid} is good.
\end{example}

\begin{lemma}
[From Planar Embeddings to Good Planar Embeddings]
\label{prop:good-planar-embeddings}
\label{lem:good-embeddings}
Let $\N = (\nn,\aaa)$ be a network satisfying 
Assumption~\ref{ass:3-regular-connected} and  
given in a specific planar embedding. 
In time $\bigOO{n}$, where $n = \size{\nn}$, we can
transform the given planar embedding of $\N$ into a planar
embedding of an equivalent $\N' = (\nn',\aaa')$ such
that:
\begin{enumerate}[itemsep=1pt,parsep=2pt,topsep=2pt,partopsep=0pt] 
\item The planar embedding of $\N'$ is good
      (and, in particular, $\N'$ satisfies 
       Assumption~\ref{ass:3-regular-connected}).
\item $\size{\nn'} \leqslant 
      2\cdot \size{\nn}$ and $\size{\aaa'} \leqslant 
      2\cdot \size{\aaa}$.
\item $\N$ and $\N'$ have the same outerplanarity.
\end{enumerate}
\end{lemma}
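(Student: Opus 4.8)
The plan is to repair the given $k$-outerplanar embedding of $\N$ \emph{peel by peel}, from the outermost onion peel inward, using local surgeries, and to make every new arc \emph{dummy} (lower- and upper-bound capacity both $0$, as in the proof of Lemma~\ref{lem:from-arbitrary-to-3-regular}), so that no feasible flow ever uses it and $\N'$ stays equivalent to $\N$. First I would set up the obstruction. Write $W_i$ for the boundary walk of $\OutF{\N_i}$, with $\N_i$ as in Definition~\ref{def:peeling-and-cross}. By Definition~\ref{def:good-planar-embeddings}, the embedding is good iff for every $i$ the walk $W_i$, after deleting the nodes of $L_i\setminus L_i'$ (those incident to at most one level-$i$ peeling arc), is a \emph{simple} cycle. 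I would show that $W_i$ visits some node of $L_i'$ twice precisely when $\N_i$ is disconnected or has a cut node on $\OutF{\N_i}$; and, since $\N$ — hence each $\N_i$ — is $3$-regular, that such a cut node $\nu$ has degree $3$ in $\N_i$, is visited exactly twice, and has one incident arc $e$ that is a bridge of $\N_i$ with $\OutF{\N_i}$ on both sides, while the other two arcs of $\nu$ lie on the ``core'' part of $W_i$. Across $e$ sits a sub-network $H$ joined to the rest of $\N_i$ only through $e$; by Assumption~\ref{ass:3-regular-connected}(4) every node of $\N$ carries at least two internal arcs, so $H$ contains a cycle and at least three nodes.

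Next I would remove the easy cases and then do surgery on the rest. If such an $H$ contains no input/output arc of $\N$, then flow conservation forces $e$ to carry zero flow, so $H$ and $e$ may be deleted outright; this only shrinks the network and cannot raise the outerplanarity (one small caveat is that, with nonzero lower-bound capacities, one must first check $H$ admits a zero-boundary feasible circulation — otherwise $\N$ is infeasible and its principal typing is the empty polytope, a case to be dispatched separately). For every residual pinch node $\nu$ — where both ``sides'' of $e$ carry flow because $H$ reaches the sources/sinks through an earlier peel — I would perform a constant-size local surgery: subdivide a bounded number of arcs at $\nu$ by fresh nodes and add one dummy arc inside the annular band between the level-$i$ core and the next enclosing concentric cycle, routed so that the outer boundary of $H$ is spliced into the level-$i$ boundary cycle (rather than reached by a detour) and $H$ therefore stays at onion-peel depth $i$. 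The subdivisions are chosen so that every new node ends up with degree exactly $3$ and no two-node cycle is created, and the dummy arc is drawn without crossings, so $3$-regularity, planarity, and Assumption~\ref{ass:3-regular-connected} are all preserved. Recursing into each $H$, and doing this at every pinch node of every peel, produces a good embedding; the pinch nodes are located by one pass over the onion-peel decomposition (itself $\bigOO{n}$), and each surgery is $\bigOO{1}$, so the whole transformation runs in $\bigOO{n}$.

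Finally I would handle the counting. Each pinch node is a cut node of some $\N_i$ lying on its outer face, and the block structure of the planar graph lets me charge each such node to a private pair of nodes of the dangling piece it branches off (using that every dangling piece has at least three nodes, and that distinct top-level pieces are node-disjoint), so the number of surgeries is at most $\tfrac12\,\size{\nn}$. Since each surgery adds a bounded number of nodes and arcs — calibrated so that the totals stay within budget, using $\size{\aaa}=\tfrac32\,\size{\nn}$ for a $3$-regular network — I get $\size{\nn'}\le 2\,\size{\nn}$ and $\size{\aaa'}\le 2\,\size{\aaa}$. Equivalence of $\N$ and $\N'$ is immediate from the dummy-arc choice (and the deletions above), and the outerplanarity is unchanged because deletions never increase it and every surgery is confined to an annular band already occupied at the same level. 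Combining the three parts proves conclusions 1--3.

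The step I expect to be the main obstacle is making the surgery simultaneously \emph{depth-preserving} and $3$-regularity-preserving: a naive split of a degree-$3$ pinch node leaves a degree-$2$ node, and the obvious fix — re-embedding the dangling piece $H$ into an inner face — pushes $H$ one onion-peel layer deeper and so can raise the outerplanarity, which is forbidden. The gadget must instead splice $H$'s own outer boundary into the level-$i$ cycle, and proving that this can always be done planarly within the \emph{same} annular band, without crossings and without disconnecting $H$'s interior from the rest at the next level, is where the real work lies; getting the node/arc accounting to land on exactly the factor-$2$ bounds is the secondary difficulty.
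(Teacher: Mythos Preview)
Your approach is essentially the paper's: repair the embedding by local insertions of \emph{dummy arcs} (capacity $[0,0]$), taking care not to break $3$-regularity or raise the outerplanarity. The paper's own proof is a one-line sketch (``Straightforward, by appropriately inserting dummy arcs, also making sure not to violate $3$-regularity and not to increase outerplanarity''), so your write-up is in fact considerably more detailed than what the paper supplies, and you correctly isolate the crux --- the depth-preserving, $3$-regularity-preserving gadget --- as the place where the work lies. The paper does not spell out that gadget either.

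One concrete point where you diverge from the paper, and where you should be careful: your deletion step for a dangling piece $H$ with no input/output arcs is not in the paper, and as written it can violate conclusion~3. The lemma asks for the \emph{same} outerplanarity, not merely $\leqslant$. A bridge $e$ with $\OutF{\N_i}$ on both sides makes $H$ a ``peninsula'' whose outer boundary lies in $L_i$, but $H$ may still have internal faces and hence nodes at strictly deeper onion levels; if $H$ happens to contain the only nodes at level $k$, deleting it strictly lowers the outerplanarity. The paper's pure-insertion approach (never delete, only add dummy arcs and subdivide) sidesteps this automatically: adding arcs cannot remove nodes from any level, so the outerplanarity cannot drop. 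For the downstream application (Theorem~\ref{fact:principal-typing-of-planar}) only an upper bound matters, but if you want the lemma as stated, either drop the deletion shortcut or argue separately that you can always choose a node at the deepest level outside all deleted pieces. Your feasibility caveat about nonzero lower bounds in $H$ is correct and worth keeping regardless.
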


\begin{proof}
Straightforward, by appropriately inserting \emph{dummy arcs}, also
making sure not to violate $3$-regularity and not to increase
outerplanarity.  An arc $a$ is dummy arc if $\lc(a) = \uc(a) =
0$, \ie, $a$ cannot carry any flow and therefore cannot affect the
overall flow properties of the network.
\end{proof}

\begin{lemma}
[Paths of Nodes in $(L_i - L_i')$]
\label{lem:paths-in-good-embeddings}
Let network $\N$ be given in a good $k$-outerplanar embedding,
for some $k\geqslant 1$,
which satisfies in particular Assumption~\ref{ass:3-regular-connected}.
Suppose $(L_i - L_i')\neq\varnothing$ for some $1\leqslant i\leqslant k$
and consider a maximal-length undirected path 
$\Angles{\nu_0,\nu_1,\ldots,\nu_{p-1}}$ formed by node $\nu_0\in L_i'$ and
nodes $\Set{\nu_1,\ldots,\nu_{p-1}}\subseteq (L_i - L_i')$ for some
$p\geqslant 2$.

\medskip
\noindent
\textbf{Conclusion}: There is a node $\nu_{p}\in L'_{i-1}$,
together with  $(p-1)$ level-$i$ peeling arcs
$\Set{a_1,\ldots,a_{p-1}}$, one cross arc $a_p$, and
$(p-1)$ input/output arcs $\Set{b_1,\ldots,b_{p-1}}$ of $\N$, such that:
\begin{enumerate}[itemsep=2pt,parsep=2pt,topsep=5pt,partopsep=0pt] 
\item $\Set{\head{a_1},\tail{a_1}} = \Set{\nu_{0},\nu_{1}},\ \ldots\ ,
      \Set{\head{a_{p-1}},\tail{a_{p-1}}} = \Set{\nu_{p-2},\nu_{p-1}}$ \\[.8ex]
      and $\Set{\head{a_p},\tail{a_p}} = \Set{\nu_{p-1},\nu_{p}}$.
\item For every $1\leqslant j\leqslant p-1$,
      either $\head{b_j}$ is defined and  $\head{b_j}=\nu_j$
      or $\tail{b_j}$ is defined and  $\tail{b_j}=\nu_j$.
\end{enumerate}
In words, the undirected path $\Angles{\nu_0,\nu_1,\ldots,\nu_{p-1},\nu_p}$,
which is the same as $\Angles{a_1,\ldots,a_p}$ as a sequence of
internal arcs, connects node $\nu_0\in L_i'$ and node $\nu_{p}\in L'_{i-1}$,
with $(p-1)$ input/output arcs incident to the $(p-1)$ intermediate
nodes along this path.
\end{lemma}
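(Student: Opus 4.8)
The plan is to argue directly from the definitions in Definition~\ref{def:peeling-and-cross} and Definition~\ref{def:good-planar-embeddings}, together with the structural constraints imposed by Assumption~\ref{ass:3-regular-connected} ($3$-regularity, no two-node cycles, no node that is both source and sink). First I would fix the maximal-length undirected path $\Angles{\nu_0,\nu_1,\ldots,\nu_{p-1}}$ with $\nu_0\in L_i'$ and $\nu_1,\ldots,\nu_{p-1}\in (L_i-L_i')$, and recall that, by the very definition of $L_i'$, each node $\nu_j$ with $1\leqslant j\leqslant p-1$ is a node of $L_i$ that is incident to \emph{at most one} peeling arc of level $i$. Since $\N$ is $3$-regular, $\nu_j$ has exactly three incident arcs. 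I would then enumerate, for each such $\nu_j$, the three possibilities listed in Definition~\ref{def:good-planar-embeddings}: three cross arcs; two cross arcs plus one input/output arc; one cross arc plus one input/output arc. The claim that each intermediate $\nu_j$ has an incident input/output arc $b_j$ (part 2 of the conclusion) should fall out by eliminating the ``three cross arcs'' case using the good-embedding hypothesis — in a good embedding the nodes of $L_i'$ form a single simple cycle in $\N_i$, and a node of $(L_i-L_i')$ lying strictly between the level-$i$ and level-$(i-1)$ concentric cycles cannot have three cross arcs without forcing either a second peeling arc (contradicting $\nu_j\notin L_i'$) or a violation of planarity/degree; I would spell this out carefully.

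Next I would track the arcs along the path. For $1\leqslant j\leqslant p-1$ the arc joining $\nu_{j-1}$ and $\nu_j$ is incident to two nodes both on $L_i$ (for $j\geqslant 2$) or to $\nu_0\in L_i'\subseteq L_i$ and $\nu_1\in L_i$ (for $j=1$), and in either case it bounds $\OutF{\N_i}$ in the good embedding — hence it is a level-$i$ peeling arc; call it $a_j$. This establishes that $a_1,\ldots,a_{p-1}$ are the $p-1$ level-$i$ peeling arcs claimed in part~1. For the terminal arc: $\nu_{p-1}\in (L_i-L_i')$ has three incident arcs, of which exactly one (namely $a_{p-1}$) has been used so far along the path, one is an input/output arc $b_{p-1}$ incident to $\nu_{p-1}$, and the third must therefore be a \emph{cross} arc $a_p$ — it cannot be another level-$i$ peeling arc (else $\nu_{p-1}\in L_i'$), and it cannot be input/output (no node is source-and-sink, and by Assumption~\ref{ass:3-regular-connected} no two distinct input/output arcs share a node). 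Because the path was chosen to be \emph{maximal}, the other endpoint $\nu_p$ of this cross arc $a_p$ is not in $(L_i-L_i')$; by Definition~\ref{def:peeling-and-cross} a cross arc with one endpoint on $L_i$ either has its other endpoint on $L_i$ (bounding an inner face of $\N_i$) or on $L_{i-1}$, and the geometry of the good embedding — $\nu_p$ lying on the level-$(i-1)$ concentric cycle — forces $\nu_p\in L'_{i-1}$. I would confirm $\nu_p\in L'_{i-1}$ rather than merely $L_{i-1}$ by noting that $\nu_p$ is incident to the cross arc $a_p$ plus at least one peeling arc of level $i-1$ (it lies on the outermost cycle of $\N_{i-1}$), hence is incident to at most... (and in fact exactly, by $3$-regularity) the right number of peeling arcs to survive the pruning defining $L'_{i-1}$.

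I expect the main obstacle to be the case analysis establishing that each intermediate node $\nu_j$ is incident to exactly one cross arc and exactly one input/output arc (not two cross arcs and no input/output arc), since a priori Definition~\ref{def:good-planar-embeddings} permits the two-cross-arc case. The resolution should come from a counting/planarity argument along the whole path: if some $\nu_j$ had two cross arcs, the path through $\nu_j$ would have to ``turn inward,'' so that the maximal path would not consist solely of level-$i$ peeling arcs between consecutive $L_i$-nodes — contradicting the way the path was set up (it is a path \emph{within} $L_i$ between $\nu_0$ and nodes of $(L_i-L_i')$, using peeling arcs). Making this rigorous will require carefully using the fact that, in a good embedding, the nodes of $L_i'$ already form the outermost simple cycle of $\N_i$, so every node of $(L_i - L_i')$ hangs off that cycle by a chain of peeling arcs terminating in exactly one cross arc that reaches $L'_{i-1}$. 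Once that structural picture is pinned down, parts~1 and~2 of the conclusion, and the identification of $\nu_p$, the arcs $a_1,\ldots,a_p$, and the input/output arcs $b_1,\ldots,b_{p-1}$, follow by bookkeeping.
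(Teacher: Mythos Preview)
Your proposal takes essentially the same approach as the paper, namely a direct argument from Definitions~\ref{def:peeling-and-cross} and~\ref{def:good-planar-embeddings} together with Assumption~\ref{ass:3-regular-connected}. The paper's own proof, however, is literally the single line ``Straightforward from the definitions. All details omitted,'' so your plan already supplies far more detail than the paper does.
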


\begin{proof}
Straightforward from the definitions. All details omitted.
\end{proof}

% \newpage

\begin{figure}[t]
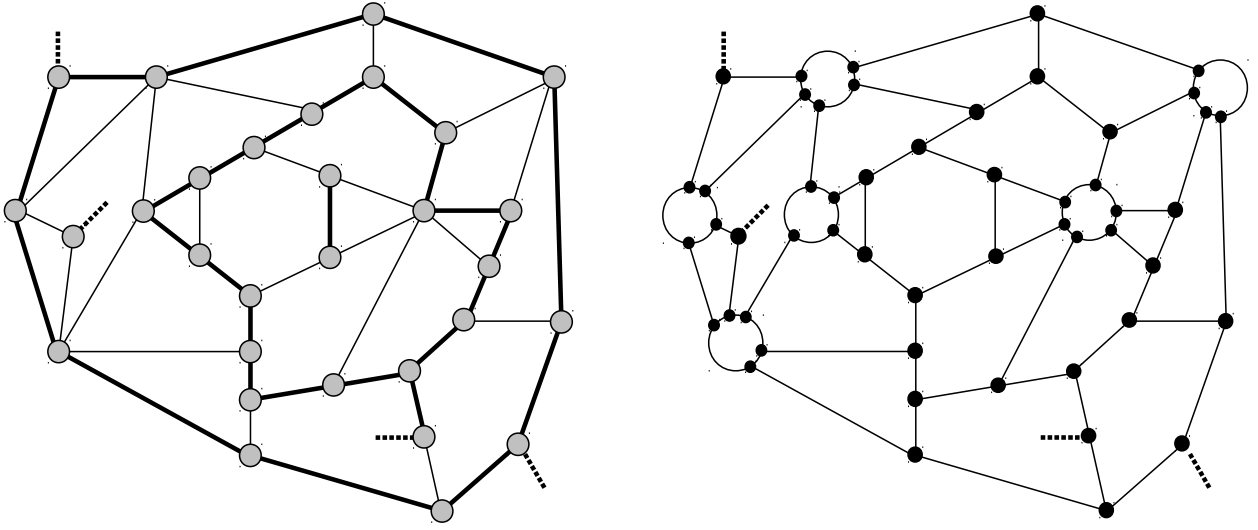
 % [H] % [!ht] % [ht]  
        % h  Place the float here, i.e., 
        %    at the same point it occurs in the source text.
        % t  Position at the top of the page.
        % b  Position at the bottom of the page.
        % p  Put on a special page for floats only.
        % !  Override internal parameters Latex uses 
        %    for determining `good' float positions.
%
\begin{minipage}[b][7.40cm][t]{0.95\linewidth}
   \begin{center}
      \begin{custommargins}{-1.5cm}{-2.5cm}
      \includegraphics[scale=.4,trim=0cm 9.00cm 0cm 0.60cm,clip]%
         {./Graphics/thirty-node-network}
      \includegraphics[scale=.4,trim=0cm 9.0cm 0cm 0.60cm,clip]%
         {./Graphics/thirty-node-networkA}
      \end{custommargins}
    \end{center}
\end{minipage}
\caption{Example of a planar network $\N$
         (with all arc directions ignored) on the left,
         its transformation into a 3-regular network $\N'$
         according to Lemma~\ref{lem:from-arbitrary-to-3-regular}
         on the right. The dashed arcs are input/output arcs,
         $4$ of them. }
\label{fig:thirty-node-network}
\end{figure}

\begin{figure}[t]
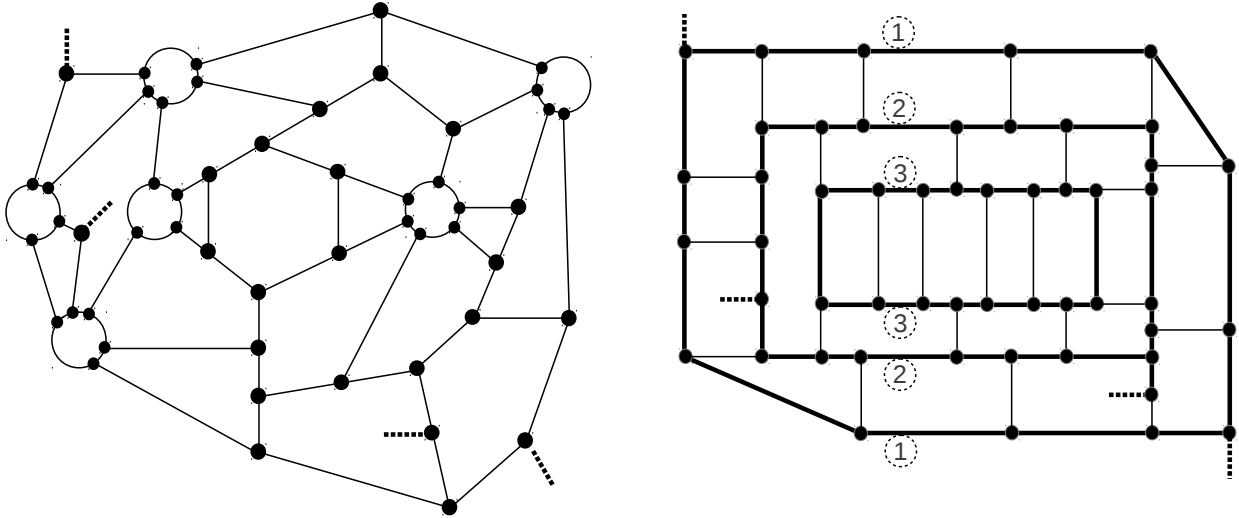
 % [H] % [!ht] % [ht]  
        % h  Place the float here, i.e., 
        %    at the same point it occurs in the source text.
        % t  Position at the top of the page.
        % b  Position at the bottom of the page.
        % p  Put on a special page for floats only.
        % !  Override internal parameters Latex uses 
        %    for determining `good' float positions.
%
\begin{minipage}[b][7.40cm][t]{0.95\linewidth}
   \begin{center}
      \begin{custommargins}{-1.5cm}{-2.5cm}
      \includegraphics[scale=.4,trim=0cm 9.0cm 0cm 0.60cm,clip]%
         {./Graphics/thirty-node-networkA}
      \includegraphics[scale=.4,trim=0cm 7.00cm 0cm 3.60cm,clip]%
         {./Graphics/Dis-and-Rea0A}
      \end{custommargins}
    \end{center}
\end{minipage}
\caption{The planar network $\N'$ on the right in 
         Figure~\ref{fig:thirty-node-network}
         is reproduced without change on the left in this figure, 
         and re-drawn on a rectangular grid  
         on the right in this figure -- except for two arcs because
         of the missing north-east corner and
         south-west corner.
         The dashed arcs are input/output arcs,
         $4$ of them.}
\label{fig:thirty-node-network-on-rectangular-grid}
\end{figure}

% \clearpage

\begin{remark}
In Section~\ref{sect:disassemble-and-reassemble}
and Appendix~\ref{sect:appendix:disassemble},
where we had to worry about finding typings of subnetworks of the 
given network $\N = (\nn,\aaa)$, the same arc $a\in\aaa$ could be an 
input arc in a subnetwork $\M'$ and an output arc in another subnetwork 
$\M''$, in which case we temporarily re-named it $a^{+}$ in $\M'$ and $a^{-}$
in $\M''$. In this appendix, there is no issue about computing 
typings and we can do away with this distinction between ``$a$ as input arc'' 
and ``$a$ as output arc'' in two distinct subnetworks.

Moreover, whenever convenient, we ignore arc directions.  Our concern
is to minimize the number of interface links, as we break up and re-assemble
networks, and this is not affected by arc directions.
\end{remark}

Recall the definitions of ``subnetwork'' and ``component'' of
a network $\N$ in Section~\ref{sect:sound-and-complete}, which mimic
the standard defnitions of ``subgraph'' and ``component'' except for
the presence of input/output arcs. For a precise statement of
Algorithm~\ref{alg:bindSchedule}, we need the notions of
``neighbor subnetworks'' and how to ``merge'' them.

\begin{definition}{Neighbor Subnetworks and their Merge}
\label{def:neighbors}
Let $\N = (\nn,\aaa)$ be a network, and consider two non-empty
disjoint subsets of nodes: $X, X'\subseteq\nn$ with $X\cap X'
= \varnothing$.  Let $\M$ and $\M'$ be the subnetworks of $\N$
induced by $X$ and $X'$, respectively. Let $\bbb_{\text{in,out}}$
and $\bbb'_{\text{in,out}}$ be the input/output arcs of $\M$ and $\M'$.
We say $\M$ and $\M'$ are \emph{neighbor subnetworks}, or
just \emph{neighbors}, iff 
$\size{\bbb_{\text{in,out}}\cap\bbb'_{\text{in,out}}} \geqslant 1$, \ie,
$\M$ and $\M'$ have one input/output arc or more in common.
We refer to the sequence of arcs
in $\bbb_{\text{in,out}}\cap\bbb'_{\text{in,out}}$ listed according
to some fixed (but otherwise arbitrary) ordering scheme as
the \emph{sequence of joint arcs} of $\M$ and $\M'$:
\[
   \commonA{\M,\M'}\ :=
   \ \text{a fixed ordering 
   of the arcs in $\bbb_{\text{in,out}}\cap\bbb'_{\text{in,out}}$}.
\]
Observe that we restrict the notion of ``neighbors'' to two
subnetworks $\M$ and $\M'$ induced by disjoint subsets of nodes $X$
and $X'$, but which share some input/output arcs.

To \emph{merge} $\M$ and $\M'$ means to produce
the subnetwork of $\N$ induced by $X\cup X'$, which
we denote $(\merg{\M}{\M'})$. If $\M$ and $\M'$
are not neighbors, then $(\merg{\M}{\M'})$ is
undefined.%
   \footnote{$(\merg{\M}{\M'})$ can be written in terms of the $\Bind$
   operation defined in Section~\ref{sect:disassemble-and-reassemble},
   by applying it as many times as there are arcs in
   $\commonA{\M,\M'}$. But it is more economical to just write
   ``$(\merg{\M}{\M'})$''.  }
\end{definition}

\Hide{
The following adjustment is not essential, but it makes 
Algorithm~\ref{alg:bindSchedule} easier to define and its correctness
easier to prove. We do not combine it with Definition~\ref{def:neighbors}
in order not to clutter it.

\begin{agreement}
\label{agree:ignore-IO-nodes}
Let $\M'$ and $\M''$ be subnetworks of a good planar
embedding of network $\N$, which also implies that $\N$
satisfies Assumption~\ref{ass:3-regular-connected}. 
Let $\M'$ and $\M''$ be induced by disjoint sets of nodes $X'$ and $X''$,
respectively. 

Because $\N$ is $3$-regular, every subnetwork $\M$ induced by a single node
has external dimension $3$, with three external arcs $\Set{b,b',b''}$.
Because $\N$ satisfies Assumption~\ref{ass:3-regular-connected}, at most
one of the three arcs in $\Set{b,b',b''}$ is an input/output arc of $\N$.
Let $\Set{\M_1,\ldots,\M_p}$ be the collection of all one-node
subnetworks of $\N$ induced by nodes $\nu_1,\ldots,\nu_p\not\in X'\cup X''$,
with external arcs $\Set{b_i,b_i',b_i''}$ for 
$1\leqslant i\leqslant p$, satisfying the following conditions:
\begin{itemize}[itemsep=1pt,parsep=2pt,topsep=4pt,partopsep=0pt] 
   \item $b_i$ is an input/output arc of $\N$.
   \item $b_i'\in\commonA{\M,\M'}$ and $b_i''\in\commonA{\M,\M''}$.
\end{itemize}
We agree to adjust the \emph{merge} operation as follows:
\[
  (\merg{\M'}{\M''})\ :=
  \ \text{subnetwork of $\N$ induced by 
       $\Set{\nu_1,\ldots,\nu_p}\cup X'\cup X''$}.
\] 
Consistent with this adjustment of the merge, we also
adjust the sequence of \emph{joint arcs} of $\M'$ and $\M''$:
\[
   \commonA{\M',\M''}\ :=
   \ \text{a fixed ordering 
   of the arcs in 
   $\Set{b_1',b_1'',\ldots,b_p',b_p''}
   \cup (\bbb'_{\text{in,out}}\cap\bbb''_{\text{in,out}})$}
\]
and now say that $\M'$ and $\M''$ are \emph{neighbors} if
$\size{\commonA{\M',\M''}}\neq 0$.
\end{agreement}
}

\begin{definition}{Strong Neighbors}
\label{def:strong-neighbors}
Let $\N = (\nn,\aaa)$ be a network, and  
$\M'$ and $\M''$ be neighbors in $\N$ 
induced by the disjoint subsets of nodes $X'$ and $X''$, as 
in Definition~\ref{def:neighbors}. We define the \emph{binding strength}
of the neighbors $\M'$ and $\M''$ as follows:
\[
    \bindingS{\M',\M''}\ :=\ 
    \ \size{\commonA{\M',\M''}}. 
\]
Because $\M'$ and $\M''$ are neighbors, 
$\bindingS{\M',\M''}\geqslant 1$. The external dimension of
$\merg{\M'}{\M''}$ is:
\[
  \exDim{\merg{\M'}{\M''}}
  \ =\ \exDim{\M'}+\exDim{\M''}-2\cdot\bindingS{\M',\M''} .
\]
We say $\M'$ and $\M''$ are
\emph{strong neighbors} if the following inequality is satisfied:
\begin{alignat*}{4}
  & \exDim{\merg{\M'}{\M''}}\ \leqslant
\\
  & \min\;
  \bigl(
  \Set{\,\exDim{\merg{\M'}{\M}}\;|
  \;\text{$\M$ is a neighbor of $\M'$}\,}
\\
  &\qquad\ \ \ \cup\Set{\,\exDim{\merg{\M''}{\M}}\;|
            \;\text{$\M$ is a neighbor of $\M''$}\,}\bigr) .
\end{alignat*}
Equivalently, $\M'$ and $\M''$ are \emph{strong neighbors} if:
\begin{alignat*}{4}
  & \bindingS{\M',\M''}\ \geqslant
\\
  & \max\;
  \bigl(
  \Set{\,\bindingS{\M',\M}\;|\;\text{$\M$ is a neighbor of $\M'$}\,}
\\
  &\qquad\ \ \ \cup\Set{\,\bindingS{\M'',\M}\;|
            \;\text{$\M$ is a neighbor of $\M''$}\,}\bigr) .
\end{alignat*}
In words, $\M'$ and $\M''$ are \emph{strong neighbors} if they
have at least as many external arcs in common as each has in common 
with another neighbor $\M$.
\end{definition} 

In Algorithm~\ref{alg:bindSchedule}, we use repeatedly the same group of
instructions, which we here collect together as a single ``macro''
instruction called $\MergeSym$.  Let $X_1\uplus\cdots\uplus X_p = \nn$
be a partition of the nodes of the given network $\N$. Let
$\CC = \Set{\M_1,\ldots,\M_p}$ be the subnetworks of $\N$ induced by
$X_1,\ldots,X_p$, respectively. Let $\bbb_{\text{\#}}^1,\ldots,\bbb_{\text{\#}}^p$ 
be the (necessarily disjoint) sets of \emph{internal}
arcs of $\M_1,\ldots,\M_p$, respectively. With $\N$ thus disassembled,
if we select two distinct subnetworks $\M,\M'\in\CC$ that are neighbors,
we write $\MergeSym$ with $5$ arguments as:
\[ 
   \MergeSym(\M,\M',\sigma,\delta,\CC)
\]
where the last $3$ are the following quantities:
\begin{itemize}[itemsep=1pt,parsep=2pt,topsep=2pt,partopsep=0pt] 
\item $\sigma = \text{an ordering of the arcs in
      $\bbb_{\text{\#}}^1\cup\cdots\cup\bbb_{\text{\#}}^p$}$
      \ \ (the binding schedule computed by the algorithm)
\item $\delta \geqslant 3$ \ \ (a tight upper bound on $\Index{\sigma}$)
\item $\CC = \Set{\M_1,\ldots,\M_p}$
\end{itemize}
The notions of a ``binding schedule'' and its ``index'' were defined
in Section~\ref{sect:disassemble-and-reassemble}.
The macro expansion of $\MergeSym(\M,\M',\sigma,\delta,\CC)$
is shown in Figure~\ref{fig:macroExpansion}.

{\spacing{\normalsize}{1.2}
\begin{figure}[h] % [H] % [!ht] % [ht]  
        % h  Place the float here, i.e., 
        %    at the same point it occurs in the source text.
        % t  Position at the top of the page.
        % b  Position at the bottom of the page.
        % p  Put on a special page for floats only.
        % !  Override internal parameters Latex uses 
        %    for determining `good' float positions.
\fbox{
\begin{minipage}[b][2.0cm][t]{0.95\linewidth}
\vspace{-.15in} 
   \begin{center}
   \small
   \begin{custommargins}{-1.5cm}{-2.5cm}
\begin{alignat*}{6}
  &1.\quad && \sigma
    \ &&:=\ \ &&\sigma\ \;\commonA{\M,\M'} 
    \quad &&\text{// append\ $\commonA{\M,\M'}$\ to $\sigma$}
\\
  &2. && \delta
    \ &&:=\ &&\max\Set{\delta,\;\exDim{\M}+\exDim{\M'}-2} 
    \quad &&\text{// new tight upper bound on $\Index{\sigma}$}
\\
  &3. && \CC
    \ &&:=\ &&(\CC - \Set{\M,\M'})\cup\Set{\merg{\M}{\M'}}
    \quad &&\text{// exclude $\M$ and $\M'$, include
                      their merge $\merg{\M}{\M'}$}
\end{alignat*}
    \end{custommargins}
    \end{center}
\end{minipage}
\vspace{-.2in} 
}
\caption{Macro expansion of $\MergeSym(\M,\M',\sigma,\delta,\CC)$.}
\label{fig:macroExpansion}
\end{figure}
}
\noindent
Instead of the three instructions shown in
Figure~\ref{fig:macroExpansion}, we can now write
a single macro instruction:
\[
   (\sigma,\,\delta,\,\CC)\ :=\ \MergeSym(\M,\M',\sigma,\delta,\CC)
\]

\noindent
We need one more classification of arcs before we define
Algorithm~\ref{alg:bindSchedule}. 
Let network $\N = (\nn,\aaa)$ be given in a good $k$-outerplanar 
embedding, with $\aaa = \aaa_{\text{in,out}}\uplus\aaa_{\text{\#}}$.
Using Lemma~\ref{lem:paths-in-good-embeddings},
we partition the internal arcs of $\N$ into two parts, 
$\aaa_{\text{\#}} = \aaa_{\text{\#,1}}\uplus\aaa_{\text{\#,2}}$, where:
\begin{alignat*}{5}
  & \aaa_{\text{\#,1}}\ &&:=\ &&\SET{\,a\in\aaa_{\text{\#}}\;|
    \;\text{$a$ is a cross arc}\,} \ \cup
\\
  & &&  &&\SET{\,a\in\aaa_{\text{\#}}\;|
    \;\text{there is $1\leqslant i\leqslant k$ such that
    such that $\Set{\head{a},\tail{a}}\cap (L_i-L_i')\neq\varnothing$}\,} ,
\\
  & \aaa_{\text{\#,2}}\ &&:=\ && \aaa_{\text{\#}} - \aaa_{\text{\#,1}} .
\end{alignat*}
In words, $\aaa_{\text{\#,1}}$ is the set of: (1) all cross arcs, and (2)
all peeling arcs on a path connecting 
two consecutive concentric cycles of the good embedding of $\N$.
See the statement of Lemma~\ref{lem:paths-in-good-embeddings}
for further explanation.

{\spacing{\normalsize}{1.3}
\begin{algorithm} % [H]
\caption {\quad $\Schedule$: Define Optimal Binding Schedule} 
          \label{alg:bindSchedule}
\begin{algorithmic}[1]
%    \Statex \textbf{algorithm name}: \Schedule
    \Statex \textbf{input}: 
      good planar embedding of network $\N = (\nn,\aaa)$, 
      with $\aaa = \aaa_{\text{in,out}}\uplus\aaa_{\text{\#,1}}\uplus\aaa_{\text{\#,2}}$
    \Statex \textbf{output}: 
      $\sigma = b_1 b_2 \cdots b_m$, an ordering 
      of internal arcs of $\N$ (a ``binding schedule''), 
     \Statex $\phantom{\textbf{output}:}$
      where \ $\aaa_{\text{\#}} = \Set{b_1, b_2, \ldots , b_m}$,
      together with a tight upper bound $\delta$ on $\Index{\sigma}$.
    \Statex \rule[2pt]{15cm}{.9pt} % \hrulefill
    \Statex \underline{\textbf{initialization}}
    \State $k :=\ \text{outerplanarity of $\N$}$
    \State $\sigma :=\ \varepsilon$
    \hspace*{.32in} // $\sigma$ is initially the empty ``binding schedule''
    \State $\CC :=\ \SET{\,\M\;\bigl|
        \;\text{$\M$ subnetwork of $\N$ induced
                by $\Set{\nu}$ with $\nu\in\nn$}\,} $
    \Statex \hspace*{.72in}
      {// $\N$ is disassembled into 
          $\size{\N}$ one-node subnetworks, each of external dimension $3$}
    \Statex \underline{\textbf{first iteration}} \qquad {// pre-processing}
    \For {every arc $a\in\aaa_{\text{\#,1}}$}
    \State $(\sigma,\delta,\CC) :=
          \ \MergeSym(\M,\M',\sigma,\delta,\CC)$
    \Statex \qquad {where $\M,\M'\in\CC$ are the two
             subnetworks such that $a\in\commonA{\M,\M'}$}
    \EndFor\qquad 
     {// every arc $a\in\aaa_{\text{\#,1}}$ is now included in $\sigma$,} 
    \Statex \qquad\qquad\quad
     {// for every $\M\in\CC$ such that $\size{\M} = 1$,
         the single node of $\M$ is an input/output node,}
    \Statex \qquad\qquad\quad
     {// for every $\M\in\CC$ such that $\size{\M} \geqslant 2$,
         ignoring input/output arcs of $\N$,
         $\exDim{\M} = 4$}
    \Statex \underline{\textbf{second iteration}} 
           \qquad {// pre-processing}
    \While {there are neighbors $\M,\M'\in\CC$ 
            such that:\quad $\size{\M} = 1$
             \ \textbf{or}\ $\bindingS{\M,\M'} = 2$\quad }
       \State  $(\sigma,\delta,\CC) :=
          \ \MergeSym(\M,\M',\sigma,\delta,\CC)$
    \EndWhile \qquad  {// for every $\M\in\CC$,
          ignoring input/output arcs of $\N$,
          $\exDim{\M} = 4$}
    \Statex \underline{\textbf{main iteration}}
           \qquad {// re-assemble $\N$ from the subnetworks in $\CC$
                   and store it in $\PP$}
    \State\label{step:initialize-P}
      $\PP :=\ \M$\quad {where $\M$ is any ``outermost'' subnetwork in $\CC$}
    \State $\CC :=\ \CC - \Set{\PP}$ 
    \While {$\CC \neq \varnothing$} 
%           {\qquad // if $\CC = \varnothing$ then $\PP = \N$}
    \State\label{step:select-strong-neighbor}
       {select $\M\in\CC$ which is a strong neighbor of $\PP$}
    \State  $\sigma :=\ \sigma\ \;\commonA{\PP,\M}$
    \State  $\delta :=\ \max\Set{\delta,\;\dime{\PP}+\dime{\M}-2}$
    \State  $\PP := \merg{\PP}{\M}$
    \State  $\CC := \CC - \Set{\M}$ 
    \EndWhile \qquad {// $\N$ is now re-assembled and stored in $\PP$}
    \State \Return $\sigma$ and $\delta$
\end{algorithmic}
\end{algorithm}
}

\begin{example}
\label{ex:running-example3}
This is a continuation of the network considered in
Examples~\ref{ex:running-example1} and~\ref{ex:running-example2}.
They refer to the good $3$-outerplanar embedding on the right in
Figure~\ref{fig:thirty-node-network}, and again in
Figure~\ref{fig:thirty-node-network-on-rectangular-grid}, which we use
to illustrate the operation of Algorithm~\ref{alg:bindSchedule}.
The progress of Algorithm~\ref{alg:bindSchedule} is shown
in Figure~\ref{fig:Disassemble-and-Reassemble1}, for the
\textbf{first iteration} and the \textbf{second iteration}, 
and in Figure~\ref{fig:Disassemble-and-Reassemble2} for 
the \textbf{main iteration}.
\end{example}

\begin{Proof}{for Theorem~\ref{fact:principal-typing-of-planar}}
Let $\N_0$ be the network $\N$ in the statement of
Theorem~\ref{fact:principal-typing-of-planar}, to distinguish it from
the ``$\N$'' introduced below.  Let $\N_0 = (\nn_0,\aaa_0)$ be given
in a $k_0$-outerplanar embedding, for some $k_0 \geqslant 1$. Let $p
= \size{\aaa_{0,\text{in}}} \geqslant 1$, $q
= \size{\aaa_{0,\text{out}}} \geqslant 1$, $m_0
= \size{\aaa_{0,\text{\#}}} \geqslant 1$ and $n_0
= \size{\nn_0} \geqslant 1$. Because $\N_0$ is planar, $\N_0$ is
sparse; more specifically, $m_0\leqslant 3n_0-6$ (see, for example,
Theorem 4.2.7 and its corollaries in~\cite{diestel2012}). Hence, the
complexity bound $\bigOO{m_0+n_0}$ is the same as $\bigOO{n_0}$.

By Lemmas~\ref{lem:from-arbitrary-to-3-regular} and~\ref{lem:good-embeddings},
we can transform the $k_0$-outerplanar embedding of $\N_0$ into a 
\emph{good} $k$-outerplanar embedding of an equivalent $\N$, 
with $k\leqslant 2k_0$. The transformation is such that
$\exDim{\N_0} = \exDim{\N}= p+q$. Let 
$m = \size{\aaa_{\text{\#}}}$ and $n = \size{\nn}$. 
By Lemma~\ref{lem:from-arbitrary-to-3-regular}, 
$m\leqslant 3 m_0 + \bigOO{m_0}$ and $n\leqslant 2 m_0 + \bigOO{m_0}$,
where ``$\bigOO{m_0}$'' accounts for the arcs and nodes introduced in
the construction of Lemma~\ref{lem:good-embeddings}.

We next run Algorithm~\ref{alg:bindSchedule} on the good 
$k$-outerplanar embedding of $\N$. We first consider the correctness
of the algorithm, and then its run-time complexity. The initialization
consists in breaking up $\N$ into $n$ one-node subnetworks, each of 
external dimension $= 3$. 

The \textbf{first iteration} assembles new subnetworks $\M$ of external
dimension $= 4$, if we ignore the presence of all input/output arcs of $\N$.%
   \footnote{By ``ignoring an input/output arc $a$'', we mean that we omit
      $a$ but not the input/output node $\nu$ to which $a$ is incident. 
      The node $\nu$ is thus temporarily made to have degree $= 2$. }
See Figure~\ref{fig:Disassemble-and-Reassemble1} for an
illustration. Such a subnetwork $\M$ of external dimension $= 4$ has
two nodes -- say $\Set{\nu_1,\nu_2}$ -- that are \emph{either} on the
same peeling level \emph{or} on two consecutive peeling
levels. Specifically, there is $1\leqslant i\leqslant k$, such
that \emph{either} both $\nu_1,\nu_2\in L_i'$ \emph{or} $\nu_1\in
L_i'$ and $\nu_2\in L_{i+1}'$.

A similar conclusion applies to the \textbf{second iteration}: It
assembles new subnetworks $\M$ of external dimension $= 4$, again ignoring
the presence of all input/output arcs of $\N$. See
Figure~\ref{fig:Disassemble-and-Reassemble1} for an
illustration. Such a subnetwork has external
dimension $= 4$, with four input/output nodes (these are \emph{not}
the same as the input/output nodes of $\N$) -- say
$\Set{\nu_1,\nu_2,\nu_3,\nu_4}$ -- that are \emph{either} all on the
same peeling level \emph{or} on two consecutive peeling levels with
two nodes on each. Specifically, there is $1\leqslant i\leqslant k$,
such that \emph{either} both $\Set{\nu_1,\nu_2,\nu_3,\nu_4}\subseteq
L_i'$ \emph{or} $\Set{\nu_1,\nu_2}\subseteq L_i'$ and
$\Set{\nu_3,\nu_4}\subseteq L_{i+1}'$.

At the end of the \textbf{second iteration}, if
we ignore all input/output arcs of $\N$, every subnetwork $\M$ in
$\CC$ has external dimension  $= 4$ and is one of two kinds:
\begin{itemize}[itemsep=1pt,parsep=2pt,topsep=2pt,partopsep=0pt] 
\item  $\M$ is assembled in the \textbf{first iteration} and not
       affected by the \textbf{second iteration}. In this case,
       $\M$ straddles \emph{either}
       two opposite nodes of the same level $L_i'$ \emph{or} two 
       nodes of two consecutive levels $L_i'$ and $L_{i+1}'$.
\item  $\M$ is assembled in the \textbf{second iteration} from
       two or more networks of the previous kind. In this case,
       $\M$ straddles \emph{either} two opposite peeling arcs 
       on the same level \emph{or} two peeling arcs on two 
       consecutive levels.
\end{itemize}
At the end of the \textbf{second iteration}, for any two subnetworks
$\M_1,\M_2\in\CC$, if $\M_1$ and $\M_2$ are neighbors, then
$\bindingS{\M_1,\M_2} = 1$.

The task of the \textbf{main iteration} in
Algorithm~\ref{alg:bindSchedule} is to re-assemble the original
$\N$ from the subnetworks in $\CC$ at the end of the 
\textbf{second iteration} in such a way as to minimize the external 
dimension of the intermediate subnetwork $\PP$. We initialize $\PP$ by
selecting for it an ``outermost'' $\M$ in $\CC$ 
(\textbf{line~\ref{step:initialize-P}} of Algorithm~\ref{alg:bindSchedule}),
\ie, we choose $\M$ so that all its nodes
are \emph{either} all on level $L_1'$ \emph{or} on two
consecutive levels $L_1'$ and $L_2'$. 

The selection of the initial $\M$ in the \textbf{main iteration} is
totally arbitrary. For example, in the third assembly on the right of
Figure~\ref{fig:Disassemble-and-Reassemble1}, we choose for this
initial $\M$ the subnetwork containing the north-west corner of $\N$,
and the corresponding progress of Algorithm~\ref{alg:bindSchedule}
during the \textbf{main iteration} is shown in
Figure~\ref{fig:Disassemble-and-Reassemble2}.

To minimize the external dimension of $\PP$ at every turn of the
\textbf{main iteration}, it suffices to select any $\M\in\CC$
which is a strong neighbor of $\PP$
(\textbf{line~\ref{step:select-strong-neighbor}} of
Algorithm~\ref{alg:bindSchedule}). For every $\M\in\CC$ which is a
neighbor of $\PP$, we have $\bindingS{\PP,\M} \geqslant 1$.
Initially, $\exDim{\PP} = 4$ (ignoring all input/output arcs of $\N$),
and the maximum number of strong neighbors $\M\in\CC$ such that
$\bindingS{\PP,\M} = 1$ in consecutive turns of the \textbf{main
iteration} is $(k-1)$. It is now easy to see that
$\exDim{\PP} \leqslant 2k+2$ is an invariant of the \textbf{main
iteration}.  Figure~\ref{fig:Disassemble-and-Reassemble2} shows how
$\PP$ may be assembled during the \textbf{main iteration}. For a
schematic example, which is also easier to follow by making the
peeling levels $L_1', L_2', L_3', \ldots$ drawn as concentric
ellipses, see Figures~\ref{fig:concentric-outerplanar0}
and~\ref{fig:concentric-outerplanar}.

Consider now a subnetwork $\M$ which is obtained by merging 
subnetworks $\M'$ and $\M''$, \ie, $\M = \merg{\M'}{\M''}$, where:
\begin{itemize}[itemsep=1pt,parsep=2pt,topsep=2pt,partopsep=0pt] 
\item $\exDim{\M'} = \ell' \geqslant 2$ and 
      $\exDim{\M''} = \ell'' \geqslant 2$,
\item $\commonA{\M',\M''} = \Set{a_1,\ldots,a_j}$ where
      $j\leqslant \min \Set{\ell',\ell''}$.
\end{itemize}
The arcs in $\Set{a_1,\ldots,a_j}$ are re-connected one at a time,
so that, starting from $\Set{\M',\M''}$ and ending with $\M$, the
merge operation produces $j$ intermediate subnetworks (including $\M$)
with external dimensions:
\[
   (\ell'+\ell'' - 2),\ (\ell'+\ell'' - 4),\ \ldots\ ,\ (\ell'+\ell'' - 2j),
\]
respectively. Hence, while $\exDim{\PP}\leqslant  2k+2$, the maximum
external dimension encountered in the course of the operation of 
Algorithm~\ref{alg:bindSchedule} is --
again ignoring all input/output arcs of $\N$:
\begin{alignat*}{5}
   & \text{``maximum external dimension of $\PP$''}
   +  \text{``external dimension of all subnetworks in $\CC$''} - 2 
\\
   & \leqslant\ (2k + 2) + 4 - 2\ =\ 2k + 4,
\end{alignat*} 
Hence, if we include
the presence of the $p+q$ input/output arcs of $\N$, the maximum external
dimension of subnetworks produced during the entire operation of
Algorithm~\ref{alg:bindSchedule} cannot exceed
$2k + 4 + p + q$, which is precisely the final value assigned to
$\delta$ by Algorithm~\ref{alg:bindSchedule}, which is also
$\leqslant 4k_0 + 4 + p + q$ where $k_0$
is the outerplanarity of the original network $\N_0$.

\medskip
To conclude the proof of
Theorem~\ref{fact:principal-typing-of-planar}, we need to show that
the run-time complexity is $\bigOO{n_0} =
\bigOO{n}$. This is a straightforward consequence of the fact that:
\begin{enumerate}[itemsep=1pt,parsep=2pt,topsep=2pt,partopsep=0pt] 
\item The initial transformation from $\N_0$ to $\N$ 
      is carried out in time $\bigOO{n_0}$, according to
      Lemmas~\ref{lem:from-arbitrary-to-3-regular} and~\ref{lem:good-embeddings}.
\item Each of the four stages in Algorithm~\ref{alg:bindSchedule}
      (\textbf{initialization}, \textbf{first iteration},
       \textbf{second iteration}, and \textbf{main iteration}) runs in
      time $\bigOO{m}$, which is the same as $\bigOO{m_0} = \bigOO{n_0}$.
\end{enumerate}
Using the binding schedule $\sigma$ returned by Algorithm~\ref{alg:bindSchedule},
whose index is $\leqslant 4k_0 + 4 + p + q$ , we now invoke part 3
in Theorem~\ref{fact:principal-typing-build-up}, which is based
on Algorithm~\ref{alg:basic-compositional} in
Appendix~\ref{sect:appendix:disassemble}. We
conclude that the principal typing of the initial network $\N_0$ 
can be computed in time $m_0\cdot 2^{\bigOO{k_0+p+q}}$ or, equivalently,
in time $\bigOO{n_0}$ where the multiplicative constant depends on
$k_0, p$, and $q$ only.
\hfill \QED
\end{Proof}

% \newpage

\begin{figure}[t]
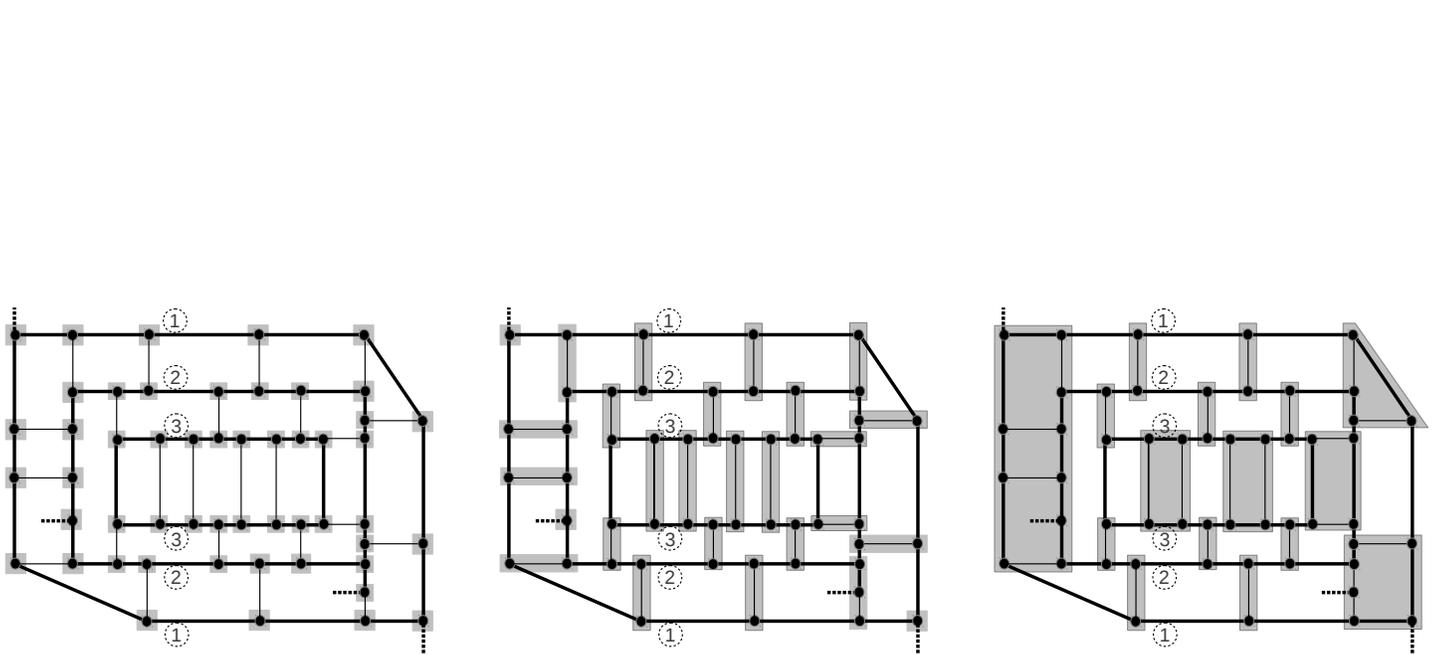
 % [H] % [!ht] % [ht]  
        % h Place the float here, i.e.,
        %    at the same point it occurs in the source text.
        % t  Position at the top of the page.
        % b  Position at the bottom of the page.
        % p  Put on a special page for floats only.
        % !  Override internal parameters Latex uses 
        %    for determining `good' float positions.
%
\begin{minipage}[b][4.90cm][t]{0.95\linewidth}
   \begin{center}
      \begin{custommargins}{-1.5cm}{-2.5cm}
      \includegraphics[scale=.3,trim=0cm 9.00cm 0cm 3.60cm,clip]%
         {./Graphics/Dis-and-Rea1A}
      \includegraphics[scale=.3,trim=0cm 9.0cm 0cm 3.60cm,clip]%
         {./Graphics/Dis-and-Rea2A}
      \includegraphics[scale=.3,trim=0cm 9.0cm 0cm 3.60cm,clip]%
         {./Graphics/Dis-and-Rea3A}
      \end{custommargins}
    \end{center}
\end{minipage}
\caption{Progress of Algorithm~\ref{alg:bindSchedule} on 
         a good $3$-outerplanar embedding (same as in 
         Figure~\ref{fig:thirty-node-network-on-rectangular-grid}).
         The shaded areas demarcate the subnetworks already assembled. 
         Left assembly: after \textbf{initialization}, Middle assembly: 
         after \textbf{first iteration},
         Right assembly: after \textbf{second iteration}. 
         }
\label{fig:Disassemble-and-Reassemble1}
\end{figure}

% \clearpage
% \newpage

\begin{figure}[h]
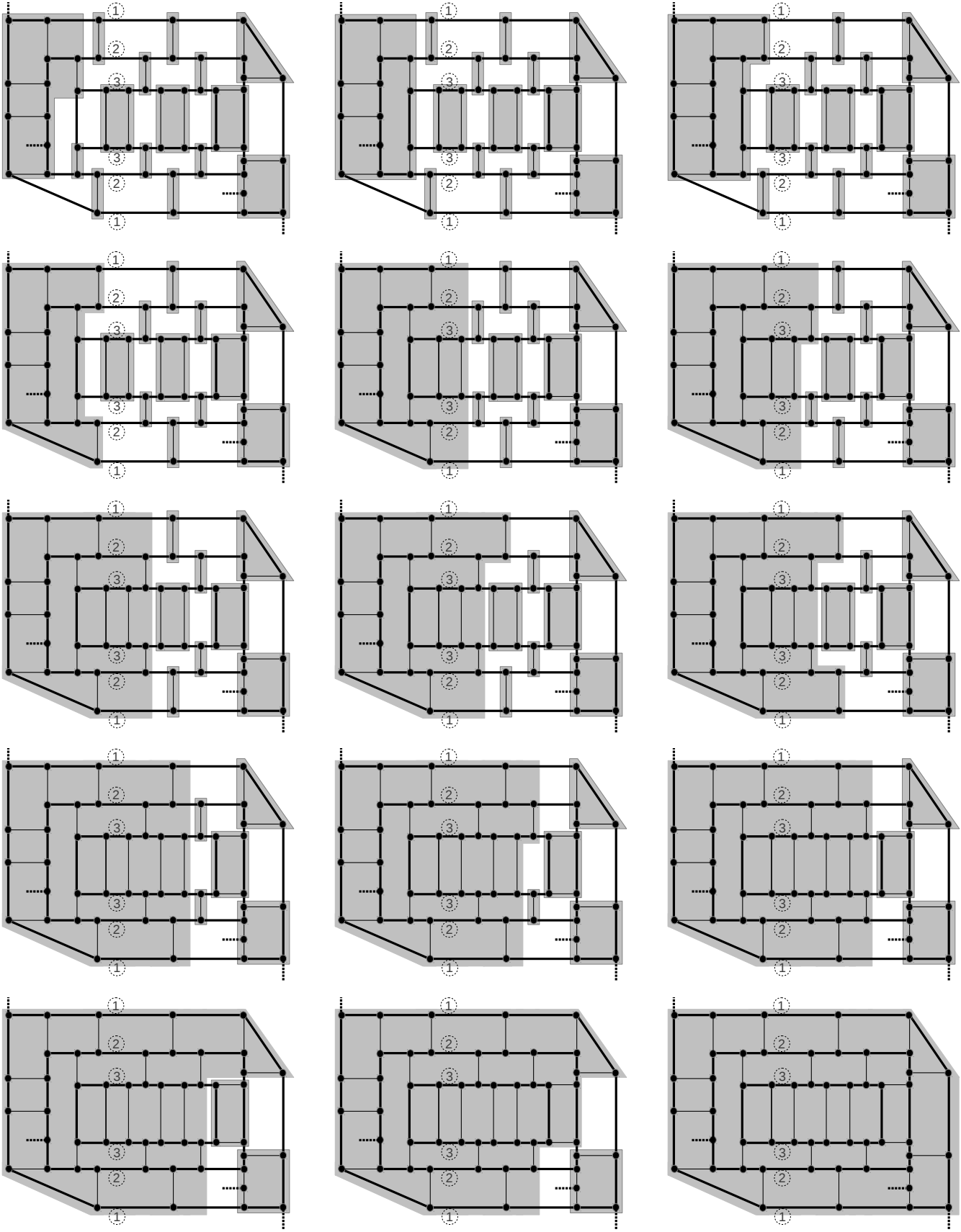
 % [H] % [!ht] % [ht]  
        % h  Place the float here, i.e., 
        %    at the same point it occurs in the source text.
        % t  Position at the top of the page.
        % b  Position at the bottom of the page.
        % p  Put on a special page for floats only.
        % !  Override internal parameters Latex uses 
        %    for determining `good' float positions.
%
\begin{minipage}[b][4.560cm][t]{0.95\linewidth}
   \begin{center}
      \begin{custommargins}{-1.5cm}{-2.5cm}
      \includegraphics[scale=.28,trim=0cm 9.00cm 0cm 3.65cm,clip]%
         {./Graphics/Dis-and-Rea4A}
      \includegraphics[scale=.28,trim=0cm 9.0cm 0cm 3.65cm,clip]%
         {./Graphics/Dis-and-Rea5A}
      \includegraphics[scale=.28,trim=0cm 9.0cm 0cm 3.65cm,clip]%
         {./Graphics/Dis-and-Rea6A}
      \end{custommargins}
    \end{center}
\end{minipage}

\begin{minipage}[b][4.56cm][t]{0.95\linewidth}
   \begin{center}
      \begin{custommargins}{-1.5cm}{-2.5cm}
      \includegraphics[scale=.28,trim=0cm 9.00cm 0cm 3.65cm,clip]%
         {./Graphics/Dis-and-Rea7A}
      \includegraphics[scale=.28,trim=0cm 9.0cm 0cm 3.65cm,clip]%
         {./Graphics/Dis-and-Rea8A}
      \includegraphics[scale=.28,trim=0cm 9.0cm 0cm 3.65cm,clip]%
         {./Graphics/Dis-and-Rea9A}
      \end{custommargins}
    \end{center}
\end{minipage}

\begin{minipage}[b][4.56cm][t]{0.95\linewidth}
   \begin{center}
      \begin{custommargins}{-1.5cm}{-2.5cm}
      \includegraphics[scale=.28,trim=0cm 9.00cm 0cm 3.60cm,clip]%
         {./Graphics/Dis-and-Rea10A}
      \includegraphics[scale=.28,trim=0cm 9.0cm 0cm 3.60cm,clip]%
         {./Graphics/Dis-and-Rea11A}
      \includegraphics[scale=.28,trim=0cm 9.0cm 0cm 3.60cm,clip]%
         {./Graphics/Dis-and-Rea12A}
      \end{custommargins}
    \end{center}
\end{minipage}

\begin{minipage}[b][4.56cm][t]{0.95\linewidth}
   \begin{center}
      \begin{custommargins}{-1.5cm}{-2.5cm}
      \includegraphics[scale=.28,trim=0cm 9.00cm 0cm 3.65cm,clip]%
         {./Graphics/Dis-and-Rea13A}
      \includegraphics[scale=.28,trim=0cm 9.0cm 0cm 3.65cm,clip]%
         {./Graphics/Dis-and-Rea14A}
      \includegraphics[scale=.28,trim=0cm 9.0cm 0cm 3.65cm,clip]%
         {./Graphics/Dis-and-Rea15A}
      \end{custommargins}
    \end{center}
\end{minipage}

\begin{minipage}[b][4.34cm][t]{0.95\linewidth}
   \begin{center}
      \begin{custommargins}{-1.5cm}{-2.5cm}
      \includegraphics[scale=.28,trim=0cm 9.00cm 0cm 3.65cm,clip]%
         {./Graphics/Dis-and-Rea16A}
      \includegraphics[scale=.28,trim=0cm 9.0cm 0cm 3.65cm,clip]%
         {./Graphics/Dis-and-Rea17A}
      \includegraphics[scale=.28,trim=0cm 9.0cm 0cm 3.65cm,clip]%
         {./Graphics/Dis-and-Rea18A}
      \end{custommargins}
    \end{center}
\end{minipage}
\caption{Progress of Algorithm~\ref{alg:bindSchedule} on 
         a good $3$-outerplanar embedding (same as in 
         Figure~\ref{fig:thirty-node-network-on-rectangular-grid})
         during the \textbf{main iteration}.}
\label{fig:Disassemble-and-Reassemble2}
\end{figure}

% \newpage

% \clearpage

\begin{figure}[t]
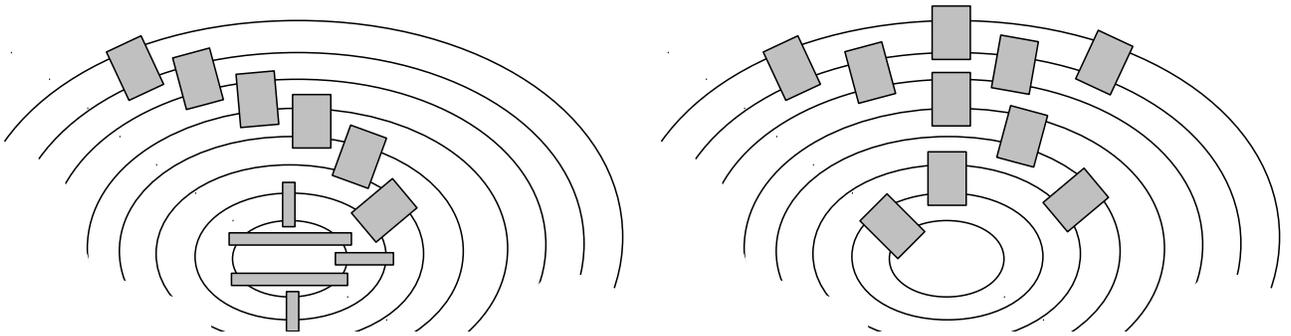
 % [H] % [!ht] % [ht]  
        % h  Place the float here, i.e., 
        %    at the same point it occurs in the source text.
        % t  Position at the top of the page.
        % b  Position at the bottom of the page.
        % p  Put on a special page for floats only.
        % !  Override internal parameters Latex uses 
        %    for determining `good' float positions.
%
\begin{minipage}[b][4.80cm][t]{0.95\linewidth}
   \begin{center}
      \begin{custommargins}{-1.5cm}{-2.5cm}
      \includegraphics[scale=.4,trim=0cm 15.90cm 0cm 0.60cm,clip]%
         {./Graphics/schematic-progress00}
      \includegraphics[scale=.4,trim=0cm 15.90cm 0cm 0.60cm,clip]%
         {./Graphics/schematic-progress0}
      \end{custommargins}
    \end{center}
\end{minipage}
\caption{Two possible configurations of assembled subnetworks,
         in a good $8$-outerplanar embedding,
         at the end of the \textbf{second iteration} of 
         Algorithm~\ref{alg:bindSchedule}.}
\label{fig:concentric-outerplanar0}
\end{figure}

% \newpage

\begin{figure}[t]
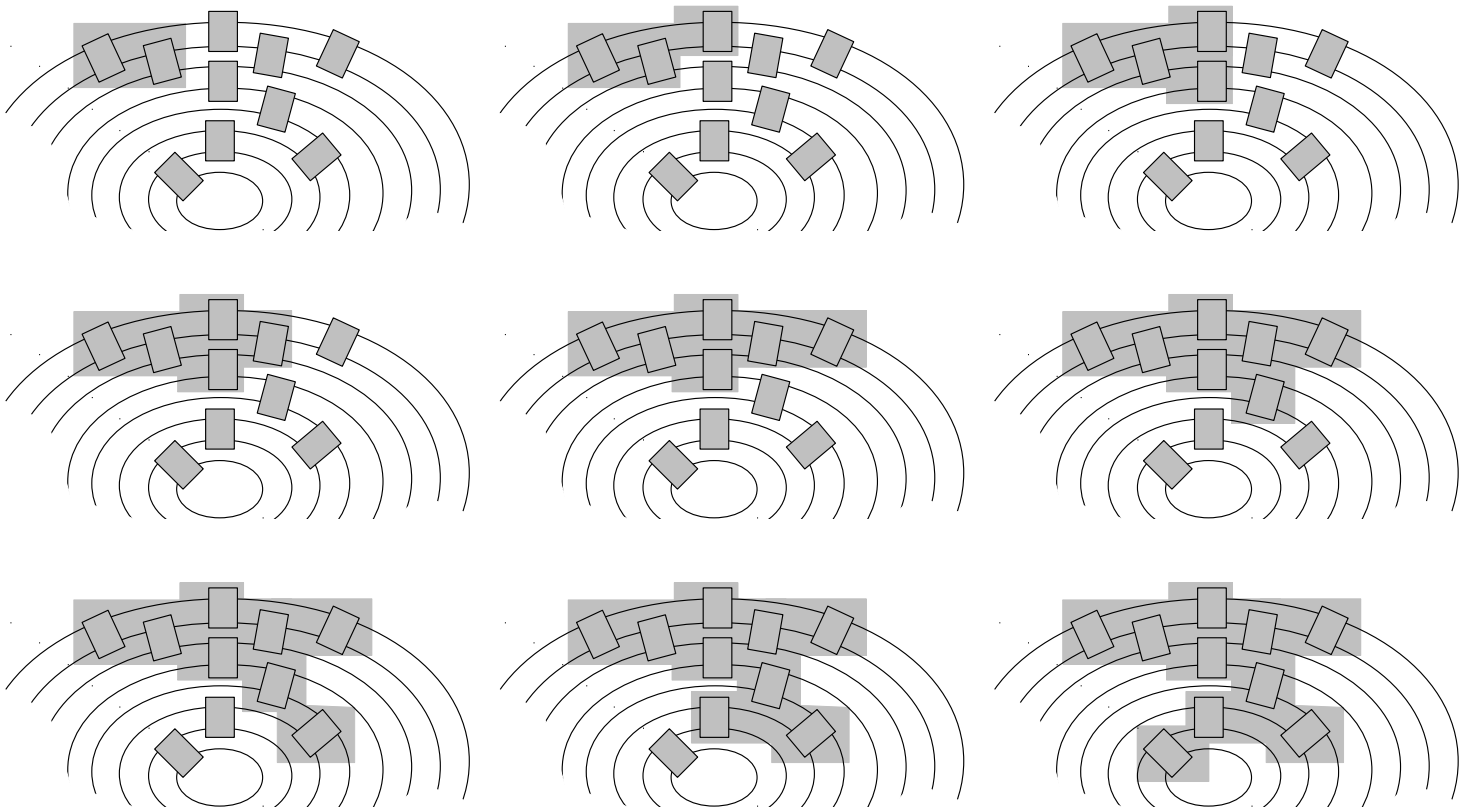
 % [H] % [!ht] % [ht]  
        % h  Place the float here, i.e., 
        %    at the same point it occurs in the source text.
        % t  Position at the top of the page.
        % b  Position at the bottom of the page.
        % p  Put on a special page for floats only.
        % !  Override internal parameters Latex uses 
        %    for determining `good' float positions.
%
\begin{minipage}[b][3.80cm][t]{0.95\linewidth}
   \begin{center}
      \begin{custommargins}{-1.5cm}{-2.5cm}
      \includegraphics[scale=.3,trim=0cm 17.00cm 0cm 0.60cm,clip]%
         {./Graphics/schematic-progress1}
      \includegraphics[scale=.3,trim=0cm 17.0cm 0cm 0.60cm,clip]%
         {./Graphics/schematic-progress2}
      \includegraphics[scale=.3,trim=0cm 17.0cm 0cm 0.60cm,clip]%
         {./Graphics/schematic-progress3}
      \end{custommargins}
    \end{center}
\end{minipage}

\begin{minipage}[b][3.80cm][t]{0.95\linewidth}
   \begin{center}
      \begin{custommargins}{-1.5cm}{-2.5cm}
      \includegraphics[scale=.3,trim=0cm 17.00cm 0cm 0.60cm,clip]%
         {./Graphics/schematic-progress4}
      \includegraphics[scale=.3,trim=0cm 17.0cm 0cm 0.60cm,clip]%
         {./Graphics/schematic-progress5}
      \includegraphics[scale=.3,trim=0cm 17.0cm 0cm 0.60cm,clip]%
         {./Graphics/schematic-progress6}
      \end{custommargins}
    \end{center}
\end{minipage}

\begin{minipage}[b][3.40cm][t]{0.95\linewidth}
   \begin{center}
      \begin{custommargins}{-1.5cm}{-2.5cm}
      \includegraphics[scale=.3,trim=0cm 17.00cm 0cm 0.60cm,clip]%
         {./Graphics/schematic-progress7}
      \includegraphics[scale=.3,trim=0cm 17.0cm 0cm 0.60cm,clip]%
         {./Graphics/schematic-progress8}
      \includegraphics[scale=.3,trim=0cm 17.0cm 0cm 0.60cm,clip]%
         {./Graphics/schematic-progress9}
      \end{custommargins}
    \end{center}
\end{minipage}
\caption{Progress of Algorithm~\ref{alg:bindSchedule}
         during its \textbf{main iteration}, starting from
         the configuration on the right in
         Figure~\ref{fig:concentric-outerplanar0}.}
\label{fig:concentric-outerplanar}
\end{figure}

\clearpage
\vspace{-.2in}
\section{Appendix: Further Comments for  
         \hyperref[sect:extensions-and-future]%
         {Section~\ref*{sect:extensions-and-future}}}
  \label{sect:appendix:future}
  %% appendixForFuture.tex

Beyond the future work mentioned in
Section~\ref{sect:extensions-and-future}, we here mention two other
areas of future research. These will build on results already obtained
and provide a wider range of useful applications in system modeling
and analysis. The second area below
(Subsection~\ref{sect:angelic-vs-demonic}) was alluded to earlier, in
footnote~\ref{foot:angelic-vs-demonic} and in
Example~\ref{ex:simplest-3}.  % of Appendix~\ref{sect:appendix:basic}.
They should separately open a different line of investigation.

% \paragraph{Algebraic Characterization of Principality.}
\subsection{Algebraic Characterization of Principality}
\label{sect:algebraic-characterization}

A typing $T$ is a function of the form
$T: \power{\aaa_{\text{in,out}}} \to \intervals{\reals}$, but
not every function of this form 
% $\power{\aaa_{\text{in,out}}}$ to $\intervals{\reals}$ 
is a typing of some network.
To be a network typing, such a function must satisfy certain 
conditions. For example, it must always be such that
$T(\varnothing) = T(\aaa_{\text{in,out}}) = [0,0] = \Set{0}$. 
Another necessary condition is expressed by the conclusion of
Lemma~\ref{lem:necessity} (there are simple examples, with 
$\size{\aaa_{\text{in,out}}}\geqslant 4$, showing this
condition is not sufficient to make $T$ a principal typing). 
%% AJK, 24 Nov 2013: Are you sure you have ``simple examples''? Can't
%% think of one now.
Tasks ahead include the following:
\begin{enumerate}[itemsep=1pt,parsep=2pt,topsep=2pt,partopsep=0pt] 
\item
     Define an \emph{algebraic characterization}, preferably in the form of 
     necessary and sufficient conditions, such that a partial function
     $T: \power{\aaa_{\text{in,out}}} \to\intervals{\reals}$ satisfies
     these conditions iff there exists a network $\N$ of which 
     $T$ is the principal typing.
\item
     Once such an algebraic characterization is established, develop an
     \emph{implementation methodology} which, given a $T:
     \power{\aaa_{\text{in,out}}} \to\intervals{\reals}$ satisfying
     it, can be used to implement $T$ in the form of a network
     $\N$. More precisely, given such a $T$, develop a methodology to
     construct a network $\N$ such that $T$ is the principal typing of
     $\N$.
\item
     Refine this implementation methodology so that it constructs
     a \emph{smallest-size} network $\N$ for which $T$ is the principal 
     typing. Such a network $\N$ can be viewed as the ``best'' implementation
     of the given $T$.
\end{enumerate}
When the external dimension $\size{\aaa_{\text{in,out}}} = 2$, with 
one input arc $a_1$ and one output arc $a_2$, these questions are trivial.
In such a case, $T$ is the principal typing of some network iff there are
numbers $0\leqslant r\leqslant s$ such that
\[
   T(\varnothing) = T(\Set{a_1,a_2}) = [0,0],
   \quad
   T(\Set{a_1}) = [r,s],
   \quad
   T(\Set{a_2}) = [-s,-r].
\]
For such a $T$, there is always a one-node implementation. 

When the external dimension $\size{\aaa_{\text{in,out}}} = 3$, with,
say, input arcs $\Set{a_1,a_2}$ and output arc $a_3$, these questions
are again easy. In such a case, $T$ is the principal typing of some
network iff there are numbers $0\leqslant r_i\leqslant s_i$, for every
$i\in\Set{1, 2, 3}$, such that:
\begin{alignat*}{5}
   & T(\varnothing)\ &&=\ &&\ \ T(\Set{a_1,a_2,a_3})\ &&=\ &&[0,0],
   \\[.6ex]
   & T(\Set{a_1})\ &&=\ && -T(\Set{a_2,a_3})\ &&=\ &&[r_1,s_2],   
   \\[.6ex]
   & T(\Set{a_2})\ &&=\ && -T(\Set{a_1,a_3})\ &&=\ &&[r_2,s_2],
   \\[.6ex]
   &T(\Set{a_3})\ &&=\ && -T(\Set{a_1,a_2})\ &&=\ &&[-s_3,-r_3], 
\end{alignat*}
where $r_1+r_2 = r_3$ and $\max \Set{s_1,s_2} \leqslant s_3\leqslant s_1+s_2$.
For such a $T$, there is always a one-node implementation.

The problem becomes interesting and non-trivial when
$\size{\aaa_{\text{in,out}}} \geqslant 4$. For a sense of the
difficulty in such a case, consider the network $\N_2$ in
Example~\ref{ex:simplest-2}. It is not the smallest-size
implementation of the typing $T_2$ in Example~\ref{ex:simplest-2},
as illustrated by the next example.

\begin{example}
\label{ex:smallest-size-implementation}
The network $\N_4$ in Figure~\ref{fig:equivalent-to-N2} was obtained
by brute-force trial-and-error. It is equivalent to $\N_2$ in
Example~\ref{ex:simplest-2}, and qualifies as a better implementation
of $T_2$, because $\N_4$ has fewer nodes than $\N_2$, with $6$ nodes
in $\N_4$ against $8$ nodes in $\N_2$. 
(We can also compare network sizes by counting both
nodes and arcs: $6+12 = 18$ in $\N_4$ against $8+16 = 24$ in $\N_2$.)
\end{example}

\begin{figure}[ht] % [H] % [!ht] % [ht]  
\begin{center}
\includegraphics[scale=.28,trim=0cm 16.50cm 0cm 1.5cm,clip]%
    {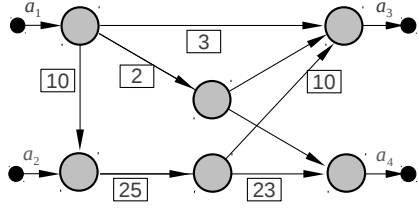}
\caption{For Example~\ref{ex:smallest-size-implementation}:
         Network $\N_4$ is equivalent to $\N_2$ in 
         Example~\ref{ex:simplest-2}, but is a better
         implementation of the same principal typing $T_2$. 
         Missing lower bounds are $0$, missing upper bounds are 
         a ``very large number'' $K$.}
\label{fig:equivalent-to-N2}
\end{center}
\end{figure}

We conjecture that, if $\size{\aaa_{\text{in,out}}} = 4$ and
$T: \power{\aaa_{\text{in,out}}} \to\intervals{\reals}$ is the
principal typing of some network with external arcs
$\aaa_{\text{in,out}}$, then there is a smallest-size implementation
of $T$ requiring at most $6$ nodes.

% \paragraph{Angelic Non-Determinism \emph{versus} Demonic Non-Determinism.}

\subsection{Angelic Non-Determinism \emph{versus} Demonic Non-Determinism}
\label{sect:angelic-vs-demonic}

Suppose $\AAA$ is a large assembly of networks containing 
network $\M$ as a subnetwork. Under what conditions can we 
safely substitute another network $\N$ for $\M$?
A minimal requirement is that $\M$ and $\N$ are similar, \ie, they
have the same number of input arcs and the same number of output
arcs. If we are given principal typings $T$ and $U$ for $\M$ and
$\N$, respectively, we should have enough information to decide
whether the substitution is safe.
To simplify a little, let the input and output arcs of $\M$ and $\N$
be $\aaa_{\text{in}} =\Set{a_1,a_2}$ and $\aaa_{\text{out}}
=\Set{a_3,a_4}$.  If the substitution of $\N$ for $\M$ is safe, then
$\N$ should be able to consume every input flow that $\M$ is able to
consume, \ie, if an input assignment
$f_{\text{in}}: \Set{a_1,a_2}\to\nreals$ satisfies
$\rest{T}{\power{\Set{a_1,a_2}}}$, then it must also satisfy
$\rest{U}{\power{\Set{a_1,a_2}}}$. Hence, the following inclusions are a
reasonable requirement for safe substitution:
\begin{itemize}
\item[$(\dag)$]
\quad $T(\Set{a_1})\subseteq U(\Set{a_1}),\quad
   T(\Set{a_2})\subseteq U(\Set{a_2}), \quad\text{and}
   \quad T(\Set{a_1,a_2})\subseteq U(\Set{a_1,a_2}). $
\end{itemize}
Symmetrically, for safe substitution, every output flow produced by
$\N$ should not exceed the limits of an output flow produced by
$\M$, \ie, if an input assignment
$f_{\text{out}}: \Set{a_3,a_4}\to\nreals$ satisfies
$\rest{U}{\power{\Set{a_3,a_4}}}$, then it must also satisfy
$\rest{T}{\power{\Set{a_3,a_4}}}$.  Hence, another reasonable requirement
consists of the following reversed inclusions:
\begin{itemize}
\item[$(\ddag)$]
\quad $T(\Set{a_3})\supseteq U(\Set{a_3}),\quad
   T(\Set{a_4})\supseteq U(\Set{a_4}), \quad\text{and}
   \quad T(\Set{a_3,a_4})\supseteq U(\Set{a_3,a_4}). $
\end{itemize}
If $U$ satisfies both $(\dag)$ and $(\ddag)$, is
the substitution of $\N$ for $\M$ in $\AAA$ safe? It depends.
Conditions $(\dag)$ and $(\ddag)$ are necessary, but
there are other issues which we elaborate in the next 
example.%
    \footnote{In a different context
    (strongly-typed programming languages),
    conditions $(\dag)$ and $(\ddag)$ resemble the
    conditions for making $U$ a subtyping of $T$ and, accordingly, an
    object of typing $U$ to be safely substituted for an object of
    typing $T$.  Specifically, $(\dag)$ mimics
    the \emph{contravariance} in the domain $\tau_1$ of a function
    type $\tau_1\to\tau_2$, and $(\ddag)$ mimics the \emph{covariance}
    in the co-domain $\tau_2$ of the same type $\tau_1\to\tau_2$, in a
    strongly-typed functional language. However, $(\dag)$ and $(\ddag)$ are not
    sufficient for safe substitution here, because there are
    dependences between input types and output types in our networks
    that do not occur in a strongly-typed functional language.}

\begin{example}
\label{ex:angelic-vs-demonic}
In the larger assembly $\AAA$ described above, let 
$\M = \N_{3}$ from Example~\ref{ex:simplest-3} and
$\N = \N_2$ from Example~\ref{ex:simplest-2}.
We have the following relationship $\subT{T_2}{T_3}$, where
``$\subTsym$'' is the subtyping relation, defined in
Examples~\ref{ex:simplest-2} and~\ref{ex:simplest-3}. More, in fact, 
$T= T_{3}$ and $U=T_2$ satisfy both conditions $(\dag)$ and $(\ddag)$, 
which are therefore not sufficient to prevent
the unsafe situation we now describe.

As we explain below, if $\N_2$ operates in a way to preserve the
feasibility of flows in $\AAA$, \ie, if it operates \emph{angelically}
and tries to keep $\AAA$ in good working order, then replacing
$\N_{3}$ by $\N_2$ is safe. However, if $\N_2$ makes choices
that disrupt $\AAA$'s good working order, maliciously or
unintentionally, \ie, if it operates \emph{demonically} and violates
the feasibility of flows in $\AAA$, then the substitution is unsafe.
This can happen because for the same assignment $f_{\text{in}}$ to the
input arcs (resp., the same assignment $f_{\text{out}}$ to the output arcs),
corresponds several possible output assignments $f_{\text{out}}$
(resp., input assignments $f_{\text{in}}$), without violating any
of $\N_2$'s internal constraints.

Suppose $\N_{3}$ in $\AAA$ is prompted to \emph{consume} some flow
entering at input arcs $a_1$ and $a_2$. (A similar and symmetric
argument can be made when $\N_{3}$ is asked to \emph{produce} some
flow at output arcs $a_3$ and $a_4$.) Suppose the incoming flow is
given by the assignment $f_{\text{in}}(a_1) = 15$ and
$f_{\text{in}}(a_2) = 0$. Flow is then pushed along the internal arcs
of $\N_{3}$, respecting capacity constraints and flow conservation at
nodes.  There are many different ways in which flow can be pushed
through.  By direct inspection, relative to the given $f_{\text{in}}$,
the largest possible quantity exiting at output arc $a_4$ is $10$. So,
relative to the given $f_{\text{in}}$, the output assignment which is
most skewed in favor of $a_4$ is $f_{\text{out}}(a_3) = 5$ and
$f_{\text{out}}(a_4) = 10$. 

Under the assumption that $\AAA$ works safely with $\N_{3}$ inserted,
we take this conclusion to mean that any output quantity exceeding
$10$ at arc $a_4$, when $f_{\text{in}}(a_1) = 15$ and
$f_{\text{in}}(a_2) = 0$, disrupts $\AAA$'s overall operation.
For a concrete situation,  when $f_{\text{in}}(a_1) = 15$ and
$f_{\text{in}}(a_2) = 0$, it can occur that the $10$ units exiting from $a_4$ 
enter some node $\nu$ in $\AAA$ and cannot be increased without violating 
a capacity constraint on an arc exiting $\nu$.

Next, suppose we substitute $\N_2$ for $\N_{3}$ and examine
$\N_2$'s behavior with the same $f_{\text{in}}(a_1) = 15$ and
$f_{\text{in}}(a_2) = 0$. By inspection, the flow that is most skewed
in favor of $a_4$ gives rise to the output assignment
$f_{\text{out}}(a_3) = 3$ and $f_{\text{out}}(a_4) = 12$. In this
case, the output quantity at $a_4$ exceeds $10$, which, as argued
above, is disruptive of $\AAA$'s overall operation.  Note that the
presumed disruption occurs in the enclosing context that is part of
$\AAA$, not inside $\N_2$ itself, where flow is still directed by
respecting flow conservation at $\N_2$'s nodes and
lower-bound/upper-bound capacities at $\N_2$'s arcs. Thus, $\N_2$'s
harmful behavior is not the result of violating its own internal
constraints, but of its malicious or (unintended) faulty interaction
with the enclosing context.

Consider now a slight adjustment of $\N_{3}$, call it $\N_3'$,
where we make a single change in $\N_{3}$, namely, in the
upper-bound capacity of input arc $a_1$:
\emph{Decrease $\uc(a_1)$ from $K$ (``very large number'') to $10$.}
The typing $T_{3}$ is no longer principal for $\N_3'$.
We compute a new principal typing
$T_3'$ for $\N_3'$ which, in addition to the type
assignments $T_3'(\varnothing) = T_3'(\Set{a_1,a_2,a_3,a_4}) = [0,0]$, 
makes the following type assignments:
\begin{alignat*}{5}
  & \underline{a_1:[0,10]}\quad && a_2:[0,25]\quad 
  && -a_3:[-15,0]\quad && -a_4:[-25,0]
\\[.8ex] % [.4ex]
  & a_1+a_2:[0,30]\quad && a_1-a_3:[-10,10]\quad
  && \underline{a_1-a_4:[-23,10]}
\\[.8ex] % [.4ex]
  & \underline{a_2-a_3:[-10,23]}\quad && a_2-a_4:[-10,10]
  && -a_3-a_4:[-30,0]\quad 
\\[.8ex] % [.4ex]
  & a_1+a_2-a_3: [0,25]\qquad&& a_1+a_2-a_4:[0,15] \qquad
   && a_1-a_3-a_4: [-23,0]\qquad && \underline{a_2-a_3-a_4: [-10,0]}\quad
\end{alignat*}
The underlined type assignments here are those that differ from
the corresponding type assignments made by $T_{3}$. 
It is easy to check that, however demonically $\N_2$ chooses to push
flow through its internal arcs, the substitution of $\N_2$ for
$\N_3'$ is ``input safe''; \ie, for every input assignment
$f_{\text{in}} : \Set{a_1,a_2}\to\nreals$ satisfying 
$\rest{T_3'}{\power{\Set{a_1,a_2}}}$, 
and every extension $g : \Set{a_1,a_2,a_3,a_4}\to\nreals$ of $f_{\text{in}}$, 
the IO assignment $g$ satisfies $T_2$ iff $g$ satisfies $T_3'$.

Similarly, consider an outgoing flow in $\N_3$ given by the 
assignment $f_{\text{out}}(a_3) = 0$ and $f_{\text{out}}(a_4) = 25$.
Relative to this $f_{\text{out}}$, consider the entering flow at $\Set{a_1,a_2}$
which is most skewed in favor of $a_1$. By inspection, this is
the input assignment $f_{\text{in}}(a_1) = 10$ and $f_{\text{in}}(a_2) = 15$.
By contrast in $\N_2$, if $f_{\text{out}}(a_3) = 0$ and $f_{\text{out}}(a_4) = 25$,
then the corresponding input assignment which is most skewed
in favor of $a_1$ is $f_{\text{in}}(a_1) = 12$ and $f_{\text{in}}(a_2) = 13$.

We can adjust $\N_{3}$, to define
another network $\N_3''$, for which the substitution of $\N_2$
is ``output safe''. $\N_3''$ is obtained by making a single change:
\emph{Decrease $\uc(a_4)$ from $K$ (``very large number'') to $10$.}
The principal typing $T_3''$ for $\N_3''$ makes the type
assignments $T_3''(\varnothing) = T_3''(\Set{a_1,a_2,a_3,a_4}) = [0,0]$
in addition to:
\begin{alignat*}{5}
  & a_1:[0,15]\quad && \underline{a_2:[0,20]}\quad 
  && -a_3:[-15,0]\quad && \underline{-a_4:[-10,0]}
\\[.8ex] % [.4ex]
  & \underline{a_1+a_2:[0,25]}\quad && a_1-a_3:[-10,10]\quad
  && \underline{a_1-a_4:[-10,15]}
\\[.8ex] % [.4ex]
  & \underline{a_2-a_3:[-15,10]}\quad && a_2-a_4:[-10,10]
  && -a_3-a_4:[-25,0]\quad 
\\[.8ex] % [.4ex]
  & \underline{a_1+a_2-a_3: [0,10]}\qquad&& a_1+a_2-a_4:[0,15] \qquad
   && \underline{a_1-a_3-a_4: [-20,0]}\qquad && a_2-a_3-a_4: [-15,0]\quad
\end{alignat*}
Finally, we can make both of the preceding adjustments in $\N_3$:
\emph{Decrease both $\uc(a_1)$ and $\uc(a_4)$
from $K$ (``very large number'') to $10$}, so that the substitution
of $\N_2$ is both ``input safe'' and ``output safe''.
\end{example}

Based on the discussion in Example~\ref{ex:angelic-vs-demonic},
in the presence of \emph{demonic non-determinism}, we need a 
notion of subtyping more restrictive than ``$\subTsym$'',
which we call ``strong subtyping'' and denote by ``$\SubTsym$''.

\begin{definition}{Strong Subtyping}
\label{def:strong-subtyping}
Let $T, U : \power{\aaa_{\text{in,out}}}\to\intervals{\reals}$ be
principal typings for similar networks $\M$ and $\N$, respectively, both
with the same set $\aaa_{\text{in,out}}$ of input/output arcs. 
We say $T$ is \emph{input-safe} for $U$ iff:
\begin{itemize}[itemsep=2pt,parsep=2pt,topsep=5pt,partopsep=0pt] 
\item For every $f_{\text{in}}:\aaa_{\text{in}}\to\nreals$ satisfying
      $\rest{T}{\power{\aaa_{\text{in}}}}$, and for every 
      $g:\aaa_{\text{in,out}}\to\nreals$ extending $f_{\text{in}}$,
      it holds that:\\
      $g$ satisfies $T$ $\Leftrightarrow$ 
      $g$ satisfies $U$.
\end{itemize}
We say $T$ is \emph{output-safe} for $U$ iff:
\begin{itemize}[itemsep=2pt,parsep=2pt,topsep=5pt,partopsep=0pt] 
\item For every $f_{\text{out}}:\aaa_{\text{out}}\to\nreals$ satisfying
      $\rest{T}{\power{\aaa_{\text{out}}}}$, and for every 
      $g:\aaa_{\text{in,out}}\to\nreals$ extending $f_{\text{out}}$,
      it holds that:
      $g$ satisfies $T$ $\Leftrightarrow$ 
      $g$ satisfies $U$.
\end{itemize}
We say $T$ is \emph{safe} for $U$, or say $U$ is a 
\emph{strong subtyping} of $T$ and write $\SubT{U}{T}$, iff $T$ is
both input-safe and output-safe for $U$. Strong subtyping expresses
the condition for the safe substitution of $\N$ (whose principal typing
is $U$) for $\M$ (whose principal typing is $T$) in the presence of
demonic non-determinism.
\end{definition}

We state without proof some simple properties of ``$\SubTsym$'',
the starting point of an investigation of how to extend our
typing theory to handle demonic non-determinism.

\begin{fact}[Strong Subtyping is a Partial Order]
Let $S, T, U : \power{\aaa_{\text{\rm in,out}}}\to\intervals{\reals}$ be
principal typings (of some similar networks) over the same input/output set
$\aaa_{\text{\rm in,out}}$.
\begin{enumerate}[itemsep=2pt,parsep=2pt,topsep=5pt,partopsep=0pt] 
\item $\SubT{T}{T}$ (reflexivity).
\item If $\SubT{T}{U}$ and $\SubT{U}{T}$, then $T = U$
      (anti-symmetry).
%%%%%%%%%%%%%%%%% DO NOT ERASE proof sketch:
%  first prove that $T$ and $U$ agree on $\power{\aaa_{\text{\rm in}}}$
%  and, separately, on $\power{\aaa_{\text{\rm out}}}$. After which,
%  prove that $T$ and $U$ in fact agree on $\power{\aaa_{\text{\rm in,out}}}$
%%%%%%%%%%%%%%%%%
\item If $\SubT{S}{T}$ and $\SubT{T}{U}$, then $\SubT{S}{U}$
      (transitivity). 
\item If $\SubT{T}{U}$, then $\subT{T}{U}$, but not the other way around
      in general. (A counter-example for the converse is in 
       Example~\ref{ex:angelic-vs-demonic},
       where $\subT{T_2}{T_{3}}$ but $\NotSubT{T_2}{T_{3}}$.)
\end{enumerate}
Points 1-3 say that ``$\SubTsym$'' is a partial order, 
just as ``$\subTsym$'' is, and
point 4 says that this partial order can be embedded in the
partial order of ``$\subTsym$''.
\end{fact}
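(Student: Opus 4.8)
The whole argument hinges on item~4, so the plan is to prove it first and then read off the rest from polytope inclusions. For item~4, suppose $\SubT{T}{U}$, which by Definition~\ref{def:strong-subtyping} says $U$ is both input-safe and output-safe for $T$. Let $g$ be any IO assignment satisfying $U$, i.e. $g\in\poly{U}$, and let $f_{\text{in}}$ be the restriction of $g$ to the input arcs $\aaa_{\text{in}}$. Then $f_{\text{in}}$ satisfies $\rest{U}{\power{\aaa_{\text{in}}}}$, because the constraints of $\rest{U}{\power{\aaa_{\text{in}}}}$ are exactly the input-only constraints of $U$ and $g$ satisfies all of them. Applying input-safety of $U$ for $T$ to $f_{\text{in}}$ and its extension $g$ gives ``$g$ satisfies $T$ iff $g$ satisfies $U$'', so $g\in\poly{T}$. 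Hence $\poly{U}\subseteq\poly{T}$, i.e. $\subT{T}{U}$. That the converse fails is exactly the content of Example~\ref{ex:angelic-vs-demonic}: $\subT{T_2}{T_3}$ holds for the typings of Examples~\ref{ex:simplest-2} and~\ref{ex:simplest-3}, while the demonic behaviour of $\N_2$ exhibited there shows $\NotSubT{T_2}{T_3}$.

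Item~1 (reflexivity) is immediate, since in Definition~\ref{def:strong-subtyping} the biconditional ``$g$ satisfies $T$ $\Leftrightarrow$ $g$ satisfies $T$'' holds vacuously, so $T$ is input- and output-safe for itself. For item~2 (anti-symmetry), $\SubT{T}{U}$ and $\SubT{U}{T}$ yield via item~4 both $\subT{T}{U}$ and $\subT{U}{T}$, that is $\poly{T}=\poly{U}$; since $T$ and $U$ are principal --- in particular tight and locally total --- each is determined by its polytope ($T(A)=\Set{\theta_A(h)\mid h\in\poly{T}}$ for every $A$ on which it is defined, where $\theta_A(h):=\sum h(A\cap\aaa_{\text{in}})-\sum h(A\cap\aaa_{\text{out}})$; this is the reasoning used in the proof of Theorem~\ref{fact:existence-and-uniqueness} together with Lemma~\ref{lem:necessity}), so $T=U$.

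The one step with real content is item~3 (transitivity). I would first isolate a monotonicity lemma: if $\poly{X}\subseteq\poly{Y}$ for tight locally total typings $X,Y$ over $\aaa_{\text{in,out}}$, then $X(A)=\Set{\theta_A(h)\mid h\in\poly{X}}\subseteq\Set{\theta_A(h)\mid h\in\poly{Y}}=Y(A)$ for every $A$, whence the polytope cut out by $\rest{X}{\power{\aaa_{\text{in}}}}$ is contained in the one cut out by $\rest{Y}{\power{\aaa_{\text{in}}}}$, and likewise for $\aaa_{\text{out}}$. Now assume $\SubT{S}{T}$ and $\SubT{T}{U}$; by item~4, $\poly{S}\supseteq\poly{T}\supseteq\poly{U}$. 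To see that $U$ is input-safe for $S$, take $f_{\text{in}}$ satisfying $\rest{U}{\power{\aaa_{\text{in}}}}$ and an extension $g$. If $g\in\poly{U}$ then $g\in\poly{S}$ by the inclusion. Conversely, if $g\in\poly{S}$: the monotonicity lemma makes $f_{\text{in}}$ satisfy $\rest{T}{\power{\aaa_{\text{in}}}}$ as well, so input-safety of $T$ for $S$ forces $g\in\poly{T}$, and then input-safety of $U$ for $T$ (again with this $f_{\text{in}}$, which satisfies $\rest{U}{\power{\aaa_{\text{in}}}}$) forces $g\in\poly{U}$. So ``$g$ satisfies $U$ iff $g$ satisfies $S$''; the output-safe half is the mirror image with $\aaa_{\text{out}}$ in place of $\aaa_{\text{in}}$. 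Hence $\SubT{S}{U}$.

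The main obstacle is not any one computation but making sure the monotonicity lemma is clean, since that is precisely where tightness of principal typings is used --- without it, the chaining through $T$ in the transitivity proof, and the determinacy step in anti-symmetry, both collapse. A secondary point worth a sentence is the standing assumption that the networks are connected (so their principal typings are total): for disconnected similar networks the component structures, hence the domains of the restricted typings, need not coincide, and one would then either restrict items~2--3 to connected networks or argue component by component.
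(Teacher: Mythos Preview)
The paper explicitly states this Fact \emph{without proof} (``We state without proof some simple properties of `$\SubTsym$'\,''), so there is no argument in the paper to compare against. Your proof is therefore being assessed on its own merits, and it is essentially correct: the reduction of item~4 to input-safety applied to the restriction of an arbitrary $g\in\poly{U}$ is clean, items~1 and~2 follow as you say, and your transitivity argument is sound --- the monotonicity step (tightness gives $X(A)=\{\theta_A(h):h\in\poly{X}\}$, hence $\poly{U}\subseteq\poly{T}$ forces $U(A)\subseteq T(A)$, so any $f_{\text{in}}$ satisfying $\rest{U}{\power{\aaa_{\text{in}}}}$ also satisfies $\rest{T}{\power{\aaa_{\text{in}}}}$) is exactly what is needed to chain the two input-safety hypotheses.

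One small point of presentation: in your item~4 argument you only use input-safety, never output-safety, which is fine but worth saying explicitly since it shows that either half of the strong-subtyping definition already implies ordinary subtyping. Your closing caveat about connectedness is well taken and is the only place where the argument is not fully watertight as stated: for disconnected similar networks the domains of the principal typings (being locally total rather than total) depend on the component structures of the respective networks, and equality of polytopes does not obviously force equality of those domains. Restricting to connected networks, or adding a line arguing that $\poly{T}=\poly{U}$ forces the same product decomposition and hence the same domain, would close this.
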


\Hide{
\begin{problem}
\label{prob:stong-subtyping}
  Extend our typing theory to account for 
  strong subtyping ``$\SubTsym$''.
  The set $\Valid{\aaa_{\text{in,out}}}$ of all tight, total, and
  valid, typings over $\aaa_{\text{in,out}}$, has the structure of a
  distributive lattice, with ``$\subTsym$'' as a partial order
  (directed downward), with a top element 
  $\TOP{\aaa_{\text{\rm in,out}}}$, and a bottom element 
  $\BOT{\aaa_{\text{\rm in,out}}}$, as shown in 
  Section~\ref{sect:operations}. Examine the way in which the
  partial order ``$\SubTsym$'' is embedded in this lattice.
  In particular:
\begin{itemize}
\item In analogy with the operators $\vee$ and $\wedge$ in
      Section~\ref{sect:operations}, define a least upper bound
      operator $\doublevee$, and a greatest lower bound operator
      $\doublewedge$, that will produce a sublattice
      $\Valid{\aaa_{\text{in,out}}}$ under the partial order 
      ``$\SubTsym$''. 
\item Design efficient algorithms for such operators $\doublevee$
      and $\doublewedge$.
\item Design an efficient algorithm to test, given arbitrary
      $T, U \in \Valid{\aaa_{\text{in,out}}}$, whether $\SubT{T}{U}$.
\end{itemize}
Observe that the counterparts of these algorithms relative to
``$\subTsym$'' are simple and efficient. For example, to decide
whether $\subT{T}{U}$ is just a test for interval inclusion, given
that $T$ and $U$ are tight, total, and valid typings: 
$\subT{T}{U}$ iff $T(A)\supseteq U(A)$
for every $A\subseteq\aaa_{\text{in,out}}$.
\hfill\QED
\end{problem}
}

\Hide{
\clearpage

\section{Garbage}
\input{planar-networks}
}

\end{document}